\documentclass[11pt]{article}
\usepackage{geometry}
\geometry{top=25mm, bottom=25mm, left=50mm} 
\usepackage{lineno}
\usepackage{amssymb}
\usepackage[bottom]{footmisc}
\usepackage{algorithm, algpseudocode}
\usepackage{algcompatible}
\usepackage{amsmath}
\usepackage{amsthm}
\usepackage{epsfig}
\usepackage{booktabs}
\usepackage{graphicx}
\usepackage{graphics}
\usepackage{float}
\usepackage{multirow}
\usepackage{color}
\usepackage{caption}
\usepackage{subcaption}
\usepackage{fullpage}
\usepackage{tikz-qtree}
\usetikzlibrary{positioning}
\usepackage[normalem]{ulem} 
\usepackage{makeidx}
\usepackage{xspace}
\usepackage{wrapfig}
\usepackage{tikz}
\usepackage{hyperref}
\usepackage{colortbl}
\usepackage{upgreek}
\hypersetup{
colorlinks=true,
citecolor=blue,
linkcolor=blue,
filecolor=blue,      
urlcolor=blue,
}
\usepackage{setspace}
\usepackage[color=orange!20]{todonotes} \presetkeys{todonotes}{inline}{}


\usepackage{natbib}
\bibliographystyle{dcu}

\definecolor{ForestGreen}{RGB}{34,139,34}

\newtheorem{proposition}{Proposition}
\newtheorem{assumption}{Assumption}

\makeatletter
\newtheorem*{rep@theorem}{\rep@title}
\newcommand{\newreptheorem}[2]{%
\newenvironment{rep#1}[1]{%
 \def\rep@title{#2 \ref{##1}}%
 \begin{rep@theorem}}%
 {\end{rep@theorem}}}
\makeatother

\newreptheorem{proposition}{Proposition}
\newcommand{\Nik}{\mathcal{N}_{ik}}

\newcommand{\dik}{{N}_{ik}}
\newcommand{\vW}{\mathbf{W}}
\newcommand{\vw}{\mathbf{w}}

\newcommand{\vX}{\mathbf{X}}
\newcommand{\vXik}{\mathbf{X}_{ik}}
\newcommand{\vx}{\mathbf{x}}
\newcommand{\Yik}{{Y}_{ik}}
\newcommand{\Yjk}{{Y}_{jk}}
\newcommand{\Wik}{{W}_{ik}}
\newcommand{\Gik}{{G}_{ik}}
\newcommand{\Wjk}{{W}_{jk}}
\newcommand{\Gjk}{{G}_{jk}}

\newcommand{\E}{\mathbb{E}}
\newcommand{\Var}{\mathbb{V}}

\usepackage{dsfont}
\newcommand{\I}{\mathds{1}}

\usepackage[affil-it]{authblk}



\geometry{a4paper,
 total={172mm,246mm},
 left=20mm,
 top=20mm}
 \makeatletter
\sloppy
\let\@fnsymbol\@alph
\makeatother

\newcommand{\review}[1]{{\color{black} #1}}
\newcommand{\movedtext}[1]{{\color{black} #1}}

\begin{document}

\begin{titlepage}
\title{Heterogeneous Treatment and Spillover Effects \\ Under Clustered Network Interference
\footnotetext{We are grateful for conversations with Julie Josse, Fabrizia Mealli, Karim Lounici, Michael Lechner, and Hyun Min Kang and for comments to participants at the ``IMS International Conference on Statistics and Data Science'' (ICSDS), at the ``American Causal Inference Conference'' (ACIC), at the ``Causal Inference Workshop'' at the SAMSI Institute, at the ``Causal Machine Learning Workshop" organized by the University of St.Gallen, at the reading groups on ``Causal Inference and Machine Learning" at Harvard University and on ``Machine Learning and Networks in Economics" at IMT School for Advanced Studies Lucca, and to members of the \textit{Missing Values and Causality Research Group} at the \'{E}cole Polytechnique. \\ Bargagli-Stoffi and Tortù are equally contributing co-first authors. \\ Corresponding author: laura.forastiere@yale.edu } }
\author[1]{Falco J. Bargagli-Stoffi}
\author[2]{Costanza Tort\'u}
\author[3]{Laura Forastiere}
\affil[1]{Harvard University}
\affil[2]{Sant'Anna School for Advanced Studies}
\affil[3]{Yale University}
\date{}

\maketitle
\vspace{-1cm}
\begin{abstract}
The bulk of causal inference studies rule out the presence of interference between units. However, in many real-world scenarios, units are interconnected by social, physical, or virtual ties, and the effect of the treatment can spill from one unit to other connected individuals in the network. In this paper, we develop a machine learning method that uses tree-based algorithms and a Horvitz-Thompson estimator to assess the heterogeneity of treatment and spillover effects with respect to individual, neighborhood, and network characteristics in the context of \review{clustered networks and neighborhood interference within clusters.} The proposed Network Causal Tree (NCT) algorithm has several advantages. \review{First, it allows the investigation of the treatment effect heterogeneity, avoiding potential bias due to the presence of interference.} Second, understanding the heterogeneity of both treatment and spillover effects can guide policy-makers in scaling up interventions, designing targeting strategies, and increasing cost-effectiveness. \review{We investigate the performance of our NCT method using a Monte Carlo simulation study, and we illustrate its application to assess the heterogeneous effects of information sessions on the uptake of a new weather insurance policy in rural China.}
\noindent \\
\vspace{0.5cm}\\
\noindent\textbf{Keywords:} causal inference; interpretable machine learning; interference; social networks; heterogeneous effects\\

\bigskip
\end{abstract}
\setcounter{page}{0}
\thispagestyle{empty}
\end{titlepage}
\pagebreak \newpage

\maketitle

\linespread{1}\selectfont
\section{Introduction} \label{sec: intro}

\subsection{Motivation}

According to \cite{cox1958planning}, there is \textit{interference} between different units if the outcome of one unit is affected by the treatment assignment of other units. In the case of policy interventions or socio-economic programs, interference may arise due to social, physical, or virtual interactions. For instance, in the case of infectious diseases, unprotected individuals can still benefit from public health measures undertaken by the rest of the population, such as vaccinations or preventive behaviors, because these interventions reduce the reservoir of infection \citep{bridges2000effectiveness}, the vector of transmission \citep{howard2000evidence} and the number of susceptible individuals \citep{kissler2020projecting}. In the labor market, job placement assistance can affect job seekers using this service, but it can also have an effect on other job seekers who are competing in the same labor market \citep{mckenzie2015direct}. In education, learning programs may spill over to untreated peers through knowledge transmission paths \citep{forastiere2019museums}. In marketing, the exposure to an advertisement might directly affect the consuming behavior of the exposed individuals and indirectly affect other individuals that are influenced by the consumer choices of people in their social network \citep{parshakov2020spillover}. If the exposure to the advertisement takes place in social media, the spillover effect might be also explained by the propagation of the advertisement to non-exposed users that are virtually connected to the targeted ones \citep{chae2017spillover}. In the presence of interference, the effect of the treatment status of other units on one's outcome is usually referred to as \textit{spillover effect}. 

Spillovers are a crucial component in understanding the full impact of an intervention at the population level. In fact, the scale-up phase of a policy requires knowledge about the mechanism of spillover, and how this would affect the population once the intervention is rolled out. Information about spillovers of public policies can also support decisions about how best to deliver interventions and can be used to guide public funds allocation. Indeed, the presence of beneficial spillover effects allows for treating a lower percentage of the population, because the untreated individuals would still benefit from the treatment provided to the targeted sample. The use of spillover effects to save resources could be further improved if we were able to target those individuals who would benefit the most from indirect exposure to the intervention. 

A targeting (or seeding) strategy aims at delivering the intervention to certain individuals such that the impact on the overall population is maximized \citep[see, e.g.,][]{Valente2012,kim2015,networkmultipliers2019,targeting2020}. Typically, seeding strategies are designed in settings where either an element of the intervention (e.g., information, flyers, or coupons provided during the intervention) or the outcome (e.g., the adoption of a behavior or a product) diffuses through the network. In these settings, the goal is the identification of the set of nodes in the network that, if targeted, would maximize contagion or diffusion cascades. To do so, seeding strategies are designed using the information on the network structure and the dynamics of contagion or diffusion. This \textit{influence maximization} problem is NP-hard and computer scientists have developed approximate algorithms that usually rely on simplified contagion processes \citep{kempe2003}. Indeed, it is typically assumed that susceptibility to direct treatment and to others, as well as the influential power, are homogeneous across individuals. 

Here we take a different perspective. First, we investigate the spillover effects of a unit's treatment on other units' outcomes without specifying the mechanism through which this might take place. Second, we focus on the assessment of the heterogeneity of susceptibility to direct and indirect treatment. Understanding these heterogeneities can guide the design of targeting strategies aimed at making the interventions more cost-effective. When spillovers are not present, this can be achieved by targeting those with the highest treatment effect. In fact, in the field of personalized medicine, it is well known that individuals with different characteristics might respond differently to the treatment \citep[see, e.g.,][]{murphy2003,murphy2014,laber2019,kent2020predictive}. In the presence of interference, we also have that different people might be more or less susceptible to the treatment received by other units. This means that not only the treatment effect but also spillover effects are heterogeneous. An assessment of the heterogeneity of treatment and spillover effect is crucial not only for designing a cost-effective roll-out of the intervention within the targeted population, but it can also allow a generalization of its impact to other populations.

\subsection{Related Work}

Recently, the causal inference literature has seen a growing interest in the mechanism of interference, leading to (i) the assessment of bias for causal effects estimated under the no-interference assumption \citep{forastiere2016identification, sobel2006}, (ii) the design of experiments to either avoid or assess interference \citep{viviano2020experimental, angelucci2015program,baird2018optimal,
kang2016peer}, and (iii) the estimation of causal effects under interference. Estimators for treatment and spillover effects have first been developed under the assumption of partial interference, allowing interference within groups but not across different groups \citep{hudgens2008toward, tchetgen2012causal, liu2014large, Liu2016, forastiere2016clusters, basse2018analyzing, forastiere2019museums}. However, the assumption of group-level interference is often invalid, too broad, or not applicable. Hence, several works focus on the estimation of causal effects in the context of units interconnected on networks, both in randomized experiments \citep{aronow2017estimating, AtheyEcklesImbens2018, leung2020} and in observational studies \citep{ogburn2022causal, Sofrygin2017, forastiere2016identification, 
forastiere2018estimating, Tortu2023, Lee2023}. In the context of social networks, even in randomized experiments where the treatment is randomized at the unit level, exposure to other units' treatment is not and depends on the network structure.

In parallel to this field of research on interference, in recent years, thanks to the availability of increased computing power and large data sets, researchers have started to think about advanced data-driven methods to assess the heterogeneity of treatment effects with respect to large numbers of features. In this regard, there has been a large number of contributions in causal inference on subgroup analysis to investigate heterogeneous effect \citep[e.g.,][]{assmann2000subgroup, cook2004subgroup, rothwell2005subgroup, su2009subgroup, varadhan2013estimation, ratkovic2017sparse}. However, the standard methods for subgroup analysis have several drawbacks. In particular, (1) they strongly rely on the subjective decisions on the specific variables defining the heterogeneous sub-populations; (2) they fail to discover heterogeneities other than the ones that are \textit{a priori} defined by the researchers. In addition, a  data-driven method avoids potential problems related to cherry-picking the subgroups with extremely high/low treatment effects \citep{assmann2000subgroup, cook2004subgroup}. Hence, many data-driven algorithms for the estimation of heterogeneous causal effects have been proposed in recent years \citep{hill2011bayesian, foster2011subgroup, su2012facilitating, athey2016recursive,  wager2018estimation, athey2019generalized, lechner2018modified, hahn2020bayesian}.\footnote{For a review of these methods the reader can refer to \cite{athey2019machine} and \cite{dominici2021controlled}.} 

The underlying aim of these methods is to detect `causal' rules defining subsets of the study population where the treatment effect for that subgroup deviates from the average treatment effect. This is done by selecting the most important features and their values that define a partition of the covariate space (the tree) where the treatment effect is `significantly' heterogeneous.
Among these algorithms, many rely on extensions of the standard Classification and Regression Trees (CART) algorithm \citep{friedman1984classification} and Random Forest (RF) algorithm \citep{breiman2001random} and are adapted to different settings \citep{athey2016recursive, zhang2017mining, lee2018discovering, bargagli2018estimating, guber2018instrument, johnson2019detecting, bargagli2019causal, bargagli2022heterogeneous}. In particular, in their seminal contribution \cite{athey2016recursive} develop the honest causal tree (HCT) methodology. HCT is a causal decision tree algorithm that is aimed at discovering the heterogeneity in causal effects through a single binary tree. HCT is constructed using a criterion function aimed at maximizing the heterogeneity in causal effects at each split while penalizing splits with higher variance in the estimated conditional effects.\footnote{The criterion function is the main difference between HCT and CART. Indeed, CART's criterion function is aimed at minimizing the empirical predictive error at each split.} In addition to a modification of the criterion function, \cite{athey2016recursive} introduce the concept of \textit{honest splitting}. The authors propose to split the overall learning sample into a \textit{discovery} (or training) set and an \textit{estimation} set. The former is used to discover the heterogeneity in treatment effects, while the latter is used to estimate these effects in the discovered sub-populations. The different role of the two sets avoids potential spurious heterogeneity discovery due to overfitting of the learning sample. Building on the HCT, \cite{wager2018estimation} and \cite{athey2019generalized} introduce the causal forest (CF) and generalized random forest (GRF). Both CF and GRF are ensemble methods where multiple trees are built and combined to improve inference on the heterogeneous causal effects.

HCT and similar tree-based methodologies have already been applied to various scenarios for the discovery of heterogeneous effects of air pollution \citep{lee2018discovering}, employment incentives \citep{bargagli2019causal}, job training programs \citep{cockx2019priority}, development finance projects \citep{zhao2017quantifying}, cardiovascular surgeries \citep{wang2017personalized}, cancer treatments \citep{zhang2017mining}, and health insurance \citep{johnson2019detecting}. The wide usage of tree-based algorithms is due, in particular, to their ability to deal with non-parametric settings in an efficient and interpretable way. Indeed, these algorithms do not assume any specific shape of the treatment effect function. HCT and similar methodologies built on CART have been widely employed because of various attractive features: i.e., (i) they can deal with a large number of variables that are potentially responsible for the heterogeneity; (ii) they are simple to understand and visualize, easy to interpret, computationally scalable; and (iii) they can deal with non-linear relationships in the covariates without the need of data pre-processing.

Nevertheless, tree-based methods for the estimation of heterogeneous causal effects have been developed ruling out the presence of spillover effects by assuming no interference between units.  On the other hand, the growing literature on spillover effects has focused on the estimation of population average spillover effects, neglecting potential heterogeneous spillover effects.

There have been few articles dealing with different types of heterogeneity in spillover effects. \cite{forastiere2016clusters, forastiere2019museums} estimated the heterogeneity of spillover effects with respect to principal strata defined by the compliance behaviors in response to a cluster randomized treatment. However, the latent nature of these strata makes it difficult to effectively use the detected heterogeneity to design targeting strategies or to generalize the conclusions to a different population. Observed heterogeneity is instead studied in \cite{arduini2020identification} and \cite{arduini2019treatment} where the focus is on the heterogeneity of peer effects from other units' outcomes across two specific groups, and the estimation relies on linear-in-means models and two-stage least squares. To the best of our knowledge, there are no studies dealing with the heterogeneity of spillover effects on networks.

\subsection{Contributions}

In this paper, we bridge the gap between the aforementioned two bodies of causal inference literature by proposing a new algorithm for the discovery and estimation of heterogeneous treatment and spillover effects with respect to a large number of characteristics, including individual and network features. Our method is designed for randomized experiments affected by the presence of clustered network interference, that is, units are organized in a clustered structure, with no interactions between clusters and a network of connections within clusters (e.g., friendship networks within schools). In addition, randomization is assumed at the individual level, resulting in treated and untreated units in the same cluster. Under this setting, spillover effects are confined within clusters and are assumed to take place on network interactions.
\review{Our research tackles this gap in the literature by introducing a novel methodology capable of simultaneously discovering and estimating heterogeneity in both treatment and spillover effects. Through the introduction of a novel composite splitting function, our approach facilitates a comprehensive exploration of heterogeneity and enables the simultaneous identification of regions in the covariate space where there is heterogeneity in both treatment and spillover effects, informing policy-makers on sub-populations that can be more or less prioritized due to their response to their own treatment or to the treatment of their social ties. As far as we are aware, no prior study has put forth such a comprehensive framework.}

Our proposed method, \textit{network causal tree} (NCT), builds upon the \textit{causal trees} proposed by \cite{athey2016recursive}, by modifying the splitting criterion to target treatment and spillover effects under interference.  Splits are made so as to maximize the heterogeneity of the targeted causal effect(s), treatment, and/or spillover effects, across the population.  This criterion relies on the unbiasedness of the estimator of the effect(s) within each subset of the population. We first contribute to the existing literature on interference by developing an unbiased estimator for conditional treatment and spillover effects. We extend the Horvitz-Thompson estimator in \cite{aronow2017estimating} to conditional causal effects under clustered network interference and we prove its consistency under the clustered network setting. This estimator is then used in our NCT algorithm to choose the binary splits that maximize the heterogeneity and to finally estimate the heterogeneous causal effects within the selected sub-populations. \review{Our work further extends the theoretical properties of the Horvitz-Thompson developed by \cite{aronow2017estimating} to encompass conditional leaf-specific effects in a clustered network setting. This extension constitutes a significant contribution, as it enables the consideration of novel network structures.}

In order to use the selected partition of the covariate space and the estimated treatment and spillover effects to guide policies, the heterogeneous sub-populations should be identified based on the causal effects that will be part of the decision rule. For instance, a policy that assigns the treatment to those who benefit the most from it requires the discovery of the subsets of the populations with heterogeneous treatment effects. Alternatively, a policy that is designed to target those who will respond to both their own treatment and the neighbors' treatment must identify sub-populations with high treatment and spillover effects. Hence, the proposed NCT methodology is optimized to detect the heterogeneity in treatment and spillover effects (i) either \textit{simultaneously} or (ii) \textit{separately}. Indeed, by reworking the criterion function of the seminal causal trees algorithm to account for interference, we allow the algorithm to detect heterogeneity in treatment and/or spillover effects. 

On the one side, the discovery of the causal rules (the variables and their values defining heterogeneous sub-populations) representing the heterogeneity of one causal effect (treatment or spillover) is achieved by using a splitting criterion that maximizes the heterogeneity of that specific causal effect across the partition. On the other side, if our goal is to identify a partition of the covariate space that can explain the heterogeneity of multiple causal effects, we propose the use of a composite splitting function that is designed to simultaneously maximize the heterogeneity in all the effects. This flexibility allows scholars and policy-makers to customize their investigations depending on their targeting goals. For instance, if a policy-maker wants to target individuals with the highest treatment effect and the lowest spillover effect (with the motivation that those with higher spillover effects can benefit from the treatment received by others), the NCT algorithm should be implemented with a composite splitting function aimed at detecting subsets of the population where both treatment and spillover effects are heterogeneous and the decision criterion can be applied. Conversely, if a targeting strategy is designed to target just individuals who would benefit the most from receiving the treatment, regardless of other people's assignment, a tree would be built using a single splitting criterion targeted to maximize the treatment effect heterogeneity. Similarly, a single criterion targeted to a spillover effect would be used in the case of targeting rules only involving that spillover effect.

It is important to note that the use of our algorithm to design implementation strategies is possible thanks to its high level of interpretability. NCT provides interpretable inference on heterogeneous treatment and spillover effects by discovering a set of causal rules that can be represented through a binary tree. As argued by \cite{lee2018discovering} and \cite{lee2020causal}, it is important to provide interpretable information on simple causal rules that can be targeted to improve policy effectiveness and to ensure that stakeholders and policy-makers understand (and, in turn, trust) the functionality of these models. \cite{valdes2016mediboost} claim that a learning algorithm is interpretable if one can explain its classification by a conjunction of conditional statements (i.e., if-then rules). In this regard, tree-based algorithms based on if-then rules, such as the proposed NCT, are optimal for interpretability.

To assess the performance of the proposed NCT algorithm, we run a series of Monte Carlo simulations. In particular, we investigate the performance of the proposed algorithm with respect to two dimensions: its ability (i) to correctly identify the actual heterogeneous sub-populations and, (ii) to precisely estimate the conditional treatment and spillover effects. While the latter performance assessment is quite standard in the literature, the former is critical for interpretable algorithms for heterogeneous causal effects \citep{bargagli2022heterogeneous}. Finally, we apply the proposed NCT algorithm to a randomized experiment conducted in China to assess the impact of information sessions on the purchase of a new weather insurance policy \citep{cai2015social}. Besides estimating the population average treatment and spillover effects (as already investigated in \cite{cai2015social}), our aim is to detect the strata of the population where one or both effects are heterogeneous and estimate these effects within these strata. \review{The proposed NCT is implemented in the \texttt{NetworkCausalTree} open-source \texttt{R} package, which can be found at \href{https://github.com/fbargaglistoffi/NetworkCausalTree}{\url{https://github.com/fbargaglistoffi/NetworkCausalTree}}.}

The remainder of the paper is organized as follows. 
\review{In Section \ref{sec:motivating_application} we introduce the motivating application of the paper and the empirical setting.} In Section \ref{sec:notation} we introduce the notation, setting, and assumptions that we employ throughout the paper.  In Section \ref{sec:cace} we define the conditional causal effects in a general partition of the covariate space and develop a Horvitz-Thompson estimator. Section \ref{sec: ctrees} presents the proposed NCT algorithm, which is based on effect-specif or composite splitting functions for causal effects under interference. We then conduct a simulation study to assess the performance of the algorithm and estimator under different scenarios in Section \ref{sec:simulations} and we illustrate the application of the network causal tree on a randomized experiment in Section \ref{sec:application}. Section \ref{sec:conclusions} concludes the paper with a discussion of the proposed algorithm and directions for further research. 

\section{Motivating Application}\label{sec:motivating_application}
\review{
\subsection{Agricultural Insurance Policies against Extreme Weather Events}}

\review{In 2021 alone, worldwide, there were 432 disasters related to extreme weather events that killed 10,492 people, affected 102 million others, and incurred nearly US\$ 252 billion in damages \citep{crunch2022}. Climate-related disasters are expected to increase in frequency and severity in the future due to global warming \citep{ebi2021extreme}, posing an increasing burden on vulnerable communities \citep{lal2011socio, rogers2015resettlement, huq2015climate}.\footnote{The number of disasters, affected people and costs have steadily gone up in the last few years---i.e., disaster-related costs have surged by US\$ 122 in a two years span between 2019 and 2021 \citep{crunch2020}.} Asia is often the most severely impacted continent: in the same year, it suffered 40\% of all world's disasters and accounted for 49\% of the total number of deaths and 66\% of the total number of people affected \citep{crunch2022}. Among Asian countries, China is particularly exposed to weather hazards \citep{zhao2020extreme}. Within China, agricultural and rural communities have suffered the highest costs: in the past decades, weather hazards and disasters have affected about a quarter to one-third of arable land in China \citep{liu2010analysis}. 

Against this backdrop, agricultural insurance policies play a key role in risk mitigation strategies that can reduce agricultural production risks and provide economic support to farmers \citep{barnett2007weather}. The centrality of agricultural insurance policies has been highlighted by the enactment of individual and institutional level weather insurance policies in several countries in the last two decades \citep{collier2009weather}. In 2012, the Chinese government made an explicit proposal to expand agricultural insurance policies and extend their coverage in rural China \citep{ye2017farmers, jin2016farmers}. 

As governments have taken action to enhance weather insurance uptake and effectiveness in rural areas, the importance of understanding the factors and mechanisms influencing farmers' decisions on purchasing weather insurance has become critical. Recent studies have looked at either (i) the determinants behind farmers' insurance purchase decisions \citep{gine2008patterns, gaurav2011randomized, cole2013barriers, cai2017disaster, cai2015social, sibiko2020weather, dercon2014offering}, or at (ii) the effectiveness of training sessions on insurance uptake in a broad spectrum of rural environments and countries \citep{sibiko2020weather, dercon2014offering}. Connected to the latter literature, \cite{cai2015social} has also investigated the spillover effect of providing a training session to some individuals on the insurance uptake of their social ties through peer influence and the spread of information in China. However, none of these studies has produced a comprehensive, data-driven identification of the determinants of the heterogeneous effectiveness both at the individual level for those receiving the intervention and at the community level for those exposed to the information through spillover from their peers. Nonetheless, this task is crucial as it enables a deeper understanding of the policies and provides room for targeted interventions aimed at maximizing their cost-effectiveness.

The methodology proposed in this paper addresses this shortcoming.
By identifying the variables driving the heterogeneity of both the direct and spillover effects and estimating these effects for each subgroup of individuals with heterogeneous effects, our NCT will be able to identify those who are more likely to respond to training sessions and purchase weather insurance policies as a result of participating in these sessions, as well as to those who are more likely to respond to the influence and information provided by their friends who have attended the training sessions.
}

\review{
\subsection{Empirical Setting: Randomized Experiment in Rural China}

In order to tackle these research questions, we use data from a randomized experiment designed to assess the impact of intensive information sessions to promote the uptake of a new weather insurance policy among rice farmers in rural China \cite{cai2015social}. The promoted policy is designed to protect farmers from extreme weather events that would leave them without an income for potentially long periods of time.

The experiment, conducted in rural villages in the Jiangxi province (located in south-east China), had a factorial design with three factors: i) intensive vs simple information sessions, ii) time of attendance (round 1 or 2), and iii) additional information provided on previous purchase decisions of other village members.\footnote{More details about the randomization design can be found in \cite{cai2015social} and in Section \ref{sec:application}.} 
In the main analysis, we focus on the effects of the intensive information session at round 1, including the direct effect on those who receive it and the spillover effect of having friends who attended the intensive session at round 1. At baseline, rice farmers participating in the study were also asked to identify up to five of their friends among other participants. This network collection gives rise to a binary and directed network in each village, forming a \emph{clustered network}.\footnote{\citep{cai2015social} do not explicitly exclude friendship links between households living in different villages (households are asked to declare their friends living either in their same village or in a different village). However, for the purpose of our analysis, we exclude the few links among households in different villages (these links represent the $0.0007\%$ of the whole possible connections).} 

In this setting, interference is likely to arise. Indeed, households receiving intensive information sessions on the insurance policy may share what they have learned with untreated households in their friendship network, indirectly encouraging them to adopt the policy. Consequently, untreated households might benefit from these sessions through interactions with treated households in the same village. Similarly, households receiving the intensive information session may be more responsive to it if their social ties have received the same information.
}

\section{Clustered Network Interference and Unit-Level Randomization} 
\label{sec:notation}

\label{sec: Eff.interference}
\subsection{Notation and Setting} \label{subsec: RCM.int}

Let us consider a sample $V$ of $N$ units organized in $K$ separate clusters. Let $k\in \mathcal{K}=[1, \ldots, K]$ be the cluster indicator, and let $i=1,\ldots, n_k$ be the unit indicator in each cluster $k$. Let us now consider a connection structure such that units belonging to the same cluster might share a link whereas units belonging to different clusters are not connected. This network structure is represented by the graph $G=(V, \mathcal{E})$, where $V$ defines the set of nodes and $\mathcal{E}$ defines the set of edges, that is, the collection of links between each connected pair of nodes. A clustered network $G$ is in turn an ensemble of $K$ disjoint sub-graphs: $G_{k}=(V_k, \mathcal{E}_k), \; \; k=1, \dots, K$. The adjacency matrix $\boldsymbol{A}$ corresponding to graph $G$ is a block-diagonal matrix with $K$ blocks, $\boldsymbol{A}_k, \; \; k:1, \dots, K$, where each element $a_{ij, k}$ is equal to 1 if there is a link between unit $i$ and unit $j$ in cluster $k$, that is, if the edge $\epsilon_{ij,k}\in \mathcal{E}_k$. Elements in $\boldsymbol{A}$ off the $K$ blocks are equal to zero, indicating no links between units belonging to different clusters. \review{In our weather insurance application, the sub-graph $G_{k}$ represents the friendship network among farmers in village $k$ (see Figure \ref{fig: villages}).}

Let $\Wik \in \{0,1\}$ be a binary variable representing the treatment assigned to unit $i$ in cluster $k$ and let $Y_{ik}$ be the observed outcome. We denote by $\mathbf{W}_k$ and $\mathbf{Y}_k$ the treatment and outcome vectors in each cluster $k$. Similarly, $\mathbf{W}$ and $\mathbf{Y}$ denote the treatment and outcome vectors in the whole sample. \review{In our empirical application, 
the outcome $\Yik$ is the indicator for the insurance uptake by household $i$ in village $k$. Given the factorial design, the definition of the binary treatment variable $\Wik$ depends on the factorial combination of interest. In the main analysis, presented in Section \ref{sec:application}, the treatment variable $\Wik$ equals 1 if household $i$ in village $k$ received an intensive information session at round 1 and 0 otherwise.}
Moreover, for each unit $ik$ we observe a vector $\textbf{X}_{ik}$ of $P$ covariates (or pre-treatment variables) that are not influenced by the treatment assignment. The vector of covariates might include individual characteristics (e.g., age, education, sex, socio-economic status, ...), cluster-level characteristics (e.g., cluster size, location, ...), as well as network characteristics representing aggregated individual characteristics (e.g., average age or proportion of males and females, ...) or the network topology (e.g., degree, centrality, transitivity, ...).

Figure \ref{fig:clustinterference} provides a graphical intuition on the clustered network structure and treatment assignments at the unit level. Edges indicate links between units, within each cluster. Colors refer to the individual treatment assignment: grey-colored nodes represent treated units, while white-colored vertices indicate units assigned to the control group.

\begin{figure}[t]
\centering
		\includegraphics[width=70mm, height=70mm]{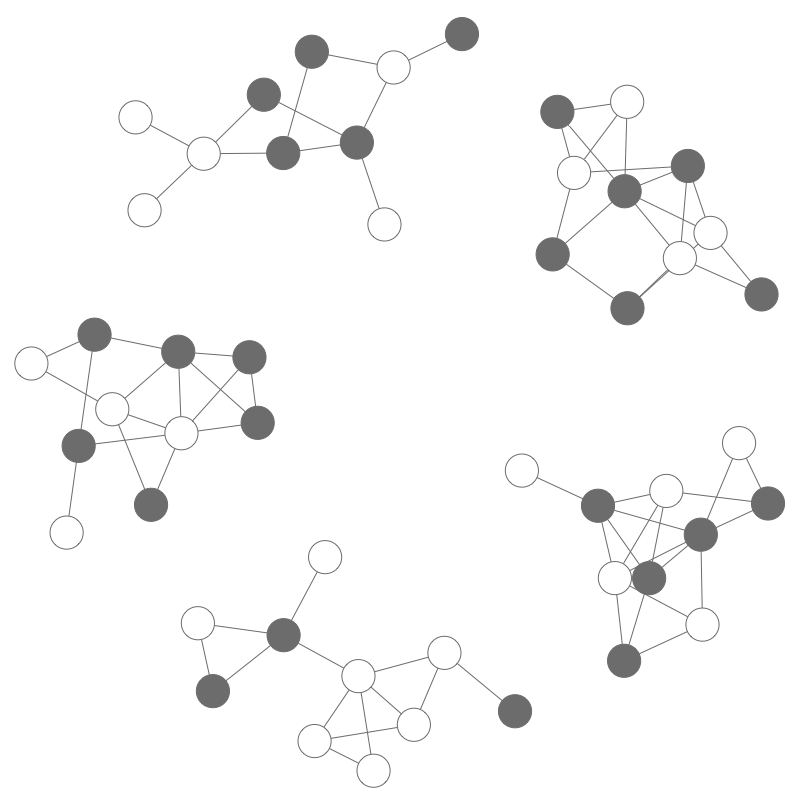}
				\caption{Clustered Network Structure}
		\label{fig:clustinterference}
\end{figure}
\subsection{Clustered Network Interference} \label{subsec: clustint}

Following the potential outcome framework \citep{rubin1974estimating,holland1986statistics}, we denote by $Y_{ik}(\mathbf{w})$ the potential outcome that unit $i$ in cluster $k$ would experience if the treatment vector $\mathbf{W}$ in the whole sample were $\mathbf{w}$, with $\mathbf{w} \in \{0,1\}^N$. Under the assumption of \textit{no interference}, the potential outcome could be indexed only by the individual treatment assignment $W_{ik}$, that is, $Y_{ik}(W_{ik}=w)$. In combination with the assumption of \textit{consistency} (see Assumption \ref{ass:consistency} below), this assumption is known as Stable Units Treatment Value Assumption (SUTVA)\citep{rubin1986comment}. \review{The no-interference assumption is clearly violated in many real-world scenarios, as is the case in our empirical application.}

\review{In this paper, we focus on a particular type of interference: \emph{clustered network interference}.} As we will show later, focusing on this type of interference is critical to ensure asymptotic properties of the estimator for conditional causal effects as well as to allow the network causal tree to divide the sample into training (or discovery) set and estimation set (and testing set, if applicable).\footnote{Alternatively, network causal trees could also be extended to the case of one single network as long as the amount of dependency is limited to ensure the consistency of the estimator. Furthermore, to divide the sample into the different sets needed for the causal tree algorithm, a community detection algorithm could be used to identify separate and densely connected communities. We leave this possible extension to further research.}

Clustered network interference implies that: i) interference is restricted to units of the same cluster and interference between clusters is ruled out, that is, one's outcome is only affected by the treatment received by units belonging to the same cluster; ii) one's outcome is affected by a weighted function of the treatment status of potentially all units in their own cluster, with weights depending on the presence and possibly the value of network links.

Let $W_{k/i}$ be the vector collecting the treatment status of all units in cluster $k$ except unit $i$. Let $g(\cdot): \{0,1\}^{n_k-1}\longrightarrow \Delta_{ik}$ be a function that maps a cluster assignment vector $\vW_{k/i}$ to an exposure value. Without loss of generality, we define it as a function of the dot product between the cluster assignment vector and a vector of weights $\boldsymbol{\delta}_i(A_k, \mathbf{X}_k)$, which in turn depends on the adjacency matrix $A_k$ and the covariate matrix $\mathbf{X}_k$, i.e., $g(\vW_{k/i}, \boldsymbol{\delta}_i(A_k, \mathbf{X}_k))=f(\vW_{k/i} \cdot \boldsymbol{\delta}_i(A_k, \mathbf{X}_k))$. For instance, the function $g(\cdot)$ could result in the number or proportion of treated units in a cluster. In this case, the weight vector would be equal to $\boldsymbol{\delta}_i(A_k, \mathbf{X}_k)=\mathbf{1}_{n_k-1}$ or $\boldsymbol{\delta}_i(A_k,\mathbf{X}_k)=(\frac{\mathbf{1}}{n_k-1})_{n_k-1}$, respectively. Alternatively, we could use the adjacency matrix to compute the geodesic distance $d(ik,jk)$ between each pair of nodes in cluster $k$ and let $g(\vW_{k/i}, \boldsymbol{\delta}_i(A_k, \mathbf{X}_k)=\sum_{j\neq i} \frac{W_{jk}}{d(ik,jk)}$. The function $g(\cdot)$ is similar to the `effective treatments' function in \cite{manski2013} and the `exposure mapping' function in \cite{aronow2017estimating}, although it applies to the cluster treatment vector only. To ease notation, throughout we will omit the weight vector $\boldsymbol{\delta}_i(A_k,\mathbf{X}_k)$ in the function $g(\cdot)$. 

We can now formalize the clustered network interference assumption as follows.
\begin{assumption}[Clustered Network Interference]
\label{ass:cni}
Given a function $g(\cdot): \{0,1\}^{n_k-1}\longrightarrow \Delta_{ik}$, $\forall k \in \mathcal{K}$, $\forall i \in V_k$, and $\forall \,\,\,\vW, \vW'\in \{0,1\}^N$ such that $ W_{ik}=W'_{ik}, \,\, g(\vW_{k/i})=g(\vW'_{k/i})$, the following equality holds: $\Yik(\vW)=\Yik(\vW')$.
\end{assumption}
\noindent Assumption \ref{ass:cni} states that the outcome of unit $i$ in cluster $k$ depends on the \textit{individual treatment} $W_{ik}$ and a function of the treatment status of the other members of cluster $k$, i.e., $g(\vW_{k/i})$, regardless of the specific treatment status of each member.  This assumption can be viewed as an intermediate assumption between (i) assuming no interference and (ii) making no assumptions about the nature of interference. In a way, it is similar to the \textit{partial interference} or the \textit{stratified interference} in \cite{hudgens2008toward}, which are special cases of the clustered network interference assumption, with $g(\vW_{k/i})=\vW_{k/i}$ and $g(\vW_{k/i})=\sum_{j\neq i} W_{jk}$.
Let $G_{ik}=g(\vW_{k/i})$, referred to as \textit{network exposure} throughout. Under Assumption \ref{ass:cni}, each unit has $|\Delta_{ik}| \times 2$ potential outcomes, which we can write in terms of the individual treatment and the network exposure as $Y_{ik}(w,g)$, representing the potential outcome of unit $ik$ under $W_{ik}=w$ and $G_{ik}=g(\vW_{k/i})=g$.

We also assume the following consistency assumption: 
\begin{assumption}[Consistency]
\label{ass:consistency}
$$Y_{ik}(W_{ik}, G_{ik})=Y_{ik}.$$
\end{assumption}
\noindent This assumption rules out different versions of the treatment and different ways in which a value of the network exposure can affect the outcome of a particular unit.
Under a `finite sample perspective', we assume the potential outcomes of each unit to be fixed but unknown, except for the observed $Y_{ik}(W_{ik}, G_{ik})$. Therefore, the only source of randomness in the potential outcomes is given by the random assignment to the treatment and the random network exposure induced by the random cluster assignment.

Assumptions \ref{ass:cni} and \ref{ass:consistency} together are alternatives to SUTVA when interference is present and is limited to within clusters. When the weight function $\boldsymbol{\delta}_i(A_k, \mathbf{X}_k)$ is such that elements ${\delta}_{ij}(A_k, \mathbf{X}_k)=0$ if $j\in V_k: a_{ij,k}=0$, that is, units that are not directly connected to unit $i$ receive a weight equal to zero, then interference is limited to the neighborhood $\Nik$ of each unit, with $\Nik=\{j \in V_k: a_{ij,k}=1\}$. In this case, Assumptions \ref{ass:cni} and \ref{ass:consistency} correspond to the SUTNVA Assumption in \cite{forastiere2016identification}. 
We denote by $\mathcal{N}_{ik}^g$ the set of units defining the network exposure, that is, $\mathcal{N}_{ik}^g=\{j\in V_k: \text{if}\,\, W'_{jk}=W_{jk}\,\, \text{then}\,\, g(\vW'_{k/i})= g(\vW_{k/i}), \,\, \forall \vW'_{k/i}\neq \vW_{k/i} \}=\{j\in V_k: {\delta}_{ij}(A_k, \mathbf{X}_k)\neq 0 \}$. In most of the literature on spillover effects, this set is either the cluster $k$ \citep{hudgens2008toward} or the neighborhood of unit $i$ \citep{forastiere2016identification}. Alternative specifications are also possible and might involve higher-order neighbors. For the purpose of assessing effect heterogeneity using tree-based methods, we will further 
\review{consider a discrete exposure mapping function by making} the following assumption:

 \begin{assumption}[Discrete Network Exposure]
 \label{ass: bing}
  \review{There exists a discrete exposure mapping function $g(\cdot): \{0,1\}^{n_k-1}\longrightarrow \Delta_{ik}\subset \mathbb{Z}$ such that Assumption \ref{ass:cni} holds and $g(\cdot)$ is known and well-specified.}
\end{assumption}
  \noindent $\mathbb{Z}$ is the set of integers. This assumption implies that the network exposure $G_{ik}$ is a discrete variable.
\noindent  For instance, we can define a binary network exposure
based on a threshold function applied to the number of treated neighbors:
\begin{equation}
\label{eq: threshold}
    G_{ik}=\mathds{1}\bigg(\bigl(\sum_{j\in \Nik} W_{jk}\bigl) \geq q \bigg),
\end{equation}
 where $q$ is a threshold. \review{Hence, the network exposure $\Gik$ is equal to 1 if the number of treated network neighbors exceeds a certain threshold $q$ (e.g., at least one treated neighbor is treated, the majority of the neighbors are treated, ...).}
 In our simulation study as well as in the main analysis of the application we have chosen the following definition: $G_{ik}=\mathds{1}(\bigl(\sum_{j\in \Nik} W_{jk}) \geq 1 \bigr)$, that is, the network exposure is 1 if at least one network neighbor is treated. \review{In the empirical application, we also vary the threshold $q$, to assess the robustness of the results.} As a consequence, both the individual treatment and the network exposure are defined as binary variables, $\Wik \; \in \{ 0,1 \}$ and $\Gik \; \in \{ 0,1 \} $. It follows that the support of the joint treatment variable $(\Wik, \Gik)$ is finite and comprises four possible realizations, given by the combination of the two marginal domains. Hence, $(\Wik, \Gik) \; \in \{ (w,g)=(0,0),(1,0),(0,1),(1,1)\} $. 

A discrete network exposure is crucial for our causal tree algorithm, at least in the version proposed in this paper. Indeed, the algorithm relies on the presence of enough observations for each treatment and exposure value to allow the estimation of the causal effects. Depending on the stopping rule which might rely on the accuracy of the estimation of conditional effects or on the number of observations (see Section \ref{sec: ctrees}), if the sample size is not large enough with respect to the number of categories of the network exposure and/or its distribution is non-uniform and highly skewed, the network causal tree algorithm might result in a tree with low depth and low granularity, that is, with highly heterogeneous causal effects even within the terminal leaves. Therefore, the maximum number of categories allowed for the network exposure depends on the sample size, the number of covariates and their nature, as well as on the extent of the heterogeneity in the causal effects.

\subsection{Unit-Level Randomization and Induced Marginal and Joint Distributions}
\label{subsec: bernoulli}

In this work, we consider an experimental design with a unit-level randomization of the treatment, which is independent between clusters but might be dependent within them. Therefore, the treatment vector $\vW$ is a random vector with probability distribution $P(\vW=\vw)$ and the following assumption holds.
\begin{assumption}[Independent treatment allocation between clusters]
\label{ass:ita}
$$
P(\vW=\vw)=\prod_{k=1}^K P(\vW_k=\vw_k)
$$
where $\vW_k$ is the treatment vector in each cluster $k$.
\end{assumption}
We denote by $\pi^W_{ik}$ the unit-level probability that $W_{ik}$ is equal to 1, under the experimental design in place. In a randomized experiment, $\pi^W_{ik}$ is known. In the case of a Bernoulli trial, where each unit is independently assigned to the individual treatment, $\pi^W_{ik}$ is constant and equal to $\alpha$.\footnote{The unit-level assignment probability could also vary across clusters as in two-stage randomization \citep{hudgens2008toward}.} An example of a design with randomization independent between clusters but dependent within clusters is that of a completely randomized experiment taking place in each cluster. In this case, $\pi^W_{ik}$ would be equal to $m_k/n_k$, where $m_k$ is the fixed number of treated units, and the treatment assignment for each unit does depend on the treatment status of other units. 

Since the network exposure is a deterministic function $g(\cdot)$ of the cluster assignment vector $\vW_{k/i}$, then the randomization distribution $P(\vW=\vw)$ induces, together with the definition of the function $g(\cdot)$, a probability distribution of the vector of network exposures $\mathbf{G}$ in the whole sample.  Hence, the probability for a unit of being exposed to a specific value of the network exposure $G_{ik}=g$ given the individual treatment $w$, denoted by $\pi_{ik}^{G|W}(g|w)$, is known and can in principle be computed from the probability distribution $P(\vW)$. Note that, we can drop the dependency from the individual treatment and write $\pi_{ik}^G(g)$ when the randomization is independent between units.

Let $\Delta=\{0,1\} \times \bigcup_{ik \in V} \Delta_{ik}$ be the domain of the joint individual and network treatment status, that is, $(w,g)\in \Delta$. Let $\pi_{ik}(w,g)$ denote the \emph{marginal probability} for unit $ik$ of being assigned to individual treatment $w$ and being exposed to the network status $g$. This is equal to the expected proportion of assignment vectors $\vw$ inducing an individual treatment $w$ and a network exposure $g$:
\begin{equation}
   \pi_{ik}(w,g)=\sum_{\vw\in\{0,1\}^N} \mathds{1}(W_{ik}=w, G_{ik}=g)P(\vW=\vw)=(\pi_{ik}^W)^w(1-\pi_{ik}^W)^{1-w} \times\pi_{ik}^{G|W}(g|w).
\end{equation}
This marginal probability is a crucial component of the Horvitz-Thompson estimator for causal effects under network interference. For instance, if the experimental design is a Bernoulli trial with unit-level probability $\alpha$ and the network exposure is defined by a threshold function on the neighborhood as in Equation \ref{eq: threshold}, then the joint probability could be computed as follows:
\begin{equation}
\label{eq:ber_pi}
\pi_{ik}(w,g)=\alpha^w(1-\alpha)^{1-w} \times \biggl[1- \sum_{l=0}^{h-1} \binom{\dik}{l} p^l(1-p)^{\dik-l}\biggr]^g \biggl[\sum_{l=0}^{q-1} \binom{\dik}{l} p^l(1-p)^{\dik-l}\biggr]^{1-g}
\end{equation}
where $\dik$ is the number of neighbors (`degree') of unit $ik$.

To deal with well-defined potential outcomes, we must assume that each unit has a non-zero probability of being exposed to each $(w,g)$: 
\begin{assumption}[Positivity\label{ass: positivity}]
$\pi_{ik}(w,g)>0 \quad \forall i \in \mathcal{N}, \; k \in \mathcal{K} \; \; \text{and} \; \; \forall (w,g) \in \Delta$.
\end{assumption}
\noindent When $\pi_{ik}(w,g)=0$ for some units, then the average potential outcomes and causal
effects involving these values $z$ and $g$ must be restricted to the subset of units for whom $\pi_{ik}(w,g)>0$.  For instance, if the network exposure is defined as in Equation \ref{eq: threshold}, then the positivity assumption is violated for units that cannot be exposed to a value $g$, that is, those with a degree $\dik$ lower than the threshold $q$. Consequently, the analysis must be restricted only to the subset of the population satisfying the positivity criterion. 

The estimator that we propose below also requires the so-called \emph{pairwise exposure probabilities}, which describe the joint probability for pairs of units of being exposed to a given individual treatment and network status. Hence, given specific exposure conditions $ (w,g)\; \text{and} \; (w',g')$ , a pairwise exposure probability, denote by $\pi_{ikjh}(w,g;w',g')$, quantifies the probability that the two events ($\Wik=w,\Gik=g$) and ($W_{jh}=w',G_{jh}=g'$)  occur---i.e., $\pi_{ikjh}(w,g;w',g')=P(\Wik=w,\Gik=g,W_{jh}=w',G_{jh}=g')$. In general, this can be written as:
 \begin{equation}
   \pi_{ikjk'}(w,g;w',g')=\sum_{\vw\in\{0,1\}^N} \mathds{1}(W_{ik}=w, G_{ik}=g, W_{jk'}=w', G_{jk'}=g')P(\vW=\vw).
\end{equation}
Under the event of both units being exposed to the same condition $(w,g)$ we denote the pairwise exposure probability by $\pi_{ikjh}(w,g)$.
 
In the case of an experimental design assigning treatment independently between clusters, under the clustered network interference the two events ($\Wik=w,\Gik=g$) and ($W_{jh}=w',G_{jh}=g'$), with $k\neq h$, are independent and the pairwise exposure probability equals the product of the two joint probabilities:
 $
 \pi_{ikjh}(w,g;w',g')=\pi_{ik}(w,g) \times \pi_{jh}(w',g')
 $.
In the case of a Bernoulli trial and the network exposure defined on the neighborhood only, this is also true for units $i$ and $j$ belonging to the same cluster, i.e., $k=h$,  but are not connected and do not share any neighbors.

In general, let $\mathcal{N}_{ik}^{wg}=\mathcal{N}_{ik}^g\cup ik$. Even under independent treatment assignment, if $\mathcal{N}_{ik}^{wg} \cap \mathcal{N}_{jk'}^{wg}\neq 0$ the joint treatment of the units $ik$ and $jk'$ will be dependent and $\pi_{ikjk'}(w,g;w',g')\neq\pi_{ik}(w,g)\pi_{jk'}(w',g')\neq 0$.
Note that if $\pi_{ik}(w,g)$ or $\pi_{jh}(w',g')=0$ $\Rightarrow \pi_{ikjh}(w,g;w',g')=0$, but not the reverse. Indeed, the joint probability of the two events ($\Wik=w,\Gik=g$) and ($W_{jh}=w',G_{jh}=g'$) might be zero if the network exposures $G_{ik}$ and $G_{jh}$ are defined on two subsets of units that coincide or include $jh$ and $ik$, respectively. For example, if the network exposure is defined as in Equation \ref{eq: threshold} with a threshold equal to 1 (i.e., having at least one treated neighbor) then, if unit $ik$ is treated, with $W_{ik}$ and belongs to the neighborhood $\mathcal{N}_{jh}$ of unit $jh$, the network exposure $G_{jh}$ cannot be 0. 
 
\section{Conditional Treatment and Spillover Effects and Horvitz-Thompson Estimator}
\label{sec:cace}
\subsection{Conditional Treatment and Spillover Effects} \label{subsec: cace}

Our ultimate goal is to \emph{detect} the regions of the covariate space exhibiting a high level of heterogeneity in the causal effects and \emph{estimate} the causal effects of interest in these heterogeneous regions. In this section, we will focus on the definition and estimation of conditional treatment and spillover effects and we will assume that the heterogeneous regions that we want to investigate have already been identified, either a priori according to subject-matter knowledge or thanks to data-driven methods. 

Let us denote with $\Pi$ a partition of the covariate space $\mathcal{X}$ into $M$ non-overlapping regions: $\Pi=\{\ell_{1}, \dots ,\ell_{M}\}$, where $\bigcup^{M}_{m=1}\ell_{m}=\mathcal{X}$, and with $\ell(\vx,\Pi): \mathcal{X} \rightarrow \Pi $ a function that maps each vector $\vx$ of the covariate space into a region. Let $N(\ell_m)$ be the size of each region $\ell_m$, with $m=1,\dots,M$, and let $V_k(\ell_m)$ be the subset of units belonging to region $\ell_m$ in cluster $k$, with $k=1,\ldots,K$. In the machine learning literature on CART, these non-overlapping regions are referred to as \emph{leaves}. For consistency, throughout we will use this terminology, regardless of whether the partition $\Pi$ has been a priori defined or is the result of a tree-based algorithm. In addition, to ease notation, we will drop the reference to the partition $\Pi$ from the mapping function $\ell(\cdot)$.

When units are organized in a network, it is worth noting that a partition $\Pi$ of the covariate space divides the sample units into sub-populations according to similarities in their characteristics, regardless of their network distance. Hence, two connected units might belong to different regions of the partition. However, in a homophilous network, where the probability of forming a link depends on the similarity in certain features and, hence, connected units are likely to share similar characteristics, a partition of the covariates' space is also likely to cluster connected units together (see Figure \ref{fig: partition}).

\begin{figure}
    \centering
    \includegraphics[width=0.6\textwidth]{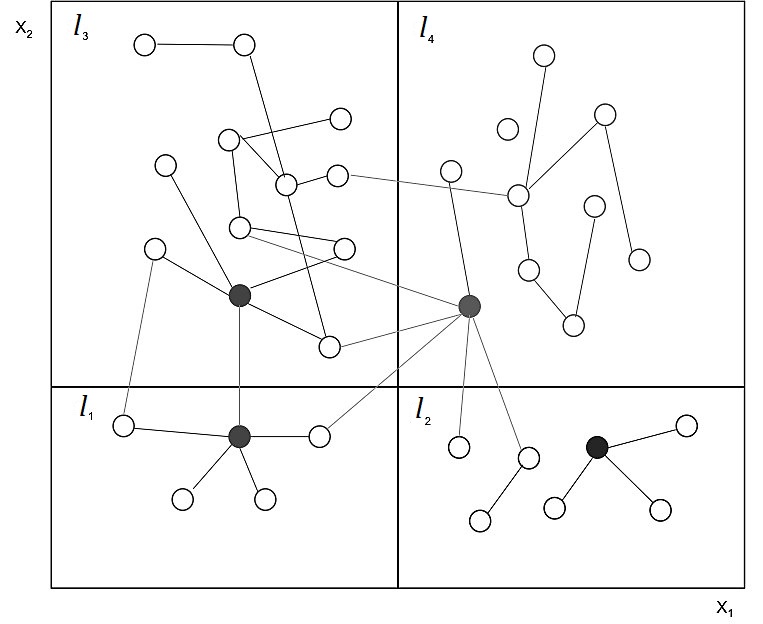}
    \caption{Partition of the covariate space with connected units.}
    \label{fig: partition}
\end{figure}

Given a partition $\Pi$, we now define conditional average potential outcomes under each individual treatment and network exposure condition $(w,g)\in \Delta$. For the subset of units with covariate vectors $\vx\in \mathcal{X}$ that are mapped to the same region by the function $\ell(\vx)$, we define the leaf-specific average potential outcome under treatment and exposure condition $(w,g)\in \Delta$ as follows:
\begin{equation}
    \mu_{(w,g)}(\ell(\vx))=\frac{1}{N(\ell(\vx))}\sum_{k=1}^K\sum_{i=1}^{n_k}\Yik(w,g)\mathds{1}(\vXik\in \ell(\vx)).
\end{equation}
Note that $\mu_{(w,g)}(\ell(\vx))$ is a sample average, that is, it is the average potential outcome for all units in sample $V$ with a covariate vector mapped to the same region $\ell(\vx)$. 

Leaf-specific conditional average causal effect (CACE) can be defined by comparing average potential outcomes under two different conditions: 
\begin{small}
\begin{align}
\tau_{(w,g;w',g')}(\ell(\vx))& =  \frac{1}{N(\ell(\vx))}\sum_{k=1}^K\sum_{i=1}^{n_k}\Yik(w,g)\mathds{1}(\vXik\in \ell(\vx)) -  \frac{1}{N(\ell(\vx))}\sum_{k=1}^K\sum_{i=1}^{n_k}\Yik(w',g')\mathds{1}(\vXik\in \ell(\vx)) \nonumber \\
    & =    \mu_{(w,g)}(\ell(\vx))- \mu_{(w',g')}(\ell(\vx)). 
\end{align}
\end{small}
We denote by $\mathcal{T}$ the set of possible contrasts we are interested in. For instance, if both the individual treatment and the network exposure are binary, then $\mathcal{T}\subseteq \{(1,0;0,0), (1,1,0,1),(0,1;0,0),(1,1,1,0),(1,1,0,0)\}$.  We define as \emph{leaf-specific treatment effects} causal contrast $\tau_{(w,g;w',g)}(\ell(\vx))$ that keep the network exposure fixed at a level $g$ while changing the individual treatment from $w'$ to $w$, that is when $g=g'$. These represent the causal effects of receiving the treatment while the treatment status of all other units is kept fixed or mapped to the same network exposure $g$. \review{For instance, in the weather insurance application $\tau_{(1,0;0,0)}(\ell(\vx))$ may represent, for farmers with similar characteristics (e.g., older than 50 years old and educated), the average effect on insurance uptake  of directly receiving the intensive information session at round 1 vs receiving the simple session or any type of session at round 2, while no friend has received the intensive session at round 1.}

On the contrary, we define as \emph{leaf-specific spillover effects} causal contrasts $\tau_{(w,g;w,g')}(\ell(\vx))$ that keep the individual treatment fixed at a level $w$ while changing the network exposure from $g'$ to $g$, that is, when $w=w'$. These spillover effects can be seen as causal effects of a change in the treatment status of other units such that the network exposure also changes, while the individual treatment status is kept fixed.
\review{For instance, in our empirical application $\tau_{(0,1;0,0)}(\ell(\vx))$ may represent, for farmers with similar characteristics (e.g., older than 50 years old and educated) and who have received the simple information session at round 1 or any type of session at round 2, the average spillover effect on insurance uptake of having at least one friend who has received the intensive session at round 1.}

It should be emphasized that $\mu_{(w,g)}(\ell(\vx))$ corresponds to a unit-level intervention setting the treatment and network exposure of each unit to specific values. The focus on these types of average potential outcomes, as opposed to the ones based on population-level hypothetical interventions as in \cite{hudgens2008toward}, is due to our purpose of investigating heterogeneous responses to the individual treatment and network status across units with different characteristics.
If one is interested in assessing the heterogeneity of the average response to the network exposure resulting from a hypothetical treatment allocation, our approach could be extended to marginalized causal effects as the ones in \cite{forastiere2016identification}.

\subsection{Estimator for leaf-specific CACE} \label{subsec: caceestim}

Here we develop a Horvitz-Thompson estimator for leaf-specific conditional average causal effects.
The derivation of the proposed estimator builds upon the estimator for average causal effects under network interference proposed by \cite{aronow2017estimating}. 

Following \cite{horvitz1952generalization} and \cite{aronow2017estimating}, a design-based estimator
for the \textit{leaf-specific average potential outcome} under individual treatment $w$ and network exposure $g$, $\mu_{(w,g)}(\ell(\vx))$, can be expressed as:
\begin{equation}
\widehat{\mu}_{(w,g)}(\ell(\vx))=\frac{1}{N(\ell(\vx))}\sum_{k=1}^K\sum_{i=1}^{n_k}\frac{Y_{ik}}{\pi_{ik}(w,g)}\mathds{1}(W_{ik}=w, G_{ik}=g, \mathbf{X}_{ik}\in \ell(\vx))    
\end{equation}
where $\pi_{ik}(w,g)$ denotes the probability of a given unit $ik$, that belongs to the leaf $\ell(\vx)$ (in the partition $\Pi$), to be exposed to the treatment condition $(w,g)$. 

The variance estimator of $\widehat{\mu}_{(w,g)}(\ell(\vx)$ can be expressed as:
\begin{small}
\begin{align}
\label{eq:var}
    \widehat{\Var}\Big(\widehat{\mu}_{(w,g)}(\ell(\vx))\Big)=&\frac{1}{N(\ell(\vx))^2}\sum_{k=1}^K\sum_{i=1}^{n_k} \I(\Wik=w,\Gik=g, \vXik \in \ell(\vx))[1-\pi_{ik}(w,g)]\biggl[\frac{\Yik}{\pi_{ik}(w,g)}\biggr]^2 \nonumber \\&+ \frac{1}{N(\ell(\vx))^2}\sum_{k=1}^K\sum_{i=1}^{n_k}\sum_{j\neq i} \I(\Wik=w,\Gik=g, \vXik \in \ell(\vx))\I(\Wjk=w,\Gjk=g, \mathbf{X}_{jk} \in \ell(\vx)) \nonumber \\ &\times\frac{\pi_{ikjk}(w,g)-\pi_{ik}(w,g)\pi_{jk}(w,g)}{\pi_{ikjk}(w,g)}\frac{\Yik}{\pi_{ik}(w,g)}\frac{\Yjk}{\pi_{jk}(w,g)}.
    \end{align}
\end{small}
This expression extends the variance estimator derived in \cite{aronow2017estimating} (Equation 7) to the case of conditional average potential outcomes and clustered network interference. In fact, the second term in \eqref{eq:var} includes the covariance between the individual treatment and network exposure of two units belonging to the same leaf $\ell(\vx)$. Under an experimental design with independent treatment allocation between clusters and under clustered interference such covariance between two units belonging to different clusters is zero and the second term should be restricted to units $j$ in the same cluster as $i$. In addition, the covariance between the joint treatment of two units is non-zero if the set of units defining the network exposure -- e.g., the whole cluster or the unit's neighborhood -- is shared between them or includes them. Formally, if $\mathcal{N}_{ik}^{wg} \cap \mathcal{N}_{jk'}^{wg}\neq 0$ the joint treatment of the units $ik$ and $jk'$ will be dependent, that is, $\pi_{ikjk'}(w,g)-\pi_{ik}(w,g)\pi_{jk'}(w,g)\neq 0$ . Hence, two units belonging to the same leaf are more likely to have intersecting sets $\mathcal{N}_{ik}^{wg}$ and $\mathcal{N}_{jk'}^{wg}$ (e.g., shared neighbors) if the sets are homogeneous, that is, units belonging to these sets share similar characteristics. Settings with homophilous networks are investigated in the appendix.

An estimator for the leaf-specific conditional average causal effect of the exposure condition $(w,g)$ compared to the configuration $(w',g')$ can be written as:
\begin{equation}
\widehat{\tau}_{(w,g;w',g')}(\ell(\vx))=\widehat{\mu}_{(w,g)}(\ell(\vx))-\widehat{\mu}_{(w',g')}(\ell(\vx)).
\end{equation}
The estimated variance of the estimator $\widehat{\tau}_{(w,g;w',g')}(\ell(\vx))$ can be decomposed as follows:
\begin{align}
\widehat{\mathbb{V}}\Big(\widehat{\tau}_{(w,g;w',g')}(\ell(\vx))\Big)= &   \widehat{\mathbb{V}}\Big(\widehat{\mu}_{(w,g)}(\ell(\vx))\Big) + \widehat{\mathbb{V}}\Big(\widehat{\mu}_{(w',g')}(\ell(\vx))\Big) \nonumber \\ &-2\Big[\widehat{\mathbb{C}}\Big(\widehat{\mu}_{(w,g)}(\ell(\vx)),\widehat{\mu}_{(w',g')}(\ell(\vx)) \Big) \Big]
\end{align}
with the covariance estimator taking the following expression for the case when $\pi_{ikjk}(w,g;w',g')>0 \,\, \forall i,j,k$:
\begin{footnotesize}
\begin{eqnarray}
\widehat{\mathbb{C}}\Big(\widehat{\mu}_{w,g}(\ell(\vx)),\widehat{\mu}_{w',g'}(\ell(\vx))\Big)&=& \frac{1}{N(\ell(\vx))^2}
\sum_{k=1}^K\sum_{i=1}^{n_k}\sum_{j \neq i}\frac{\I(\Wik=w,\Gik=g, \mathbf{X}_{ik} \in \ell(\vx))\I(\Wjk=w',\Gjk=g', \mathbf{X}_{jk} \in \ell(\vx))}{\pi_{ikjk}(w,g;w',g')} \nonumber\\
&\times&[\pi_{ikjk}(w,g;w',g')-\pi_{ik}(w,g)\pi_{jk}(w',g')]\frac{\Yik}{\pi_{ik}(w,g)}\frac{\Yjk}{\pi_{jk}(w',g')} \nonumber  \\
 &-& \frac{1}{N(\ell(\vx))^2} \sum_{k=1}^K\sum_{i=1}^{n_k}\sum_{j \neq i}\Big[\frac{\I(\Wik=w,\Gik=g, \mathbf{X}_{ik} \in \ell(\vx))\Yik^2}{2\pi_{ik}(w,g)} \nonumber \\ &+&\frac{\I(\Wik=w',\Gik=g', \mathbf{X}_{ik} \in \ell(\vx))\Yik^2}{2\pi_{ik}(w',g')}\Big]. 
\end{eqnarray}
\end{footnotesize}
Further details about the variance estimator of leaf-specific CACE can be found in Appendix \ref{appendix_varcace}.

\subsubsection{Properties of the Horvitz-Thompson Estimator}
\label{subsubsec: properties}
Here we will describe the properties of the Horvitz-Thompson estimator of leaf-specific causal effects. Asymptotic results will rely on a growth process that is commonly assumed with cluster data. In particular, we consider a sequence of nested samples $V$ of size $N$, where $V$ consists of $K$ separate clusters $V_k$ of size $n_k$, $k=1,\ldots,K$. We let the sample size $N\longrightarrow  \infty$ by letting the number of clusters go to infinity, i.e., $K\longrightarrow  \infty$,  while the cluster size $n_k$, $k=1,\ldots,K$ remains fixed.

\begin{proposition}[Unbiasedness]
\label{prop:unbiaseness} \review{Under Assumptions \ref{ass:cni}, \ref{ass:consistency}, \ref{ass: bing}, \ref{ass:ita}, and \ref{ass: positivity}, then}
\begin{equation}
    E\Big[\widehat{\mu}_{(w,g)}(\ell(\vx))\Big]=\mu_{w,g}(\ell(\vx)) \nonumber
\end{equation}
and
\begin{equation}
    E\Big[\widehat{\tau}_{(w,g, w'g')}(\ell(\vx))\Big]=\tau_{w,g;w',g'}(\ell(\vx)). \nonumber
\end{equation}
\begin{proof}
Proof in Appendix \ref{app:proofs}.
\end{proof}
\end{proposition}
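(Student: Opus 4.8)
The plan is to work entirely within the design-based (finite-population) perspective adopted in the excerpt: the covariate vectors $\vXik$ are pre-treatment and hence non-random, the potential outcomes $\{\Yik(w,g)\}$ are fixed but unknown, and the only source of randomness is the treatment assignment $\vW$ together with the induced network exposures $\vG$. Under this view $\mu_{(w,g)}(\ell(\vx))$ is a fixed number, $N(\ell(\vx))$ is fixed, and it suffices to compute $E\big[\widehat{\mu}_{(w,g)}(\ell(\vx))\big]$ by taking the expectation over $P(\vW)$ term by term in the sum defining the estimator.

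First I would fix a unit $ik$ and analyze the single summand $\frac{\Yik}{\pi_{ik}(w,g)}\I(\Wik=w,\Gik=g,\vXik\in\ell(\vx))$. The factor $\I(\vXik\in\ell(\vx))$ is deterministic and pulls out of the expectation. By the consistency assumption (Assumption \ref{ass:consistency}), on the event $\{\Wik=w,\Gik=g\}$ we have $\Yik=\Yik(w,g)$, so $\Yik\,\I(\Wik=w,\Gik=g)=\Yik(w,g)\,\I(\Wik=w,\Gik=g)$, and since $\Yik(w,g)$ is a constant it too pulls out. What remains is $E\big[\I(\Wik=w,\Gik=g)\big]$, which by the displayed definition of the marginal exposure probability equals $\pi_{ik}(w,g)$; the positivity assumption (Assumption \ref{ass: positivity}) guarantees $\pi_{ik}(w,g)>0$, so the division is legitimate and the $\pi_{ik}(w,g)$ in numerator and denominator cancel, leaving $\Yik(w,g)\,\I(\vXik\in\ell(\vx))$.

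Then I would sum over all units $ik$ and divide by $N(\ell(\vx))$; by linearity of expectation this yields $\frac{1}{N(\ell(\vx))}\sum_{k=1}^K\sum_{i=1}^{n_k}\Yik(w,g)\,\I(\vXik\in\ell(\vx))=\mu_{(w,g)}(\ell(\vx))$, which is the first claim. The second claim follows immediately: writing $\widehat{\tau}_{(w,g;w',g')}(\ell(\vx))=\widehat{\mu}_{(w,g)}(\ell(\vx))-\widehat{\mu}_{(w',g')}(\ell(\vx))$ and applying linearity of expectation together with the definition $\tau_{(w,g;w',g')}(\ell(\vx))=\mu_{(w,g)}(\ell(\vx))-\mu_{(w',g')}(\ell(\vx))$ gives the result.

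I do not anticipate a genuine obstacle — this is the textbook Horvitz--Thompson unbiasedness argument transplanted to the conditional, clustered-interference setting — but the one point that needs care is that the $\pi_{ik}(w,g)$ appearing in the estimator is literally the marginal probability $P(\Wik=w,\Gik=g)$ induced by the randomization, not a conditional quantity; this is exactly what the displayed formula for $\pi_{ik}(w,g)$ provides, and it is what makes the cancellation exact. It is also worth noting in the proof that neither Assumption \ref{ass:cni} nor the between-cluster independence of the design is needed here: unbiasedness holds for any randomization scheme satisfying positivity, and those assumptions will enter only in the variance expression and in the consistency result that follows.
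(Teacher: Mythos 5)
Your proof is correct and follows essentially the same route as the paper's: linearity of expectation over the randomization distribution, consistency to replace $\Yik$ by the fixed $\Yik(w,g)$ on the event $\{\Wik=w,\Gik=g\}$, cancellation of $\pi_{ik}(w,g)$ with $E[\I(\Wik=w,\Gik=g)]$, and then the result for $\widehat{\tau}$ by differencing. The only minor caveat is your closing remark: Assumption \ref{ass:cni} is not entirely dispensable, since it is what makes the potential outcomes $\Yik(w,g)$ (and hence the target $\mu_{(w,g)}(\ell(\vx))$) well defined in the first place, even though the unbiasedness computation itself uses only consistency and positivity.
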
 
\noindent The unbiasedness of the estimator of leaf-specific CACE is conditional on the partition $\Pi$ and the function $\ell(\cdot)$. When building causal trees to assess the heterogeneity of causal effects, we will rely on this property to derive the splitting criterion and to estimate leaf-specific causal effects.\footnote{The unbiasedness of the estimator $\widehat{\tau}_{w,g, w',g'}(\ell(\vx)$ does not ensure the identification of subsets with the highest heterogeneity. The performance of the causal tree in identifying heterogeneous regions depends on the splitting criterion, the algorithm and the sample.}

\begin{proposition}[The variance estimator of $\widehat{\mu}_{(w,g)}$ is unbiased]
\review{Under Assumptions \ref{ass:cni}, \ref{ass:consistency}, \ref{ass: bing}, \ref{ass:ita}, and \ref{ass: positivity},}
if $\pi_{ikjk}(w,g)>0 \,\, \forall i,j,k$ then 
$$\E\Big[\widehat{\Var}\Big(\widehat{\mu}_{(w,g)}(\ell(\vx))\Big)\Big]= \Var\Big(\widehat{\mu}_{(w,g)}(\ell(\vx)\Big).$$
\end{proposition}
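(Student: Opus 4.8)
The plan is to recognize $\widehat{\mu}_{(w,g)}(\ell(\vx))$ as an ordinary Horvitz--Thompson estimator in the exposure indicators, compute its exact design variance, and then verify that $\E[\widehat{\Var}(\widehat{\mu}_{(w,g)}(\ell(\vx)))]$ reproduces that variance summand by summand. First I would write $D_{ik}=\I(W_{ik}=w,G_{ik}=g)$ for the random exposure indicator and record two structural facts: the factor $\I(\vXik\in\ell(\vx))$ is nonrandom because $\vXik$ is pre-treatment, and, by the consistency Assumption~\ref{ass:consistency} together with the design-based view that each $Y_{ik}(w,g)$ is a fixed constant, $D_{ik}Y_{ik}=D_{ik}Y_{ik}(w,g)$ and hence $D_{ik}Y_{ik}^2=D_{ik}Y_{ik}(w,g)^2$. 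Setting $c_{ik}=\I(\vXik\in\ell(\vx))\,Y_{ik}(w,g)/\pi_{ik}(w,g)$, which is a constant (and is well defined by the positivity Assumption~\ref{ass: positivity}), this makes $\widehat{\mu}_{(w,g)}(\ell(\vx))=N(\ell(\vx))^{-1}\sum_{k,i}c_{ik}D_{ik}$, and \eqref{eq:var} a quadratic form in $\{D_{ik}\}$ and $\{D_{ik}D_{jk}\}$.

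Next I would compute the true variance by bilinearity, using $\E[D_{ik}]=\pi_{ik}(w,g)$, $\Var(D_{ik})=\pi_{ik}(w,g)(1-\pi_{ik}(w,g))$, and, for $(i,k)\neq(j,k')$, $\E[D_{ik}D_{jk'}]=\pi_{ikjk'}(w,g)$, so that the covariance is $\pi_{ikjk'}(w,g)-\pi_{ik}(w,g)\pi_{jk'}(w,g)$. The diagonal part yields $N(\ell(\vx))^{-2}\sum_{k,i}\I(\vXik\in\ell(\vx))(1-\pi_{ik}(w,g))\,Y_{ik}(w,g)^2/\pi_{ik}(w,g)$. For the off-diagonal part, the key point is that under the between-cluster independence assumption and clustered network interference (Assumption~\ref{ass:cni}), $D_{ik}$ and $D_{jk'}$ are independent whenever $k\neq k'$, so the covariance vanishes and only the within-cluster ordered pairs $i\neq j$ survive, which is exactly the index set appearing in the cross term of \eqref{eq:var}.

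Then I would take the expectation of $\widehat{\Var}(\widehat{\mu}_{(w,g)}(\ell(\vx)))$ term by term. For the diagonal term, $\E[D_{ik}Y_{ik}^2]=\pi_{ik}(w,g)\,Y_{ik}(w,g)^2$ returns precisely the diagonal part of the true variance. For the off-diagonal term, $\E[D_{ik}D_{jk}Y_{ik}Y_{jk}]=\pi_{ikjk}(w,g)\,Y_{ik}(w,g)Y_{jk}(w,g)$, and---this is where the hypothesis $\pi_{ikjk}(w,g)>0$ enters---the weight $1/\pi_{ikjk}(w,g)$ is well defined, so after the cancellation $\pi_{ikjk}(w,g)/\pi_{ikjk}(w,g)=1$ the summand collapses to $(\pi_{ikjk}(w,g)-\pi_{ik}(w,g)\pi_{jk}(w,g))\,Y_{ik}(w,g)Y_{jk}(w,g)/(\pi_{ik}(w,g)\pi_{jk}(w,g))$, matching the within-cluster off-diagonal part of the true variance. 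Adding the two matched pieces gives $\E[\widehat{\Var}(\widehat{\mu}_{(w,g)}(\ell(\vx)))]=\Var(\widehat{\mu}_{(w,g)}(\ell(\vx)))$.

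The calculation itself is routine bilinear algebra; the only substantive point---and the reason for the stated hypothesis---is pairwise positivity. As noted in Section~\ref{subsec: bernoulli}, a pairwise exposure probability $\pi_{ikjk}(w,g)$ can be zero even when both singleton probabilities are positive (for instance when the two units' network exposures are defined on overlapping unit sets). For such a pair $\{D_{ik}D_{jk}=1\}$ is a null event, so the estimator contributes nothing there, while the true variance still carries the covariance term $-\pi_{ik}(w,g)\pi_{jk}(w,g)\,Y_{ik}(w,g)Y_{jk}(w,g)$; the estimator would then be biased. Ruling this out is exactly what $\pi_{ikjk}(w,g)>0\ \forall i,j,k$ buys, and it is the only place in the argument where the assumption is used.
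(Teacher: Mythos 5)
Your proposal is correct and follows essentially the same route as the paper: the paper's one-line proof ("follows directly from the unbiasedness of the Horvitz--Thompson estimator," in the spirit of \cite{aronow2017estimating}) is precisely the term-by-term inverse-probability-weighting argument you write out in full, with each summand of \eqref{eq:var} an unbiased HT-type estimate of the corresponding component of the true design variance, and cross-cluster covariances vanishing by between-cluster independence and clustered interference. Your closing observation about why $\pi_{ikjk}(w,g)>0$ is needed also matches the paper's treatment, which handles the $\pi_{ikjk}(w,g)=0$ case separately via the conservative correction in Appendix \ref{appendix_varcace}.
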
 
\noindent The proof follows directly from the unbiasedness of the Horvitz-Thompson estimator. A conservative estimator for the case when $\pi_{ikjk}(w,g)=0$ for some units can be found in Appendix \ref{appendix_varcace}.
\begin{proposition}[The variance estimator of $\widehat{\tau}_{(w,g;w',g')}$ is conservative]
\label{prop: conservative}
\review{Under Assumptions \ref{ass:cni}, \ref{ass:consistency}, \ref{ass: bing}, \ref{ass:ita}, and \ref{ass: positivity}, then}
$$\E\Big[\widehat{\Var}\Big(\widehat{\tau}_{(w,g;w',g')}(\ell(\vx))\Big)\Big]\geq \Var\Big(\widehat{\tau}_{(w,g;w',g')}(\ell(\vx))\Big).$$
\end{proposition}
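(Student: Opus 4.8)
The plan is to decompose the variance of the difference estimator in the standard way and then show the estimator over-counts by a nonnegative quantity. Recall that $\widehat{\tau}_{(w,g;w',g')}(\ell(\vx)) = \widehat{\mu}_{(w,g)}(\ell(\vx)) - \widehat{\mu}_{(w',g')}(\ell(\vx))$, so that its true variance is
$$
\Var\big(\widehat{\tau}_{(w,g;w',g')}(\ell(\vx))\big) = \Var\big(\widehat{\mu}_{(w,g)}(\ell(\vx))\big) + \Var\big(\widehat{\mu}_{(w',g')}(\ell(\vx))\big) - 2\,\mathbb{C}\big(\widehat{\mu}_{(w,g)}(\ell(\vx)),\widehat{\mu}_{(w',g')}(\ell(\vx))\big).
$$
Since the two marginal variance estimators are each unbiased (by the previous proposition, under $\pi_{ikjk}>0$; or conservative in the degenerate case via the Appendix \ref{appendix_varcace} version), the whole question reduces to controlling the cross term: it suffices to show that the expectation of the covariance estimator $\widehat{\mathbb{C}}(\widehat{\mu}_{w,g}(\ell(\vx)),\widehat{\mu}_{w',g'}(\ell(\vx)))$ is \emph{less than or equal to} the true covariance $\mathbb{C}(\widehat{\mu}_{w,g}(\ell(\vx)),\widehat{\mu}_{w',g'}(\ell(\vx)))$, so that subtracting $-2\widehat{\mathbb{C}}$ makes $\widehat{\mathbb{V}}$ an over-estimate.

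First I would take the expectation of the first sum in $\widehat{\mathbb{C}}$ term by term. For a fixed ordered pair $(ik, jk)$ with $i \neq j$, the two indicator functions multiply to $\mathds{1}(W_{ik}=w, G_{ik}=g, W_{jk}=w', G_{jk}=g', \mathbf{X}_{ik}\in\ell(\vx), \mathbf{X}_{jk}\in\ell(\vx))$, whose expectation over the randomization is exactly $\pi_{ikjk}(w,g;w',g')$ restricted to the covariate event (which is deterministic, so it just selects pairs in the leaf). Dividing by $\pi_{ikjk}(w,g;w',g')$ cancels, leaving $[\pi_{ikjk}(w,g;w',g') - \pi_{ik}(w,g)\pi_{jk}(w',g')]\,Y_{ik}Y_{jk}/(\pi_{ik}(w,g)\pi_{jk}(w',g'))$ for each in-leaf pair — this is precisely the per-pair contribution to the true covariance of the two Horvitz--Thompson estimators. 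So the first sum of $\widehat{\mathbb{C}}$ is unbiased for the covariance, provided all the relevant pairwise probabilities are positive. The key point: the true covariance of $\widehat\mu_{(w,g)}$ and $\widehat\mu_{(w',g')}$ only involves cross-pairs $i\neq j$ (a single unit cannot simultaneously have $(W_{ik},G_{ik})=(w,g)$ and $(w',g')$ when $(w,g)\neq(w',g')$, so the diagonal contributes zero to the true covariance), matching the structure of the first sum.

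Next I would handle the second sum of $\widehat{\mathbb{C}}$, namely the $-\frac{1}{N(\ell)^2}\sum_{k}\sum_{i}\sum_{j\neq i}[\,\mathds{1}(W_{ik}=w,G_{ik}=g,\ldots)Y_{ik}^2/(2\pi_{ik}(w,g)) + \mathds{1}(W_{ik}=w',G_{ik}=g',\ldots)Y_{ik}^2/(2\pi_{ik}(w',g'))\,]$. Taking expectations, $\E[\mathds{1}(W_{ik}=w,G_{ik}=g,\mathbf{X}_{ik}\in\ell(\vx))] = \pi_{ik}(w,g)$ for in-leaf units, so this sum contributes $-\frac{1}{N(\ell)^2}\sum_k\sum_{i\in\ell}(N(\ell)-1)\,[\,Y_{ik}(w,g)^2/2 + Y_{ik}(w',g')^2/2\,]$ — a strictly nonpositive (typically negative) deterministic quantity, where I use consistency (Assumption \ref{ass:consistency}) to write the observed outcome under the realized exposure as the corresponding potential outcome. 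Since $\E[\widehat{\mathbb{C}}] = \mathbb{C} - (\text{nonnegative quantity})$, we get $\E[\widehat{\mathbb{C}}] \le \mathbb{C}$, hence $-2\E[\widehat{\mathbb{C}}] \ge -2\mathbb{C}$, and combining with unbiasedness of the two variance estimators yields $\E[\widehat{\mathbb{V}}(\widehat{\tau})] = \Var(\widehat{\mu}_{(w,g)}) + \Var(\widehat{\mu}_{(w',g')}) - 2\E[\widehat{\mathbb{C}}] \ge \Var(\widehat{\tau})$, which is the claim.

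The main obstacle, and where I would spend the most care, is the degenerate-probability bookkeeping: the clean argument above assumes $\pi_{ikjk}(w,g;w',g')>0$ for all in-leaf pairs, but under clustered network interference with a threshold exposure this fails exactly when unit $ik$ lies in the exposure-defining set of $jk$ (e.g., $ik\in\mathcal{N}_{jk}$ treated forces $G_{jk}\neq 0$), so certain joint events have probability zero and the corresponding term is dropped from the estimator. For those pairs the true covariance contribution $-\pi_{ik}(w,g)\pi_{jk}(w',g')Y_{ik}Y_{jk}/(\pi_{ik}\pi_{jk}) = -Y_{ik}Y_{jk}$ is not estimated, and I would need to verify that the substitute term (the Young's-inequality-type bound $-\frac12(Y_{ik}^2+Y_{jk}^2)$ appearing in the second sum, summed appropriately) dominates it, i.e. that $-\frac12(Y_{ik}^2 + Y_{jk}^2) \le -Y_{ik}Y_{jk}$, which is just $(Y_{ik}-Y_{jk})^2\ge 0$. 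Making this accounting precise — matching which pairs are handled by the exact term versus the bounding term, and checking the inequality holds pair-by-pair so that it survives summation — is the delicate part; the rest is a direct expectation computation using the randomization distribution and Assumptions \ref{ass:consistency} and \ref{ass: positivity}. I would refer the reader to Appendix \ref{appendix_varcace} for the full pair-by-pair treatment.
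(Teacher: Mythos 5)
Your overall strategy (decompose $\Var(\widehat{\tau})$ into the two marginal variances minus twice the covariance, reduce the claim to $\E[\widehat{\mathbb{C}}]\leq \mathbb{C}$, and handle non-estimable terms by the bound $ab\leq \tfrac12(a^2+b^2)$) is the right one, and it is essentially the Aronow--Samii argument that the paper invokes without reproducing. But there is a genuine error at the pivot of your computation: the claim that ``the diagonal contributes zero to the true covariance'' because a single unit cannot simultaneously be in conditions $(w,g)$ and $(w',g')$. Mutual exclusivity makes the \emph{product of indicators} zero, not the covariance: for the same unit $ik$,
\begin{equation*}
\mathbb{C}\Big(\I(\Wik=w,\Gik=g)\tfrac{\Yik(w,g)}{\pi_{ik}(w,g)},\;\I(\Wik=w',\Gik=g')\tfrac{\Yik(w',g')}{\pi_{ik}(w',g')}\Big)=0-\Yik(w,g)\Yik(w',g'),
\end{equation*}
which is nonzero in general. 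Hence the first sum of $\widehat{\mathbb{C}}$ is unbiased only for the \emph{off-diagonal} part of the covariance, and your identity $\E[\widehat{\mathbb{C}}]=\mathbb{C}-(\text{nonnegative})$ is built on a wrong expression for $\mathbb{C}$. This diagonal term is not a technicality: the product $\Yik(w,g)\Yik(w',g')$ involves two potential outcomes of the same unit that are never jointly observed, so it cannot be estimated unbiasedly, and this is precisely why the variance estimator of $\widehat{\tau}$ is only conservative while those of $\widehat{\mu}_{(w,g)}$ can be unbiased.

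The fix is the very inequality you deploy for the zero-joint-probability cross-pairs, applied also (indeed primarily) to the diagonal: the correction sum in $\widehat{\mathbb{C}}$ has expectation $-\tfrac12[\Yik(w,g)^2+\Yik(w',g')^2]$ per unit (times the multiplicity coming from the $\sum_{j\neq i}$ in the paper's display), and since $\tfrac12[\Yik(w,g)^2+\Yik(w',g')^2]\geq \Yik(w,g)\Yik(w',g')$ pointwise, the estimator's expectation falls below the true covariance term by term; the same pointwise bound handles cross-pairs with $\pi_{ikjk}(w,g;w',g')=0$, whose true contribution $-\Yik(w,g)Y_{jk}(w',g')$ is likewise dropped from the first sum. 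Once the diagonal is accounted for this way, your chain $-2\E[\widehat{\mathbb{C}}]\geq-2\mathbb{C}$ combined with unbiasedness (or, in the degenerate case, conservativeness) of the marginal variance estimators gives the proposition, and the restriction of all sums to within-cluster, within-leaf pairs changes nothing --- which is exactly the observation the paper makes when deferring to \cite{aronow2017estimating}.
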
 
\noindent A proof follows from \cite{aronow2017estimating}. 

\begin{proposition}[Consistency \review{of the estimator}]
\label{prop: consistency}
Consider the asymptotic regime where the number of clusters K go to infinity, i.e., $K \longrightarrow \infty$, while the cluster size remains bounded, i.e., $n_k \leq B(\ell(\vx))\leq B$ for some constant B. In addition, assume that $|Y_{ik}(w,g)|/\pi_{ik}(w,g)\leq C<1$, $\forall i,k, w,g$. Then, \review{under Assumptions \ref{ass:cni}, \ref{ass:consistency}, \ref{ass: bing}, \ref{ass:ita}, and \ref{ass: positivity},} as $K \longrightarrow \infty$
$$\widehat{\tau}_{(w,g;w',g')}(\ell(\vx)) \overset{p}{\longrightarrow} \tau_{(w,g;w',g')}(\ell(\vx)).$$
\end{proposition}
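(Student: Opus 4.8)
The plan is to show $\widehat{\mu}_{(w,g)}(\ell(\vx)) \overset{p}{\longrightarrow} \mu_{(w,g)}(\ell(\vx))$ for each $(w,g)$, after which consistency of $\widehat{\tau}_{(w,g;w',g')}(\ell(\vx))$ follows by the continuous mapping theorem applied to the difference. Since $\widehat{\mu}_{(w,g)}(\ell(\vx))$ is unbiased by Proposition \ref{prop:unbiaseness}, by Chebyshev's inequality it suffices to show that $\Var\big(\widehat{\mu}_{(w,g)}(\ell(\vx))\big)\to 0$ as $K\to\infty$. I would write $\widehat{\mu}_{(w,g)}(\ell(\vx)) = \frac{1}{N(\ell(\vx))}\sum_{k=1}^K Z_k$, where $Z_k = \sum_{i=1}^{n_k} \frac{Y_{ik}}{\pi_{ik}(w,g)}\I(W_{ik}=w,G_{ik}=g,\vXik\in\ell(\vx))$ is the (random) contribution of cluster $k$. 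The key structural fact, which I would invoke first, is that under Assumption 4 (independent treatment allocation between clusters) together with the clustered network interference assumption (Assumption \ref{ass:cni}), the random variables $Z_1,\ldots,Z_K$ are \emph{mutually independent}: each $Z_k$ depends only on $\vW_k$, and the $\vW_k$ are independent across $k$. This is exactly where clustered interference (as opposed to general network interference) is essential.

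Given independence, $\Var\big(\widehat{\mu}_{(w,g)}(\ell(\vx))\big) = \frac{1}{N(\ell(\vx))^2}\sum_{k=1}^K \Var(Z_k)$. Next I would bound $\Var(Z_k)$ uniformly in $k$: by the boundedness assumption $|Y_{ik}(w,g)|/\pi_{ik}(w,g)\leq C$, each summand in $Z_k$ is bounded in absolute value by $C$, and there are at most $n_k \leq B$ of them, so $|Z_k|\leq CB$ almost surely, hence $\Var(Z_k)\leq (CB)^2$. (Here I use $n_k\le B(\ell(\vx))\le B$; note only units in the leaf contribute, so one could sharpen $B$ to the leaf-restricted bound, but the crude bound suffices.) Also $N(\ell(\vx)) = \sum_k N_k(\ell(\vx))$ where $N_k(\ell(\vx))=|V_k(\ell_m)|$ is the deterministic number of sample units in cluster $k$ falling in the leaf; since the leaf is non-degenerate, $N(\ell(\vx))\to\infty$ as $K\to\infty$, and in fact $N(\ell(\vx)) \geq c K$ for the clusters contributing at least one unit — more carefully, letting $K'$ be the number of clusters with $N_k(\ell(\vx))\ge 1$, we have $K'\to\infty$ and only these contribute nonzero $Z_k$. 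Then
\[
\Var\big(\widehat{\mu}_{(w,g)}(\ell(\vx))\big) \;=\; \frac{1}{N(\ell(\vx))^2}\sum_{k:\,N_k(\ell(\vx))\ge 1} \Var(Z_k) \;\leq\; \frac{K'\,(CB)^2}{N(\ell(\vx))^2} \;\leq\; \frac{(CB)^2}{N(\ell(\vx))} \;\longrightarrow\; 0,
\]
using $N(\ell(\vx))\ge K'$. Chebyshev then gives $\widehat{\mu}_{(w,g)}(\ell(\vx))\overset{p}{\to}\mu_{(w,g)}(\ell(\vx))$, and subtracting the two convergent sequences (for $(w,g)$ and $(w',g')$) yields $\widehat{\tau}_{(w,g;w',g')}(\ell(\vx))\overset{p}{\to}\tau_{(w,g;w',g')}(\ell(\vx))$.

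The main obstacle — really the only subtle point — is making the independence-across-clusters argument airtight: one must check that $\I(W_{ik}=w,G_{ik}=g,\vXik\in\ell(\vx))$ is a function of $\vW_k$ alone, which requires that the network exposure $G_{ik}=g(\vW_{k/i})$ depends only on within-cluster assignments (Assumption \ref{ass:cni}) and that the covariates $\vX_{ik}$ are non-random / pre-treatment (so $\vXik\in\ell(\vx)$ is deterministic given the sample). A secondary point worth stating explicitly is why $N(\ell(\vx))\to\infty$: this needs the leaf to contain units from infinitely many clusters as $K\to\infty$, which is implicit in the finite-population growth regime described before the proposition (the leaf is a fixed region of covariate space and the per-cluster composition is drawn from a fixed distribution of cluster types). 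Everything else is a routine variance bound; the assumption $C<1$ is not even needed for consistency ($C<\infty$ suffices) and is presumably retained for compatibility with other results.
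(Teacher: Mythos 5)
Your proof is correct and follows essentially the same route as the paper's: unbiasedness of $\widehat{\mu}_{(w,g)}(\ell(\vx))$, a variance bound of order $O(1/N(\ell(\vx)))$ obtained from the bounded cluster size and $|Y_{ik}(w,g)|/\pi_{ik}(w,g)\leq C$, and then Chebyshev plus Slutsky for the difference. The only difference is organizational: you bound the variance through independent per-cluster sums $Z_k$ with $\Var(Z_k)\le (CB)^2$, whereas the paper writes out the Horvitz--Thompson variance with its within-cluster covariance terms and bounds the two sums directly; your remarks that $C<\infty$ suffices and that $N(\ell(\vx))\to\infty$ is an implicit requirement of the growth regime apply equally to the paper's argument.
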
 
\begin{proof}
See Appendix \ref{app:proofs}.
\end{proof}
\noindent Note that cluster network interference (Assumption \ref{ass:cni}) and independent treatment allocation between clusters ensure that the amount of dependence across units is limited. This limited independence is the condition required to ensure consistency \citep{aronow2017estimating}.\footnote{Note that for the variance of the estimator to go to zero as $N\longrightarrow \infty$ one must have that:
$$\sum_{ik}\sum_{jk'}  \I(\vXik \in \ell(\vx),\mathbf{X}_{jk} \in \ell(\vx))\I\bigl(\pi_{ikjk'}(w,g)-\pi_{ik}(w,g)\pi_{jk'}(w,g)\neq 0\bigl)=
o(N(\ell(\vx))^2).$$ 
\noindent This is guaranteed given that the joint treatment is independent between units in different clusters, that is, $\pi_{ikjk'}(w,g)=\pi_{ik}(w,g)\pi_{jk'}(w,g), \forall k\neq k'$, and given that the cluster size is bounded.}
 
\begin{proposition}[Asymptotic Normality]
\label{prop:asy}
\review{Given Assumptions \ref{ass:cni}, \ref{ass:consistency}, \ref{ass: bing}, \ref{ass:ita}, and \ref{ass: positivity},} then:
$$\sqrt{N(\ell(\vx))}\Big(\widehat{\tau}_{(w,g;w',g')}(\ell(\vx)) -\tau_{(w,g;w',g')}(\ell(\vx))\Big) \overset{d}{\longrightarrow} N\bigg(0, \Var\Big(\widehat{\tau}_{(w,g;w',g')}(\ell(\vx))\Big)\bigg).$$
\end{proposition}
An independent treatment allocation between clusters (Assumption \ref{ass:ita}) and the clustered network interference (Assumption \ref{ass:cni}) ensures the limited dependence condition required in \cite{aronow2017estimating}. This condition allows us to rely on a central limit theorem derived via Stein's method \citep{chenshao2004} to achieve the asymptotic normality of the estimator. 
 The variance estimators will depend on the size of the sample belonging to leaf $\ell(\vx)$ in each cluster, i.e., $n_k(\ell(\vx))\leq B(\ell(\vx))\leq B, k=1=\dots,K$, and the maximum conditional degree:
 \begin{equation*}
     D(\ell)=\text{max}_{ik\in V: \vXik\in \ell(\vx)}\mathcal{N}^g_{ik}(\ell(\vx)) \:\:\:\: \text{where} \:\:\:\: \mathcal{N}^g_{ik}(\ell)=\mathcal{N}^g_{ik} \cap V_k(\ell(\vx)).
 \end{equation*}
 Given that these quantities are bounded, we can show that $\Var\Big(\Var\Big(\widehat{\tau}_{(w,g;w',g')}(\ell(\vx))\Big)\Big)=O(1/N(\ell(\vx)))$, that is, the rate of convergence will be $1/\sqrt{N(\ell(\vx))}$, with $N(\ell(\vx))\leq KB(\ell(\vx))$ (the proof follows the one in \cite{aronow2017estimating}).

It is worth noting that, thanks to the asymptotic growth assumed here, Proposition \ref{prop:asy} would still hold if the covariates' vector $\vX$ would include network covariates defined both at the cluster-level or at the individual-level. This result would allow us to investigate the heterogeneity of causal effects with respect to network characteristics, including variables defining a cluster network structure or the structure of the network neighborhood around a node. 


\section{Network Causal Trees for Heterogeneous Causal Effects under Clustered Network Interference}  
\label{sec: ctrees}

In the previous section, we have introduced and developed an estimator for causal effects conditional on sub-populations of units defined by a partition $\Pi$ of the covariate space $\mathcal{X}$. Here we develop a data-driven machine learning algorithm to identify the partition $\Pi$ aimed at investigating the heterogeneity in the effects of interest. 

Our proposed algorithm, named \emph{Network Causal Tree} (NCT), builds upon the Causal Tree (CT) algorithm introduced by \cite{athey2016recursive}, which in turn finds its roots in the Classification and Regression Tree (CART) algorithm \citep{friedman1984classification}.
CART is a widely used nonparametric method to partition the feature space. It relies on a tree-based algorithm that recursively splits the sample. In particular, trees are constructed by recursively partitioning the observations from the \textit{root} (that contains all the observations in the learning samples) into two \textit{child nodes}. This procedure is repeated until the tree reaches the final nodes which are called \textit{leaves}. Because each node is always split into two sub-nodes, these trees are called \textit{binary trees}.

Binary trees are called \textit{regression trees} when the outcome is a continuous variable, while they are called \textit{classification trees} when the outcome is either a discrete or a binary variable. The aim of the tree construction is to identify heterogeneities in the relationship between the observed outcome and the features to best predict the outcome variable. Therefore, splits are made with the aim of minimizing the prediction error. With this aim, different splitting criteria could be specified. For additional details on CART, we refer to the seminal paper by \cite{friedman1984classification}. Figure \ref{fig:CART} illustrates an example of binary partitioning in a simple case with just two predictors $x_1 \in [0,1]$ and $x_2 \in [0,1]$.

\begin{figure}
    \centering
    \begin{subfigure}[b]{0.48\textwidth}
		\centering
		\begin{tikzpicture}[level distance=80pt, sibling distance=50pt, edge from parent path={(\tikzparentnode) -- (\tikzchildnode)}]
		\tikzset{every tree node/.style={align=center}}
		\Tree [.\node[draw]{$x_1<0.6$}; \edge node[auto=right,pos=.6]{No}; \node[circle,draw]{$l_1$}; \edge node[auto=left,pos=.6]{Yes};[.\node[draw]{$x_2 > 0.2$}; \edge node[auto=right,pos=.6]{No}; \node[circle,draw]{$l_2$}; \edge node[auto=left,pos=.6]{Yes}; \node[circle,draw]{$l_3$};  ]]
		\end{tikzpicture}
		\label{fig:example_tree}
    \end{subfigure}
    \begin{subfigure}[b]{0.48\textwidth}
    \includegraphics[width=\textwidth]{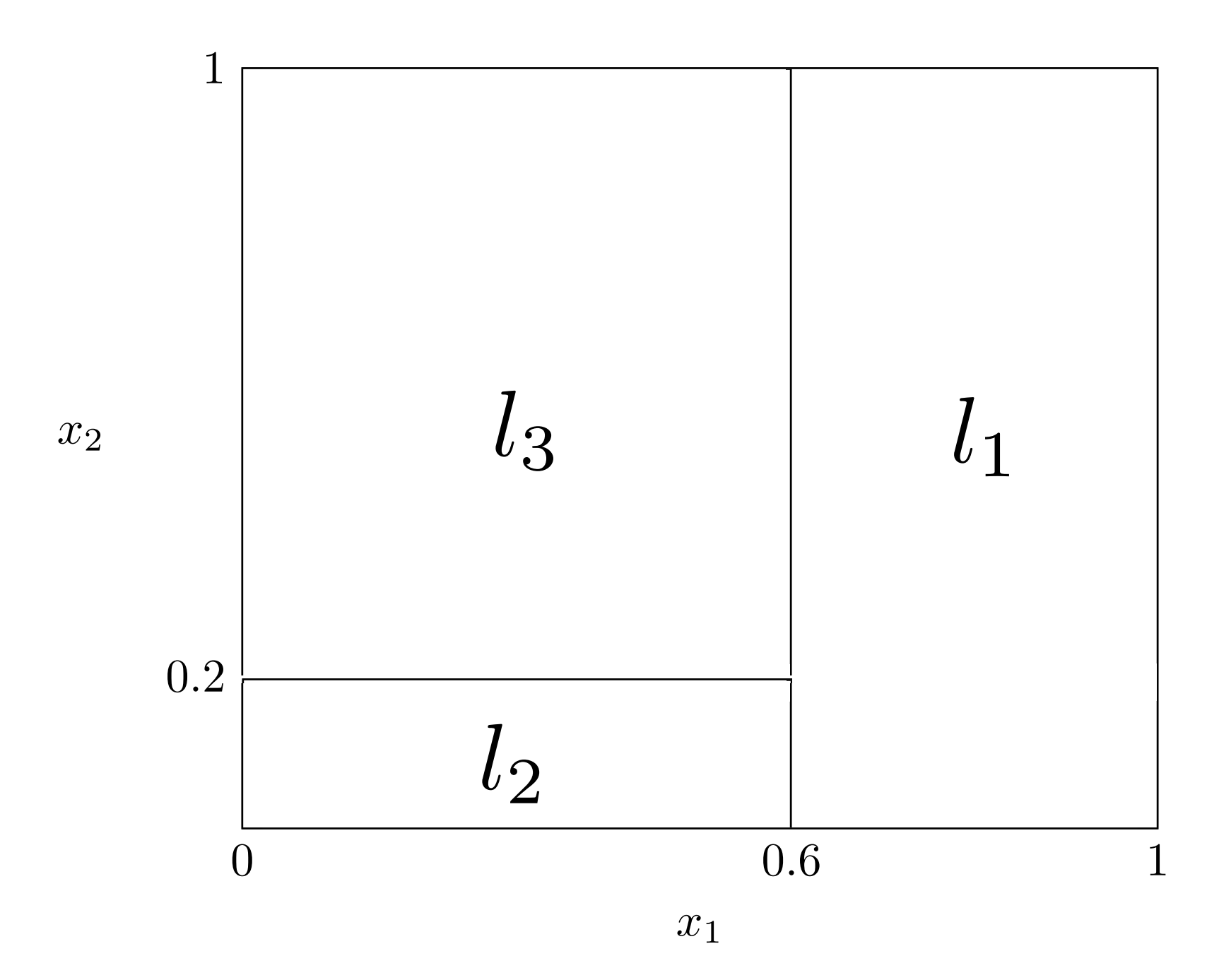}
    \end{subfigure}
    \caption{{(Left) An example of a binary tree. The internal nodes are labeled by their splitting rules and the terminal nodes are labeled with the corresponding parameters $l_i$.\\ (Right) The corresponding partition of the sample space.}}
    \label{fig:CART}
\end{figure}

Building on CART, \cite{athey2016recursive} developed a causal decision tree algorithm with the aim of detecting the heterogeneity of causal effects. In particular, they modified the splitting function to minimize the estimation error of conditional effects. Moreover, \cite{athey2016recursive} introduced honest inference by using a sub-sample to build the tree (training or discovery sample) and a separate sub-sample to perform inference (estimation sample). This sample-splitting approach is transparent and efficient even in high-dimensional settings.

Our proposed NCT differs from the standard causal tree algorithm in two critical aspects: (i) it estimates heterogeneous causal effects -- both treatment and spillover effects -- in the presence of clustered network interference,  and (ii) it possibly models heterogeneity with respect to more than one effect at the same time through a composite splitting criterion. In this Section, we describe and motivate the splitting criteria for our NCT algorithm (Subsection \ref{subsec: split}) and its detailed structure (Subsection \ref{subsec: nctalg}).

\subsection{Splitting Criteria} \label{subsec: split}

The NCT algorithm is built to detect and estimate heterogeneous treatment and spillover effects, in the presence of clustered network interference. Moreover, NCT is able to discover the heterogeneity with respect to more than one estimand. Here, we present the three criteria that rule the splitting procedure of NCT, targeted to single effects or multiple effects.

\subsubsection{Single splitting criteria}
Let $\mathds{P}$ be the space of partitions. Given a causal effect $\tau_{(w,g;w',g')}$, we can use recursive splitting to look for the best partition $\Pi \in \mathds{P}$ with respect to a splitting criterion $Q_{(w,g;w',g')}(\Pi)$. Formally:

\begin{equation}
     \Pi=\text{argmax}_{\Pi \in \mathds{P}}   \, Q_{(w,g;w',g')}(\Pi).
\end{equation}
Given our goal of describing the relationship between the causal effect and the covariate space and detecting subsets that exhibit a high level of heterogeneity, we can define a splitting criterion that 
maximizes accuracy in the prediction of conditional effects $\tau_{(w,g;w',g')}(\vX_{ik})$ in the whole sample V. This translates into the minimization of the expected value of the mean square error (MSE):
\begin{equation}
\label{eq:splitfun}
     Q_{(w,g;w',g')}(\Pi)= - EMSE\Big(\widehat{\tau}_{(w,g;w',g')}(\ell(\vx), \Pi)\Big) 
       =- \E\Big[\Big(\tau_{(w,g;w',g')}(\vx)-\widehat{\tau}_{(w,g;w',g')}(\ell(\vx, \Pi)\Big)^2\Big]
\end{equation}
where the expected value is taken over the sampling distribution. When this splitting criterion is used to select the partition $\Pi$, we maximize the function in \eqref{eq:splitfun} evaluated in the sample used to build the tree, i.e., the \emph{training set}. In this case, in the machine learning literature, the objective function is referred to as the \emph{in-sample splitting function} and we denote this by $Q^{in}_{(w,g;w',g')}(\Pi)$.

As opposed to the EMSE of the observed outcome prediction, the true causal effect $\tau_{w,g;w',g'}(\vx)$ is unknown. However, we can use the training data to estimate the EMSE for the in-sample spitting rule. 
Thanks to the unbiasedness of the estimator $\widehat{\tau}_{(w,g;w',g')}(\ell(\vx), \Pi)$ with respect to the population causal effect $E[\tau_{(w,g;w',g')}(\vXik)|\vXik\in \ell(\vx)]$, following \cite{athey2016recursive} we can estimate the EMSE as follows:
\begin{equation}
\label{eq:splitfun_in}
     Q^{in}_{(w,g;w',g')}(\Pi)= - \widehat{EMSE}\Big(\widehat{\tau}_{(w,g;w',g')}(\ell(\vx), \Pi)\Big)=\frac{1}{N^{tr}}\sum_{k\in  \mathcal{K}^{tr}}\sum_{i=1}^{n_k} \big(\widehat{\tau}_{(w,g;w',g')}(\ell(\vXik, \Pi))\big)^2
\end{equation}
where $\mathcal{K}^{tr}$ is the subset of clusters belonging to the training set and $N^{tr}$ is the sample size.\footnote{In standard CART the training set is a subset of the whole sample together with the testing set, which is used to evaluate the objective function in order to choose the best partition selected in the training set that maximizes out-of-sample prediction accuracy.} Therefore, the maximization of this splitting function results in the maximization of the heterogeneity across leaves. In fact, if two sub-populations $\ell_1$ and $\ell_2$ have a different causal effect $\tau_{(w,g;w',g')}$, i.e., $\tau_{(w,g;w',g')}(\ell_1)\neq \tau_{(w,g;w',g')}(\ell_2) $, a partition $\Pi$ that splits them would yield a higher $Q_{(w,g;w',g')}(\Pi)$ than the  partition $\Pi^c$ that combines the two sub-populations into one leaf $\ell_{1+2}$ (a simple proof can be found in Appendix \ref{app:proofs}).

To avoid using the same information for selecting the partition and for the estimation, \cite{athey2016recursive} propose to estimate the effects in a separate sample from the one used to build the tree. They call this an `honest' causal tree. We denote by $V^{est}$ the estimation set and by $V^{tr}$ the training set (or discovery set) that is used to build the tree (by evaluating the splitting function). The training set and the estimation set are here obtained by taking two random subsets $\mathcal{K}^{tr}$ and $\mathcal{K}^{est}$ of the clusters in $\mathcal{K}$. Hence, $V^{est}=\bigcup_{k\in \mathcal{K}^{est}} V_k$ and $V^{tr}=\bigcup_{k\in \mathcal{K}^{tr}} V_k=V/V^{est}$.  This random split of the sample avoids any dependencies between training and estimation sub-samples. In this `honest' version, the splitting function can be estimated as follows:
\begin{equation}
\label{eq:honest}
   Q_{(w,g;w',g')}^{in,H}(\Pi)=\frac{1}{N^{tr}}\sum_{k\in  \mathcal{K}^{tr}}\sum_{i=1}^{n_k}\big(\hat{\tau}_{(w,g;w',g')}(\ell(\vXik, \Pi))\big)^2 
   -\Big(\frac{1}{N^{tr}}+\frac{1}{N^{est}}\Big)
   \sum_{\ell\in \Pi}\widehat{\Var}\Big(\widehat{\tau}_{(w,g;w',g')}(\ell; \Pi)\Big)
\end{equation}
where $N^{est}=|V^{est}|$.  The proof follows from \cite{athey2016recursive}. In \eqref{eq:honest} we can see that the splitting function is such that splits will be chosen so to maximize the heterogeneity across leaves as well as to minimize the average variance in the estimated effects. The idea is to identify the most heterogeneous partitions while introducing a penalization term that corrects the objective function to minimize the leaf-specific variation in the estimated effect. This penalization term has also the effect of reducing the depth of the tree because leaves with a small number of observations $N(\ell)$ will exhibit a higher variance. Hence, the final depth of the tree will depend on the sample size as well as on the randomization design and the exposure mapping function. In addition to this penalization, we will also add a stopping rule based on a minimum number of observations for each condition $(w,g)$ and $(w',g')$ that we are comparing. This stopping rule is required to avoid having leaves where the effect $\tau_{(w,g;w',g')}$ cannot be estimated because there are no observations with observed treatment $(W_{ik},G_{ik})$ equal to the values $(w,g)$ or $(w',g')$.

\subsubsection{Composite splitting criterion}
We now introduce a composite splitting rule targeted to multiple causal estimands.
When interference is in place targeting strategies might involve both treatment and spillover effects. For example, in settings with limited resources, the treatment should be provided to those who would benefit from it, i.e., with a non-zero treatment effect,  whereas we could save resources by not giving the treatment to those who would benefit from other people being treated, namely units with high spillover effects. For instance, this is the case in marketing interventions where we can provide advertisements only to those who would be affected and who are less likely to get the information from someone else.
Another interesting example can be found in the potential challenges of the COVID-19 vaccine distribution. Those at high risk of getting infected, even if those in close contact were immune, should be targeted. These would be individuals who are often in crowded spaces, either in the workplace or in a social setting, and would likely get infected in these environments. Therefore, these individuals are characterized by a high treatment effect even with all close contacts being treated. On the contrary, those who are in contact with a low number of people and can greatly gain from having one of these contacts vaccinated could be left without the vaccine, at least in the early stages of the distribution. 

For these kinds of targeting strategies involving more than one causal effect, we must partition the population into sub-groups that show a high level of heterogeneity in all estimands of interest. Building a separate tree for each causal effect would provide us with different partitions that cannot be used for the design of multi-effect strategies. Therefore, we propose a composite splitting function that would result in a tree that maximizes heterogeneity in all the causal estimands of interest. This composite objective function is a weighted average of the effect-specific splitting functions:
\begin{equation}
\label{qcom}
     Q_{\mathcal{T}}(\Pi)=\sum_{(w,g;w',g') \in \mathcal{T}} \gamma_{(w,g;w',g')} Q_{(w,g;w',g')}(\Pi) \quad\text{with}\quad \gamma_{(w,g;w',g')}=\frac{\omega_{(w,g;w',g')}}{\big(\hat{\tau}^{(w,g;w',g')}\big)^2 }
\end{equation}
where $\omega_{(w,g;w',g')}\in [0,1]$ is a customized weight for each estimand and
 $\hat{\tau}_{(w,g;w',g')}$ is the estimated effect in the whole sample.
 Each effect $\tau_{(w,g;w',g')}$, where $(w,g;w',g') \in \mathcal{T}$, contributes to the global objective function according to a specific weight $\gamma_{(w,g;w',g')}$. $\gamma_{(w,g;w',g')}$ is proportional to a customized weight $\omega_{(w,g;w',g')}$, which is set by the researcher according to the extent to which the estimand $\tau_{(w,g;w',g')}$ is of interest, and is normalized by the estimated effect in the whole sample to rule out any dependence on the magnitude of the effect. The composite criterion requires that at least two of the four weights are strictly greater than zero. A similar composite objective function can be derived from the splitting functions for the `honest' causal trees:
 \begin{equation}
\label{qcomb_H}
     Q^H_{\mathcal{T}}(\Pi)=\sum_{(w,g;w',g') \in \mathcal{T}} \gamma_{(w,g;w',g')} Q^H_{(w,g;w',g')}(\Pi) \quad\text{with}\quad \gamma_{(w,g;w',g')}=\frac{\omega_{(w,g;w',g')}}{\big(\hat{\tau}^{(w,g;w',g')}\big)^2 }.
\end{equation}

\subsection{Network Causal Tree (NCT) Algorithm} \label{subsec: nctalg}

Compared with the standard HCT algorithm, the main novelties of NCT are the introduction of interference and the possibility of including more than one effect. Specifically, the extent to which each effect $\tau_{(w,g;w',g')}$, with $(w,g;w',g') \in \mathcal{T}$, contributes to the determination of the tree is specified by the weight $w(w,g;w',g')$. Here we describe the key steps of the NCT algorithm, including the recursive partitioning based on the splitting functions and the stopping rules.

\subsubsection{Key steps of the NCT algorithm}
The proposed algorithm takes mainly six elements as inputs: 
\begin{enumerate}
\item the sample $V$, which collects for each unit $ik$ the individual treatment assignment status $\Wik$, the observed outcome $Y_{ik}$ and a vector of characteristics $\vXik$;
\item the network information, which is fully described by the global adjacency matrix $\boldsymbol{A}$, including the cluster-specific blocks $\boldsymbol{A}_k$;
\item the specification of the exposure mapping function $g(\cdot)$ which, together with the adjacency matrix and possibly covariate matrix, will be translated in the computation of the observed network exposure $G_{ik}$ for each unit;
\item the experimental design which will determine the computation of the probabilities $\pi_{ik}(w,g)$ and $\pi_{ik}(w,g, w',g')$; 
\item the weight $\omega_{w,g,w'g'} $for each causal effect;
\item the specification of the two parameters: \textsl{maximum depth}, that is the maximum depth of the tree, and the \textsl{minimum size}, that is the minimum number of units falling in each exposure condition $(w,g)$ in each leaf. 
\end{enumerate}
After some preliminary steps, the algorithm consists of two main steps. The first step is focused on the selection of the partition, i.e. the tree, while the second step is concerned with the estimation of causal effects and returns point estimates and standard errors of the conditional average causal effects, for all the comparisons of interest and within each leaf of the detected partition. We report below the key steps of the NCT algorithm:

		\begin{enumerate} 
		\setcounter{enumi}{-1}
		\item \textbf{Step 0} (Preliminaries): In a preliminary stage the algorithm computes the quantities and tools that will be used in the subsequent steps. In particular:
	\begin{enumerate}
	    \item 
	    Given the adjacency blocks $\boldsymbol{A}_k$ and potentially the covariate matrix $\boldsymbol{X}$, for each unit the \emph{network exposure} variable $G_{ik}$ is computed according to the rule expressed in Assumption \ref{ass: bing}.
		\item The joint exposure probabilities $\pi_{ik}(w,g)$ and $\pi_{ikjk'}(w,g;w',g')$ are computed (as in Subsection \ref{subsec: bernoulli}) or estimated (as in \cite{aronow2017estimating}).
		\item Finally, the algorithm randomly splits the clusters between the training set $V^{est}$ and the estimation set $V^{est}$.\footnote{Following \cite{athey2016recursive} we suggest assigning half of the clusters to the discovery sample and another half to the estimation sample.} 
  	\end{enumerate}
	 \item \textbf{Step 1} (Tree Discovery): the first step of the algorithm sprouts the tree structure of the NCT, that is, it detects the relevant heterogeneous partitions. Note that this step is performed over the discovery set only. In particular, the NCT algorithm works with the clusters belonging to the set $\mathcal{K}^{tr}$ and builds the tree using binary recursive partitioning.
		\begin{enumerate} 
			\item Recursive Partitioning. The algorithm \emph{grows a tree} by maximizing the in-sample splitting criterion at each binary split. At iteration $r-1$ the partition can be represented as follows: $$\Pi^{r-1}=\{\vx\in \mathcal{X}: \bigcap_{m=1}^{r-1} \,x^m\in \mathcal{A}^{h_m}_m\}_{\mathbf{h}\in \{L,R\}^{r-1}}$$ where $x^m \in \{x_{p}\}_{p=1,\dots, P}$ is the feature that was split at iteration $m$ and $\mathcal{A}^{L}_m=\{x^m\leq c_m\}$, $\mathcal{A}^{L}_m=\{x^m\> c_m\}$ for some cutoff point $c_m$. The variable $x^m$ split at iteration $m$ together with the cutoff point $c_m$ compose a \textit{node} of the tree.
At iteration $r$, the partition will be complemented with a split of a variable $x^r \in \{x_{p}\}_{p=1,\dots, P}/\{x^m\}_{m=1\dots,r-1}$ at some cutoff point $c_r$: 
$$\Pi^r=\{\vx\in \mathcal{X}: \bigcap_{m=1}^{r-1} \,x^m\in \mathcal{A}^{h_m}_m \bigcap x^r\in \mathcal{A}^{h_r}_r \}_{\mathbf{h}\in \{0,1\}^r}.$$
Among all the candidate splits $x^r$ and $c_r$, the algorithm will choose the one that maximizes the \textit{in-sample} splitting function in \eqref{eq:splitfun_in} or \eqref{eq:honest}.
\item Stopping Rule. The recursive partitioning stops when at least one \emph{stopping condition} is met: (i) the NCT has reached the specified \textsl{maximum depth}; (ii) the current split $r$ generates at least one leaf $\ell$ where the set of units $N(\ell)_{(w,g)}=\{ik \in V^{tr}: \vXik\in \ell, W_{ik}=w, G_{ik}=g\}$ with a number of observations $|N(\ell)_{(w,g)}|$ lower than the specified \textsl{minimum size}, for at least one exposure condition $(w,g)$.

		\end{enumerate}
This step generates a \emph{network causal tree} which corresponds to a partition $\Pi$ of the feature space $\mathcal{X}$ into $M$ leaves: $\Pi=\{\ell_{1}, \dots ,\ell_{M}\}$, with $\bigcup^{M}_{m:1}\ell_{m}=\mathcal{X}$ and $l(\vx,\Pi): \mathcal{X} \rightarrow \Pi $.
	\item \textbf{Step 2} (Estimation): the second step of the algorithm takes as input the Network Causal Tree $\Pi$ built in Step 1 and computes all the point estimates, the standard errors, and the confidence intervals of the leaf-specific causal effects of interest in all its nodes $\ell(\vx,\Pi)$. This is done using the Horvitz-Thompson estimator in Section \ref{sec:cace}. In the `honest' version, at this stage, the NCT algorithm works with the clusters belonging to the set $\mathcal{K}^{est}$.
      \end{enumerate}
      
	\begin{algorithm}[h]
		\caption{Overview of the NCT algorithm \label{alg:dis}}
		
		\begin{itemize}
		    \item \textbf{Inputs}: i) observed data $\{\Wik, Y_{ik}, \vXik \}_{ik\in V}$; ii) global adjacency matrix $\boldsymbol{A}$, which comprises the cluster-specific blocks $\boldsymbol{A}_k$; iii) experimental Design;  iv) vector of weights $\omega(w,g;w',g')$, where $(w,g;w',g') \in \mathcal{T}$; v) tree parameters: \textsl{maximum depth} and \emph{minimum size}.

            \item \textbf{Outputs}: (1) a partition $\Pi$ if the covariate space, and (2) point estimates,  standard errors, and confidence intervals of the conditional average causal effects:
	
		\begin{enumerate} 
		    \item Step 0 (Preliminaries): compute $G_{ik}$ and both the marginal and joint exposure probabilities $\pi_{ik}(w,g)$ and $\pi_{ikjk'}(w,g;w',g')$. Then, randomly assign clusters to discovery and estimation samples.
		
			\item Step 1 (Tree Discovery): using the discovery sample, build a tree according to the in-sample splitting criterion and stop when either the tree has reached its maximum depth or any additional split would generate leaves, that are not sufficiently representative of the four exposure conditions.
			
			\item Step 2 (Estimation): use the Horvitz-Thompson estimator on the estimation sample to estimate the leaf-specific CACE and their standard errors in each leaf.
			\end{enumerate}

		\end{itemize}

	\end{algorithm}

\section{Simulation Study} \label{sec:simulations}

Our algorithm provides an interpretable method to detect and estimate heterogeneous effects in the presence of clustered network interference. In this section we evaluate, through a set of simulations, the performance of the proposed algorithm with respect to both discovery and estimation. In particular, we investigate its ability to correctly identify the actual heterogeneous sub-populations, comparing the use of single or composite splitting functions, and we assess the performance of the Horvitz-Thompson estimator for leaf-specific treatment and spillover effects.  While the latter performance assessment is quite standard in the literature, the former is critical for the development of interpretable algorithms for heterogeneous causal effects  \citep{bargagli2019causal,lee2020causal}.

We evaluate the performance of the algorithm and the estimator in settings that differ with respect to three main factors: (1) the structure of the heterogeneity, (2) the extent of the effect heterogeneity, and (3) the number of clusters. In Appendix \ref{appendix:monte_carlo}, we also consider \review{three} additional factors: (4) the correlation structure in the covariate matrix, (5) the presence of homophily in the network structure, \review{and (6) a mixture of continuous and discrete covariates}. Regarding the structure of the heterogeneity, we are particularly interested in settings where the structure of the causal tree representing heterogeneity is different for each causal effect. In particular, causal trees differ if they have different nodes corresponding to the split of a feature, that is, if covariates driving the heterogeneity are different, or if they have different terminal leaves where the causal effect is heterogeneous, i.e., non-zero. We call \emph{causal rules} these heterogeneous terminal leaves.

For each simulation scenario, we simulated $M=500$ samples and applied our NCT algorithm to detect sub-populations with heterogeneous causal effects (or causal rules) and to estimate our causal effects of interest. To evaluate the performance of our composite splitting function under different settings splits rely on either effect-specific splitting criteria or on the composite function.

All simulations are performed under Bernoulli trials, that is treatment is randomly assigned independently to each unit with a fixed probability $\pi_{ik}^W=\alpha$. In our simulation study, we also assume that interference only takes place at the neighborhood level and we choose the following definition of network exposure: 
\begin{equation}
\label{eq:G}
    G_{ik}=\mathds{1}\bigg(\bigl(\sum_{j\in \Nik} W_{jk}\bigl) \geq 1 \bigg),
\end{equation}
that is, the network exposure of unit $ik$ is 1 if at least one neighbor is treated. A binary network exposure together with a binary individual treatment results in a joint treatment with four categories---i.e., $(\Wik, \Gik) \; \in \{ (w,g)=(0,0),(1,0),(0,1),(1,1)\} $. The binary definition of network exposure is chosen to allow the growth of deeper trees. Given that the minimum size requirement stops the algorithm when the number of units in a child leaf is not enough to estimate a conditional causal effect, a joint treatment with four categories ensures that this stopping condition is unlikely to be met during the first few splits. 

In addition, the assumption of neighborhood interference allows the computation of the marginal and joint probabilities without the need for intensive estimation procedures. In fact, the approximate algorithm for estimating the marginal and joint probabilities proposed by \cite{aronow2017estimating} is computationally demanding and could not be incorporated into our simulation study. However, in the case of Bernoulli trial and network exposure defined as in \eqref{eq:G}, the probability $\pi_{ik}(w,g)$ can be computed using the formula in \eqref{eq:ber_pi} while the joint probability $\pi_{ikjk'}(w,g, w',g')$ is simply the product of $\pi_{ik}(w,g)$ and $\pi_{jk'}(w',g')$ if the two units $ik$ and $jk'$ are independent. On the contrary, if $\mathcal{N}_{ik}^{wg}$ and $ \mathcal{N}_{jk'}^{wg}\neq 0$ overlap the joint treatment of the units $ik$ and $jk'$ will be dependent, that is, $\pi_{ikjk'}(w,g;w',g')\neq\pi_{ik}(w,g)\pi_{jk'}(w',g')$. In this case, the joint probability can still be readily computed using combinatorics formulas on two overlapping sets.

\subsection{Data generating process}

For each simulation $m=1\dots, M$ we generated $K$ clusters and simulated, within each cluster, Erd\H{o}s-R\'{e}nyi random graphs \citep{erdos1959random} with $n_k=100$ nodes and a fixed probability (0.01) to observe a link. Given the definition of the network exposure, we removed isolated nodes from the analysis to make the Assumption \ref{ass: positivity} hold.
$W_{ik}$ and any of the 10 covariates $X_{ip}$ were sampled from independent Bernoulli distributions with probability 0.5: 
$W_i \sim Ber(0.5)$ and $X_{ip} \sim Ber(0.5)$. 

In the simulation study, we focus on two main effects: (i) the pure treatment effect $\tau_{(1,0;0,0)}$, and (ii) the pure spillover effect $\tau_{(0,1;0,0)}$. To ease notation, we denote by $\tau$ the treatment effect and by $\delta$ the spillover effect. After setting the value of these two conditional effects for each unit with covariates $\vXik$ (depending on the simulation scenarios), we generated the four different potential outcomes: $Y_{ik}(0,0) \sim \mathcal{N}(0,1)$; $Y_{ik}(1,1) \sim \mathcal{N}(0,1)$; $Y_{ik}(1,0) = Y_{ik}(0,0)+ \tau(\vXik)$; and $Y_{ik}(0,1) = Y_{ik}(0,0)+ \delta(\vXik)$. Finally, the observed outcome is given by: 
\begin{equation*}
    Y_{ik} = \sum_{w=0}^1\sum_{g=0}^1 \mathds{1}(W_{ik}=w, G_{ik}=g) Y_{ik}(w,g). 
\end{equation*}

We now detail how we varied the three factors (1), (2), and (3). We simulated two different scenarios with respect to the heterogeneity structure (1). In the first scenario, we have: 
\begin{small}
\begin{equation*}
\hspace{-0.8cm}
    \tau(\vXik)=
    \begin{cases}
    \,\,h \quad \text{if}\quad \vXik\in \ell_{1}=\{X_{ik1}=0,X_{ik2}=0\};\\
    \!\!-h \quad \text{if}\quad \vXik\in \ell_{2}=\{X_{ik1}=1,X_{ik2}=1\};\\
    0 \quad \text{otherwise};
    \end{cases}
\quad
    \delta(\vXik)=
    \begin{cases}
    \,\,h \quad \text{if}\quad \vXik\in \ell_{1}=\{X_{ik1}=0,X_{ik2}=0\};\\
    \!\!-h \quad \text{if}\quad \vXik\in \ell_{2}=\{X_{ik1}=1,X_{ik2}=1\};\\
    0 \quad \text{otherwise}.
    \end{cases}
\end{equation*}
\end{small}
Hence, in this scenario, the heterogeneity driving variables (HDV)---i.e.,  $X_{i1}$ and $X_{i2}$---are the same for both the treatment effect $\tau$ and the spillover effect $\delta$ and the two causal rules overlap. In the second scenario, we introduce a change in the drivers of the heterogeneity in the following way:
\begin{small}
\begin{equation*}
\hspace{-0.8cm}
    \tau(\vXik)=
    \begin{cases}
    \,\,h \quad \text{if}\quad \vXik\in \ell_{\tau1}=\{X_{ik1}=0,X_{ik2}=0\};\\
    3h \quad \text{if}\quad \vXik\in \ell_{\tau2}=\{X_{ik1}=0,X_{ik2}=1\};\\
    0 \quad \text{otherwise};
    \end{cases}
\quad
    \delta(\vXik)=
    \begin{cases}
    \,\,h \quad \text{if}\quad \vXik\in \ell_{\delta1}=\{X_{ik1}=1,X_{ik3}=0\};\\
    3h \quad \text{if}\quad \vXik\in \ell_{\delta2}=\{X_{ik1}=1,X_{ik3}=1\};\\
    0 \quad \text{otherwise}.
    \end{cases}
\end{equation*}
\end{small}
Thus, in the second scenario, the heterogeneity drivers are different for the two causal effects. Specifically,  we have:  $X_{i1}$ and $X_{i2}$ for the treatment effects, and  $X_{i1}$ and $X_{i3}$ for the spillover effects. In addition, we have two causal rules for the treatment effect, namely $\{X_{ik1}=0,X_{ik2}=0\}$ and $\{X_{ik1}=0,X_{ik2}=1\}$ and two different causal rules for the spillover effect, namely $\{X_{ik1}=1,X_{ik3}=0\}$ and $\{X_{ik1}=1,X_{ik3}=1\}$. For each structural scenario, we varied the effect size: $h=(r+0.1)_{r=1}^{10}$. Figure \ref{fig: simultree} graphically represents the two simulation scenarios.
\begin{figure}
  \centering
  \begin{subfigure}{.45\textwidth}
  		\centering
		\begin{tikzpicture}[level distance=80pt, sibling distance=10pt, edge from parent path={(\tikzparentnode) -- (\tikzchildnode)}]
		\tikzset{every tree node/.style={align=center}}
		\Tree [.\node[draw]{$x_1=0$}; \edge node[auto=right,pos=.6]{Yes}; [.
		\node[draw]{$x_2 = 0$}; \edge node[auto=right,pos=.6]{Yes}; \node[circle,draw]{$\tau=h$\\ $\delta=h$}; \edge node[auto=left,pos=.6]{No}; \node[circle,draw]{$\tau=0$\\ $\delta=0$};]
		\edge node[auto=left,pos=.6]{No};[.\node[draw]{$x_2 = 0 $}; \edge node[auto=right,pos=.6]{Yes}; \node[circle,draw]{$\tau=0$\\ $\delta=0$}; \edge node[auto=left,pos=.6]{No}; \node[circle,draw]{\begin{footnotesize}
		$\tau=-h$
		\end{footnotesize}\\\begin{footnotesize}
		$\delta=-h$
		\end{footnotesize}};  ]]
		
		\end{tikzpicture}
		\caption{First scenario.}
  \end{subfigure} \hspace{0.1cm}
  \begin{subfigure}{.45\textwidth}
  		\centering
		\begin{tikzpicture}[level distance=80pt, sibling distance=10pt, edge from parent path={(\tikzparentnode) -- (\tikzchildnode)}]
		\tikzset{every tree node/.style={align=center}}
		\Tree [.\node[draw]{$x_1=0$}; \edge node[auto=right,pos=.6]{Yes}; [.
		\node[draw]{$x_2 = 0$}; \edge node[auto=right,pos=.6]{Yes}; \node[circle,draw]{$\tau=h$\\ $\delta=0$}; \edge node[auto=left,pos=.6]{No}; \node[circle,draw]{$\tau=3h$\\ $\delta=0$};]
		\edge node[auto=left,pos=.6]{No};[.\node[draw]{$x_3 = 0 $}; \edge node[auto=right,pos=.6]{Yes}; \node[circle,draw]{$\tau=0$\\ $\delta=h$}; \edge node[auto=left,pos=.6]{No}; \node[circle,draw]{$\tau=0$\\ $\delta=3h$};  ]]
		\end{tikzpicture}
		\caption{Second scenario.}
  \end{subfigure}
  \caption[Simulation scenarios]{Simulation scenarios.}
  \label{fig: simultree}
\end{figure}

Moreover, we changed the number of clusters keeping their size fixed to $K=(10,20,30)$.\footnote{Note that K/2 clusters will be assigned to the discovery sample and the remaining clusters will be in the estimation set as in \cite{athey2016recursive}.} For each scenario we built three NCTs: one tree implementing the composite splitting rule for the treatment and spillover effects as in \eqref{qcomb_H} (with $w_{(1,0;0,0)}=w_{(0,1;0,0)}=0.5$), one tree using the singular splitting rule for the treatment effect $Q^{in,H}_{(1,0;0,0)}(\Pi)$ as in \eqref{eq:honest}, and a third tree using the singular splitting rule for the spillover effect $Q^{in,H}_{(0,1;0,0)}(\Pi)$ as in \eqref{eq:honest}. 

\subsection{Results}

\review{We evaluate the performance of our NCT with respect to two dimensions: (i) its ability to correctly identify the actual heterogeneous sub-populations, and (ii) its performance in the estimation of the leaf-specific treatment and spillover effects. More details on the performance measures can be found in Section \ref{sec:performance_measures} of the Online Appendix.}

We start by analyzing the ability of the algorithm to correctly detect the heterogeneous subgroups in the first simulation scenario, that is, when the heterogeneity is the same for the two causal effects of interest. Figure \ref{fig:correct_leaves} reports the average number of correctly discovered heterogeneous causal rules with the composite splitting rule or effect-specific splitting rules targeted to the treatment effect or the spillover effect, in the case of 10, 20, and 30 clusters. Given that in this scenario the causal rules defining the heterogeneity are the same for both effects, all the splitting rules, targeted to a single effect or to both effects, have a similar performance and are able to detect the two heterogeneous leaves with a success rate that gets higher as the effect size increases. In addition, as the number of clusters grows the minimum effect size allowing the algorithm to optimally discover all the heterogeneous sub-populations gets lower.

Table \ref{tab:30clusters} reports the results for the first scenario \ref{tab:second_scenario} for the performance of the estimator in the correctly detected leaves in the case of 30 clusters. \review{Results for the cases of 10 and 20 clusters are reported in the Online Appendix.} We only report the results of the estimation procedure on the tree built with the composite splitting rule, as the spitting rule would only affect the identification of the heterogeneous sub-populations but not the estimation of the causal effects once these sub-populations are correctly detected. The estimator is able to estimate the heterogeneous treatment and spillover effects without bias and its precision grows as the number of clusters increases. Interestingly, NCT provides more accurate estimates of the heterogeneous spillover effects than the treatment effects. This is due to the larger number of units with $(W_{ik}=0, G_{ik}=1)$ than those with $(W_{ik}=1, G_{ik}=0)$, by definition of the network exposure (by design, units with $(W_{ik}=0, G_{ik}=1)$ are on average 50\% more than units with $(W_{ik}=1, G_{ik}=0)$). The higher precision in the estimation of the spillover effects is also reflected in the identification of the heterogeneous subgroups, which is more accurate when splits are targeted to the minimization of the MSE of the spillover effect (Figure \ref{fig:correct_leaves}).  

\begin{figure}[H]
\centering
\includegraphics[width=1\textwidth]{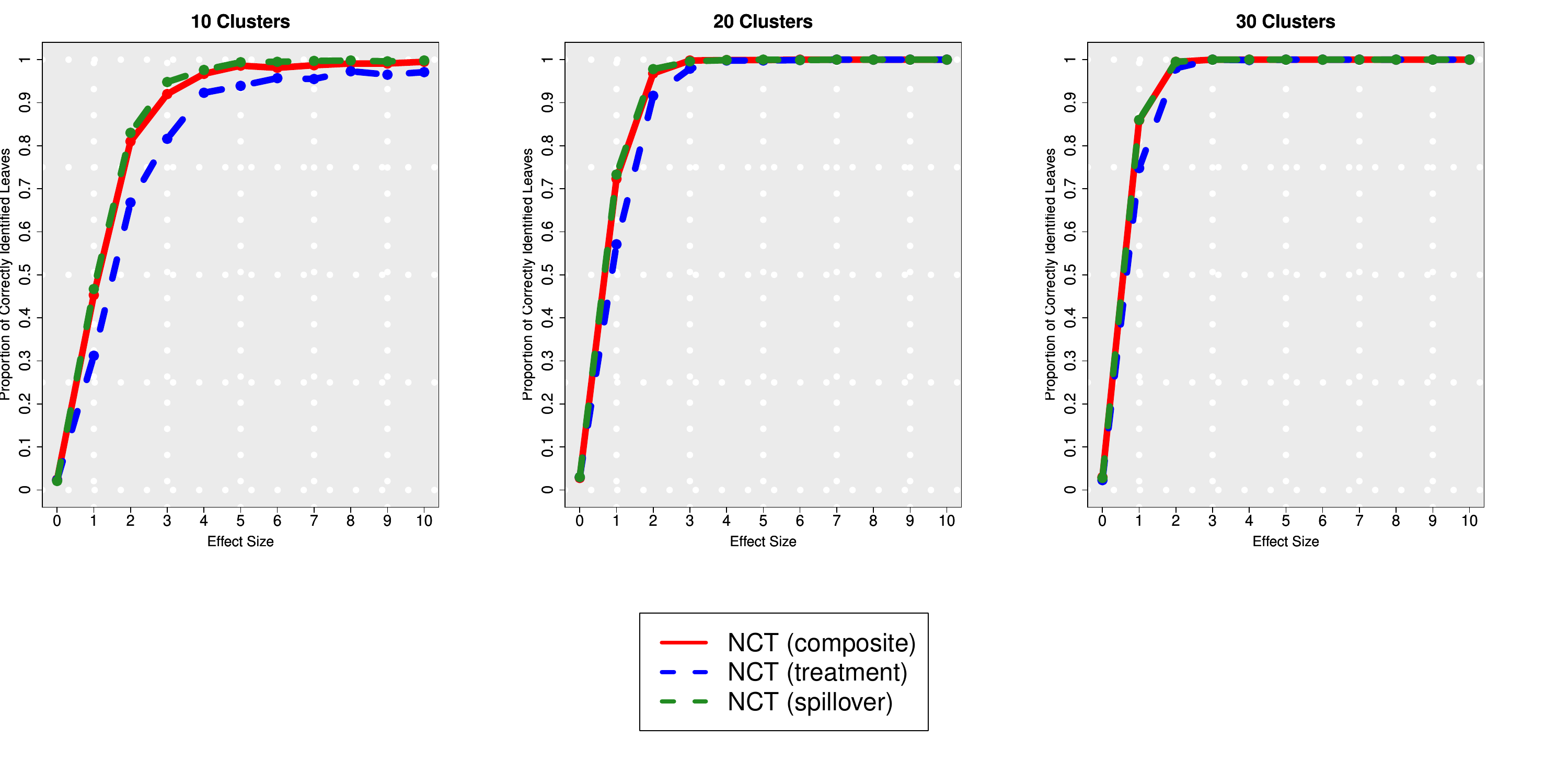}
		\caption{Simulations' results for correctly discovered leaves in the first scenario with 2 correct leaves and 10, 20, and 30 clusters, respectively.}
	\label{fig:correct_leaves}
\end{figure}

\vspace{-1cm}

\vspace{2cm}
\begin{table}[H]
\centering
\scriptsize

\begin{tabular}{cccccccc}
\multicolumn{1}{l}{}                  & \multicolumn{7}{c}{Treatment Effects}                                                                                                                 \\ \cline{2-8} 
\multicolumn{1}{l}{Effect Size}       & $\hat{\tau}_{\ell_1}$   & $\hat{se}(\hat{\tau}_{\ell_1})$   & $\hat{\tau}_{\ell_2}$   & $\hat{se}(\hat{\tau}_{\ell_2})$   & MSE   & Bias   & Coverage \\ \hline
0                                     & 0.140                   & 0.166                             & -0.146                  & 0.170                             & 0.034 & 0.143  & 0.929    \\
\rowcolor{gray!10} 1 & 1.027                   & 0.210                             & -1.024                  & 0.210                             & 0.041 & 0.025  & 0.969    \\
2                                     & 2.021                   & 0.302                             & -2.014                  & 0.304                             & 0.088 & 0.017  & 0.953    \\
\rowcolor{gray!10} 3 & 2.984                   & 0.410                             & -2.989                  & 0.407                             & 0.152 & -0.014 & 0.950    \\
4                                     & 3.982                   & 0.519                             & -3.979                  & 0.518                             & 0.246 & -0.019 & 0.959    \\
\rowcolor{gray!10} 5 & 4.962                   & 0.634                             & -4.978                  & 0.637                             & 0.362 & -0.030 & 0.957    \\
6                                     & 6.021                   & 0.761                             & -6.064                  & 0.758                             & 0.486 & 0.043  & 0.962    \\
\rowcolor{gray!10} 7 & 6.986                   & 0.875                             & -7.030                  & 0.884                             & 0.689 & 0.008  & 0.950    \\
8                                     & 7.940                   & 0.987                             & -8.012                  & 1.015                             & 0.978 & -0.024 & 0.966    \\
\rowcolor{gray!10} 9 & 8.951                   & 1.105                             & -9.104                  & 1.137                             & 1.125 & 0.027  & 0.962    \\
10                                    & 10.023                  & 1.244                             & -10.058                 & 1.270                             & 1.430 & 0.040  & 0.955    \\ \hline
\multicolumn{1}{l}{}                  & \multicolumn{7}{c}{Spillover Effects}                                                                                                                 \\ \cline{2-8} 
\multicolumn{1}{l}{}                  & $\hat{\delta}_{\ell_1}$ & $\hat{se}(\hat{\delta}_{\ell_1})$ & $\hat{\delta}_{\ell_2}$ & $\hat{se}(\hat{\delta}_{\ell_2})$ & MSE   & Bias   & Coverage \\ \hline
0                                     & 0.116                   & 0.141                             & -0.106                  & 0.150                             & 0.020 & 0.110  & 0.929    \\
\rowcolor{gray!10} 1 & 1.020                   & 0.170                             & -1.017                  & 0.171                             & 0.026 & 0.018  & 0.963    \\
2                                     & 2.004                   & 0.225                             & -2.006                  & 0.223                             & 0.040 & 0.005  & 0.973    \\
\rowcolor{gray!10} 3 & 2.990                   & 0.292                             & -2.995                  & 0.292                             & 0.065 & -0.007 & 0.975    \\
4                                     & 3.997                   & 0.368                             & -3.998                  & 0.368                             & 0.101 & -0.003 & 0.982    \\
\rowcolor{gray!10} 5 & 4.986                   & 0.445                             & -4.986                  & 0.444                             & 0.137 & -0.014 & 0.979    \\
6                                     & 6.041                   & 0.528                             & -6.012                  & 0.525                             & 0.185 & 0.026  & 0.986    \\
\rowcolor{gray!10}7  & 6.944                   & 0.603                             & -6.983                  & 0.606                             & 0.247 & -0.037 & 0.976    \\
8                                     & 8.003                   & 0.687                             & -7.957                  & 0.688                             & 0.297 & -0.020 & 0.981    \\
\rowcolor{gray!10} 9 & 9.022                   & 0.770                             & -9.006                  & 0.770                             & 0.391 & 0.014  & 0.981    \\
10                                    & 9.965                   & 0.848                             & -9.961                  & 0.851                             & 0.492 & -0.037 & 0.987    \\ \hline
\end{tabular}

\caption{Simulations' results for the first scenario (30 clusters)}
\label{tab:30clusters}
\end{table}

For the second scenario with different causal rules for each causal effect, we only report the results for simulations with 30 clusters. Figure \ref{fig:correct_leaves_4_rules} depicts the average number of correctly discovered heterogeneous causal rules with the composite splitting rule or effect-specific splitting rules targeted to the treatment effect or the spillover effect. 

When we are interested in building a tree that can represent the heterogeneity of all causal effects simultaneously (left panel), the composite splitting rule NCT is able to correctly identify all the heterogeneous causal rules (four in this example), while the other two NCT, targeted to either the treatment effect or the spillover effect, only detect the two leaves where the corresponding causal effect is heterogeneous. This is even more clear when looking at the other two plots, where we depict the ability to detect just the treatment effect rules (central panel) and the spillover effect rules (right panel). Indeed, when we are interested in subgroups that are heterogeneous with respect to only one causal effect, both the effect-specif spitting rule targeted to that effect or the composite spitting function can be used and perform similarly, while the use of the effect-specif spitting rule targeted to the other effect results in poor detection of the correct causal rules. The results from this second scenario show the clear added value of the composite splitting rule. Indeed, when the HDVs are different for treatment and spillover effects implementing this splitting rule enables the researcher to correctly spot all the true causal rules \textit{simultaneously}. Finally, Table \ref{tab:second_scenario} shows how the Horvitz-Thompson estimator performs well in estimating the conditional treatment and spillover effects in the two corresponding heterogeneous leaves.  

\vspace{1cm}
\begin{figure}[H]
\centering
\includegraphics[width=1\textwidth]{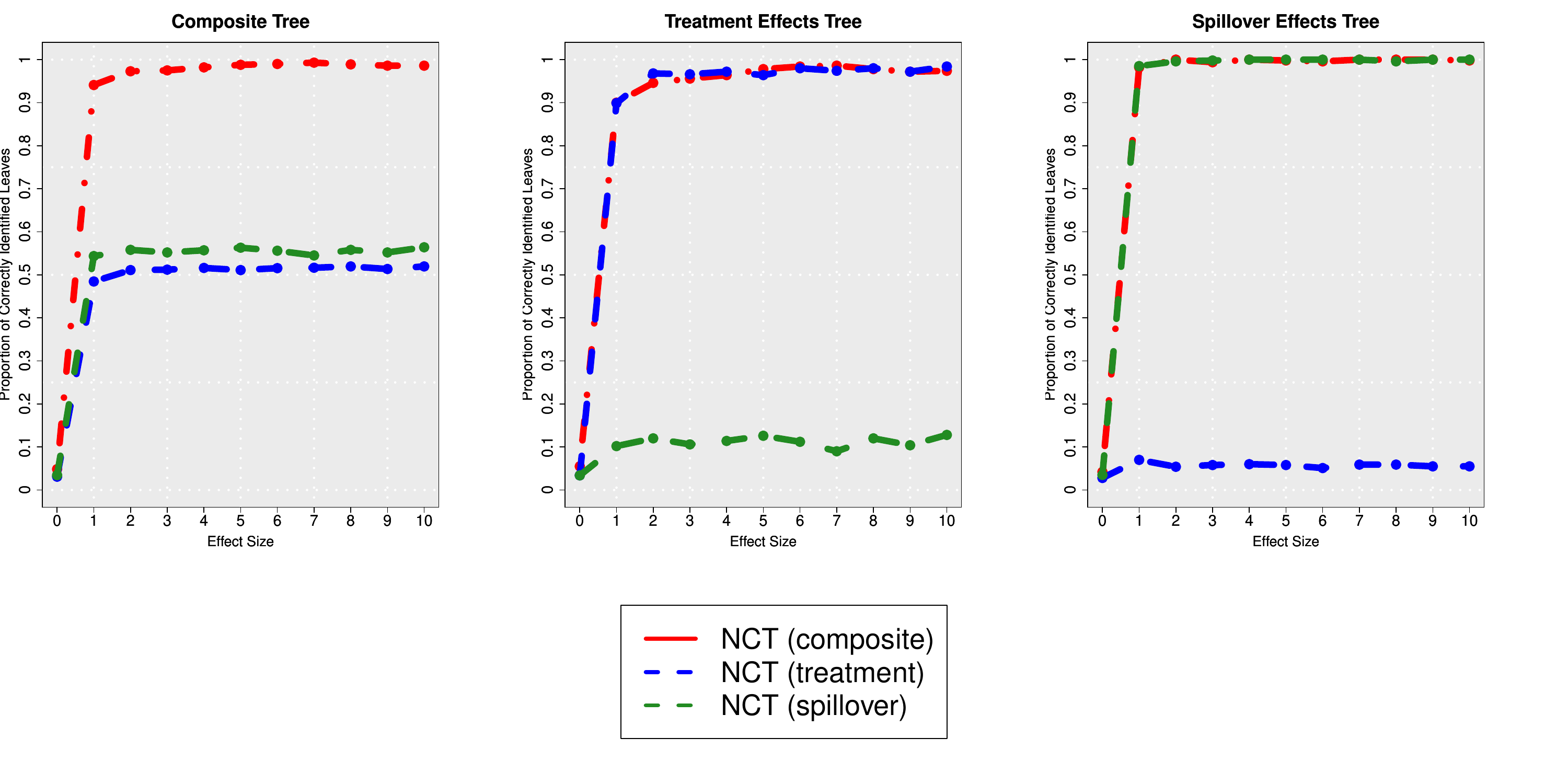}
		\caption{Simulations' results for correctly discovered leaves in the second scenario with 30 clusters.}
	\label{fig:correct_leaves_4_rules}
\end{figure}

\vspace{-0.5cm}

\vspace{2cm}
\begin{table}[H]
\centering
\scriptsize

\begin{tabular}{cccccccc}
\multicolumn{1}{l}{}                  & \multicolumn{7}{c}{Treatment Effects}                                                                                                                 \\ \cline{2-8} 
\multicolumn{1}{l}{Effect Size}       & $\hat{\tau}_{\ell_1}$   & $\hat{se}(\hat{\tau}_{\ell_1})$   & $\hat{\tau}_{\ell_2}$   & $\hat{se}(\hat{\tau}_{\ell_2})$   & MSE   & Bias   & Coverage \\ \hline
0                                     & -0.062                  & 0.171                             & 0.004                   & 0.180                             & 0.034 & -0.029 & 0.957    \\
\rowcolor{gray!10} 1 & 1.008                   & 0.209                             & 3.046                   & 0.414                             & 0.100 & 0.027  & 0.961    \\
2                                     & 1.988                   & 0.295                             & 5.981                   & 0.752                             & 0.279 & -0.016 & 0.957    \\
\rowcolor{gray!10} 3 & 2.998                   & 0.409                             & 9.044                   & 1.129                             & 0.601 & 0.021  & 0.960    \\
4                                     & 4.028                   & 0.524                             & 12.001                  & 1.495                             & 1.041 & 0.014  & 0.955    \\
\rowcolor{gray!10} 5 & 4.976                   & 0.634                             & 15.022                  & 1.849                             & 1.530 & -0.001 & 0.960    \\
6                                     & 5.967                   & 0.748                             & 18.077                  & 2.240                             & 2.665 & 0.022  & 0.949    \\
\rowcolor{gray!10} 7 & 6.987                   & 0.880                             & 20.965                  & 2.595                             & 3.352 & -0.024 & 0.954    \\
8                                     & 8.023                   & 1.001                             & 23.985                  & 2.937                             & 4.349 & 0.004  & 0.958    \\
\rowcolor{gray!10} 9 & 8.963                   & 1.120                             & 27.081                  & 3.328                             & 5.371 & 0.022  & 0.960    \\
10                                    & 9.948                   & 1.245                             & 29.986                  & 3.691                             & 6.460 & -0.033 & 0.961    \\ \hline
\multicolumn{1}{l}{}                  & \multicolumn{7}{c}{Spillover Effects}                                                                                                                 \\ \cline{2-8} 
\multicolumn{1}{l}{}                  & $\hat{\delta}_{\ell_1}$ & $\hat{se}(\hat{\delta}_{\ell_1})$ & $\hat{\delta}_{\ell_2}$ & $\hat{se}(\hat{\delta}_{\ell_2})$ & MSE   & Bias   & Coverage \\ \hline
0                                     & 0.046                   & 0.141                             & 0.064                   & 0.150                             & 0.031 & 0.055  & 0.923    \\
\rowcolor{gray!10} 1 & 1.017                   & 0.171                             & 3.009                   & 0.292                             & 0.043 & 0.013  & 0.976    \\
2                                     & 2.000                   & 0.224                             & 6.016                   & 0.525                             & 0.121 & 0.008  & 0.974    \\
\rowcolor{gray!10} 3 & 2.996                   & 0.295                             & 9.028                   & 0.771                             & 0.231 & 0.012  & 0.974    \\
4                                     & 4.003                   & 0.368                             & 12.077                  & 1.020                             & 0.381 & 0.040  & 0.977    \\
\rowcolor{gray!10} 5 & 4.984                   & 0.446                             & 15.017                  & 1.268                             & 0.592 & 0.000  & 0.981    \\
6                                     & 6.010                   & 0.525                             & 18.063                  & 1.520                             & 0.808 & 0.036  & 0.978    \\
\rowcolor{gray!10}7  & 7.026                   & 0.607                             & 21.020                  & 1.770                             & 1.252 & 0.023  & 0.979    \\
8                                     & 8.067                   & 0.691                             & 23.970                  & 2.019                             & 1.551 & 0.018  & 0.978    \\
\rowcolor{gray!10} 9 & 9.007                   & 0.773                             & 27.017                  & 2.271                             & 1.892 & 0.012  & 0.986    \\
10                                    & 9.945                   & 0.849                             & 29.983                  & 2.527                             & 2.331 & -0.036 & 0.978    \\ \hline
\end{tabular}

\caption{Simulations' results for second scenario (30 clusters)}
\label{tab:second_scenario}

\end{table}

\review{
\section{Empirical Application} \label{sec:application}

Our motivating application concerns the investigation of the effectiveness of intensive training sessions to promote agricultural insurance policies against extreme weather and the data-driven identification of the determinants of the heterogeneous response among farmers, when training sessions were directly received by them (treatment effect) or when training sessions were received by their friends (spillover effect). In this scenario, interference is likely to arise. For instance, households that have undergone intensive information sessions on the insurance policy could disseminate their acquired information within their social circles, thereby indirectly motivating untreated households to consider adopting the policy. As a result, untreated households within the same village might benefit from these sessions through their interactions with households in the same village that have received treatment. As highlighted in Section \ref{sec:motivating_application}, this knowledge plays a pivotal role by fostering a deeper understanding of the policies and creating opportunities for targeted interventions that aim to optimize their cost-effectiveness. Besides the importance of investigating the heterogeneity in the treatment and spillover effects, in the scenario of our application, it is critically important to consider network interference to avoid biased estimates of the treatment effect which may lead to flawed policy evaluations. Furthermore, comprehending spillovers is essential to grasp the degree to which untreated households gain from the intervention, as a result of interacting with treated farmers and being indirectly motivated to adopt the insurance policy.

This section is structured as follows: Subsection \ref{subsec:data} introduces the data used in the application and the notation; Subsection \ref{subsec:results} presents the results for the heterogeneous direct and spillover effect analyses via NCT. In particular, Subsection \ref{subsec:results_one} presents the results for an exposure mapping defined as having at least one friend that attended the intensive information session; Subsection \ref{subsec:results_two} describes the results for an exposure mapping defined as having two or more treated friends; Subsection \ref{subsec:results_delay} presents the results for the interesting case of households that received the information session at a delayed time with respect to the initial roll-out of the treatment. Considering these three cases is extremely important to understand the global impact of the intervention. In particular, the scenario examined in Subsection 7.2.2 allows us to check the robustness of results obtained in the main analysis (presented in Subsection 7.2.1) with respect to a different specification of the network exposure mapping, so as to account for potential misspecification of the interference mechanism. The scenario explored in Subsection 7.2.3 provides useful insights into the effectiveness of the intervention on those households who receive training sessions (both intensive or \emph{simple}) at a delayed time. This investigation is important to assess whether the timing in which households are exposed to the training sessions actually changes the effect of the intervention. 
 
\subsection{Data and Notation}\label{subsec:data}

To address this research question, we utilize data obtained from a factorial randomized experiment conducted to evaluate the effects of an intensive information session on the adoption of a new weather insurance policy among farmers in rural China, first analyzed by \cite{cai2015social}. The dataset comprises $N=4,569$ households, distributed across $K=47$ distinct villages. Households located in the same village are linked according to a village-specific network, while between-villages connections are ruled out. The observed friendship network $G$ is represented in Figure \ref{fig: villages}.
\begin{figure}[t]
\centering
\includegraphics[width=80mm]{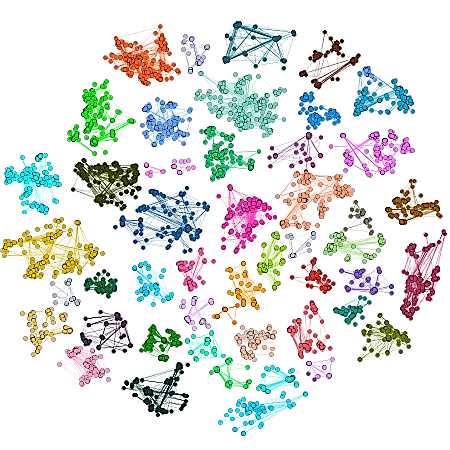}
		\caption{Friendship network between households living in rural villages in China. Colors refer to different villages.}
	\label{fig: villages}
\end{figure}

The experiment was conducted under the hypothesis that improving farmers’ understanding of the new weather insurance would reinforce its take-up. Two types of information sessions were offered: simple sessions that took around 20 minutes and provided an introduction to the insurance contract; and intensive sessions that took around 45 minutes and covered a more in-depth
explanation of the insurance and its expected benefits. In addition to the type of information session, farmers were randomized to receive the session in one of two rounds, scheduled a few days apart. The purpose of this delay was to allow for information sharing by first-round participants in order to investigate the spillover effect across friends. Second-round participants were further randomized to a third factor with three levels
representing the amount of additional information provided on previous purchases by first-round participants: i) no additional information; ii) overall take-up rate in their village; iii) a detailed list of those who purchased the insurance. For simplicity, we do not distinguish between the two types of information provided.  

According to this factorial design, different combinations of the three factors may be compared. For instance, the researcher may be interested in assessing the effect of being exposed to intensive information sessions, with no delays. Here, treated households are those who receive intensive information sessions in the first round, and untreated households are all those households who either receive intensive sessions at the delayed time or receive simple training sessions (this definition of the intervention is employed in the main analysis proposed by \cite{cai2015social}. The detailed characterization of the type of intervention assigned to the farmers' households can be summarized by a complex indicator $S_{ik}$ with 6 categories representing the specific combination of the three factors: (i) $S_{ik}=0$ if the household $i$ living in village $k$ receives a simple information session at round 2 with additional information, on the insurance take-up; (ii) $S_{ik}=1$ if it receives a simple information session at round 2 without additional information; (iii) $S_{ik}=2$ if it receives an intensive information session at round 2 with additional information; (iv) $S_{ik}=3$ if it receives an intensive information session at round 2 without additional information; (v) $S_{ik}=4$ if it receives a simple session at round 1; (vi) $S_{ik}=5$ if it receives an intensive information session at round 1.

In the main analysis we present here, we evaluate the effects of intensive information sessions received at round 1 on insurance take-up. Thus, in this analysis, the treatment variable $\Wik$ equals 1 if household $i$ in village $k$ received an intensive information session on the promoted insurance policy at round 1---i.e., if $S_{ik}=5$---while it equals 0 otherwise---i.e., if $S_{ik}=0, \dots, 4$. According to the clustered network in Figure \ref{fig: villages} and to this definition of treatment, Figure \ref{fig: degree_neightr} represents the distribution of the number of treated network neighbors, together with that of neighbors, referred to as the outdegree.
 \begin{figure}[t]
	\centering
	\begin{subfigure}{.5\textwidth}
	\centering		\includegraphics[width=85mm]{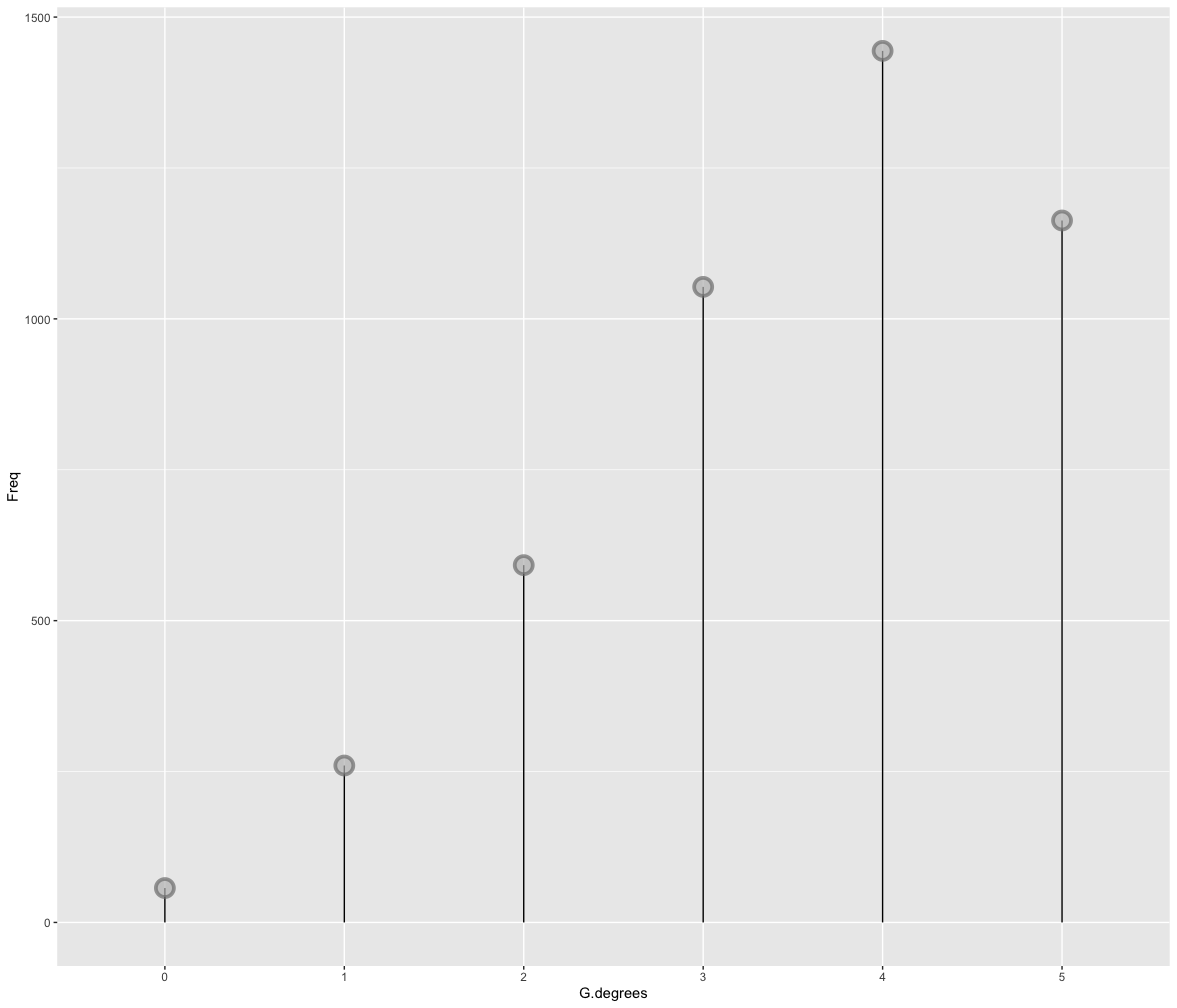}
		\caption{Number of neighbors}
		\label{fig: degree}
 	\end{subfigure}%
	\begin{subfigure}{.5\textwidth}
		\centering
 	\includegraphics[width=85mm]{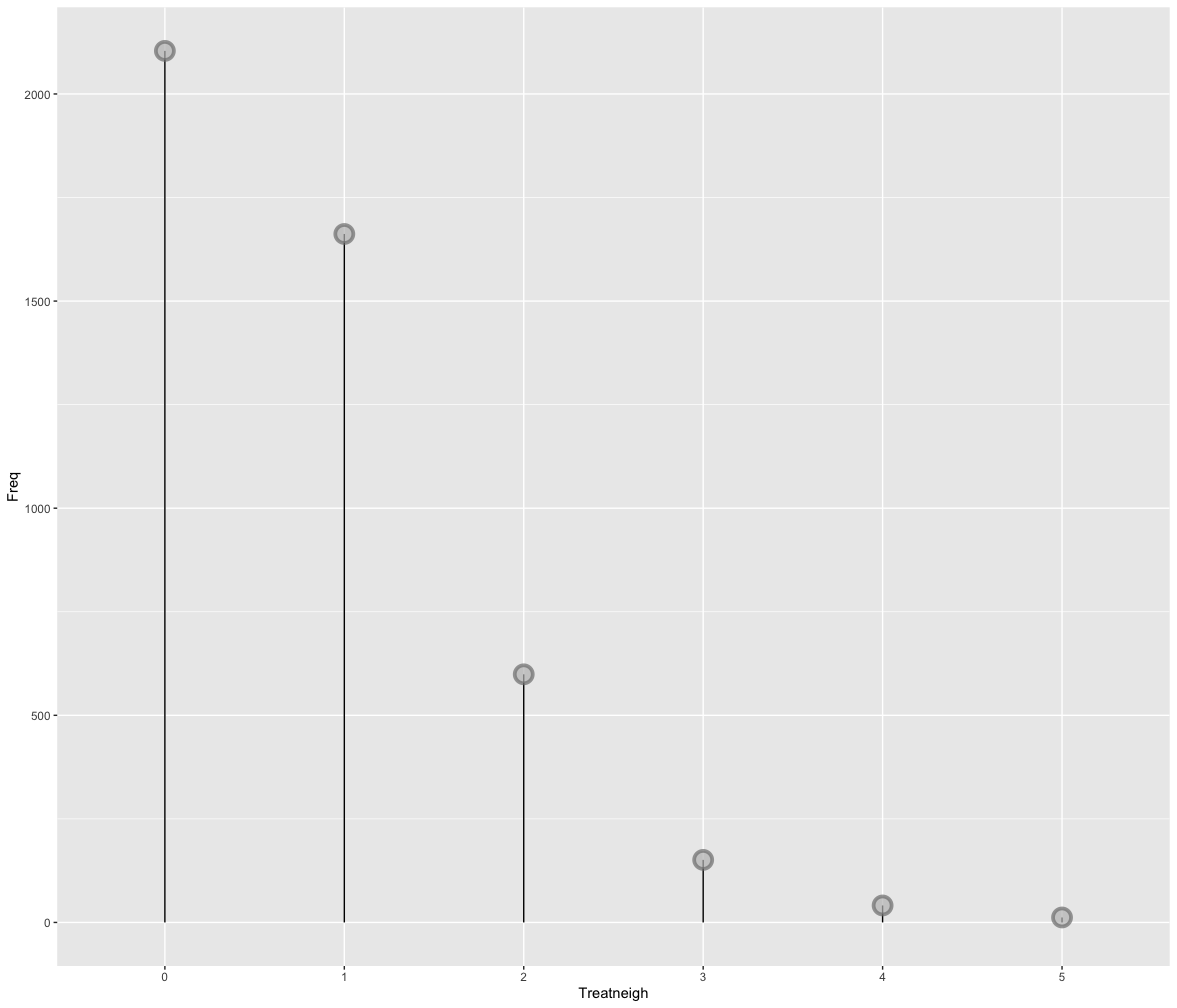}
		\caption{Number of treated neighbors}
		\label{fig: neightr}
	\end{subfigure}
	\caption{Distribution of the outdegree (a) and the number of treated neighbors (b)}
 \label{fig: degree_neightr}
 \end{figure} 

In this setting, interference could take place because the information received in the intensive information session might be transferred to network contacts. We assume that exposure to the network treatments can only occur from network neighbors. That is, we exclude the impact from higher-order neighbors---e.g., friends of friends. This assumption is justifiable by the fact that the decision of a farmer to buy weather insurance might depend only on the information either received directly in the intensive session or indirectly by one of his friends. This is also what has been found in \cite{cai2015social} where the only significant spillover effect is from immediate friends. In particular, we define the network (indirect treatment) exposure according to the following rule:$G_{ik}=\mathds{1}(\bigl(\sum_{j\in \Nik} W_{jk}) \geq 1 \bigr)$. Hence, $G_{ik}$ is equal to 1 if household $i$ in village $k$  has at least one treated friend and 0 otherwise. To assess the robustness of the results, we will further vary the threshold defining the binary network exposure.

We dropped from the analysis the $68$ households without any friends. The total number of remaining households is $4,518$, residing in $47$ villages. Among them, $977$ families were assigned to the intensive training sessions in the first round,  while $3,586$ belonged to the control group---i.e., received the simple session in the first round or any type of information session in the second round. $2,053$ households do not have any treated friends---i.e., $G_ik=0$---while $2,465$ of them undertake friendship relationships with at least one treated household---i.e., $G_ik=1$. The joint distribution of the individual and network exposure is summarized in the following Table \ref{tab: empjointtr}.

\begin{table}[H]
\centering
\caption{Distribution of the joint treatment}
\begin{tabular}{l|cc}
\toprule
  &  $G_{i}=0$  &   $G_{i}=1$   \\
  \midrule
  $W_{i}=0$ & 1734  &  1807 \\
  $W_{i}=1$ & 319 & 658 \\
  \bottomrule
\end{tabular}

\label{tab: empjointtr}
\end{table}
\noindent Figure 
\ref{fig:2villages}

\ref{fig:4groupsindtr} and \ref{fig:4gropusjotr} 
represent the treatment distribution in three villages. In Figure \ref{fig:4groupsindtr}, nodes are colored according to their individual treatment assignment, while in Figure \ref{fig:4gropusjotr} node colors refer to the joint treatment status. Villages are graphically contoured through colored polygons.

 \begin{figure}[h]
	\centering
	\begin{subfigure}{.45\textwidth}
	\centering		\includegraphics[width=85mm]{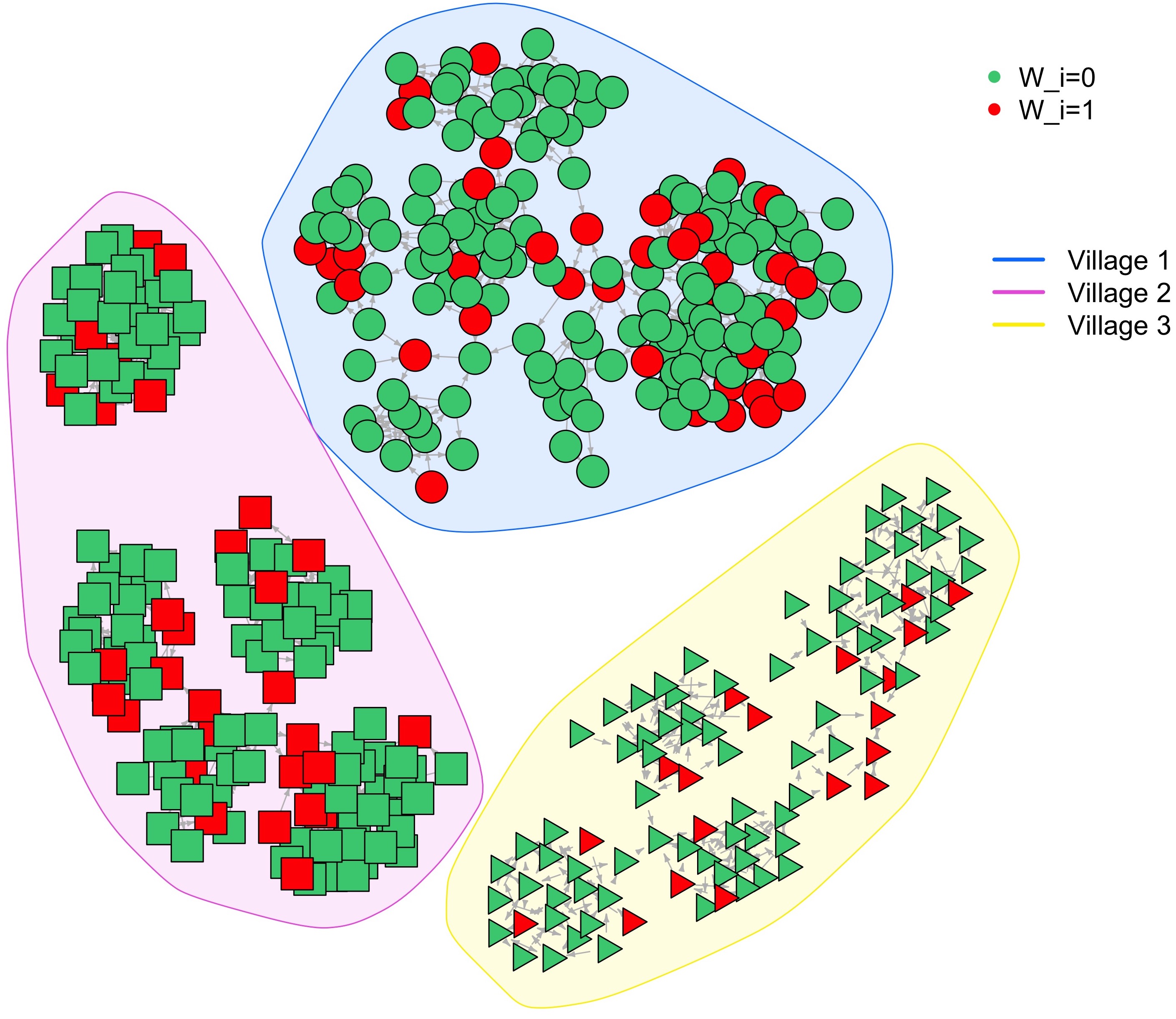}
		\caption{Individual treatment in three villages}
		\label{fig:4groupsindtr}
 	\end{subfigure}%
 	 	\hspace{1cm}
	\begin{subfigure}{.45\textwidth}
		\centering
 	\includegraphics[width=85mm]{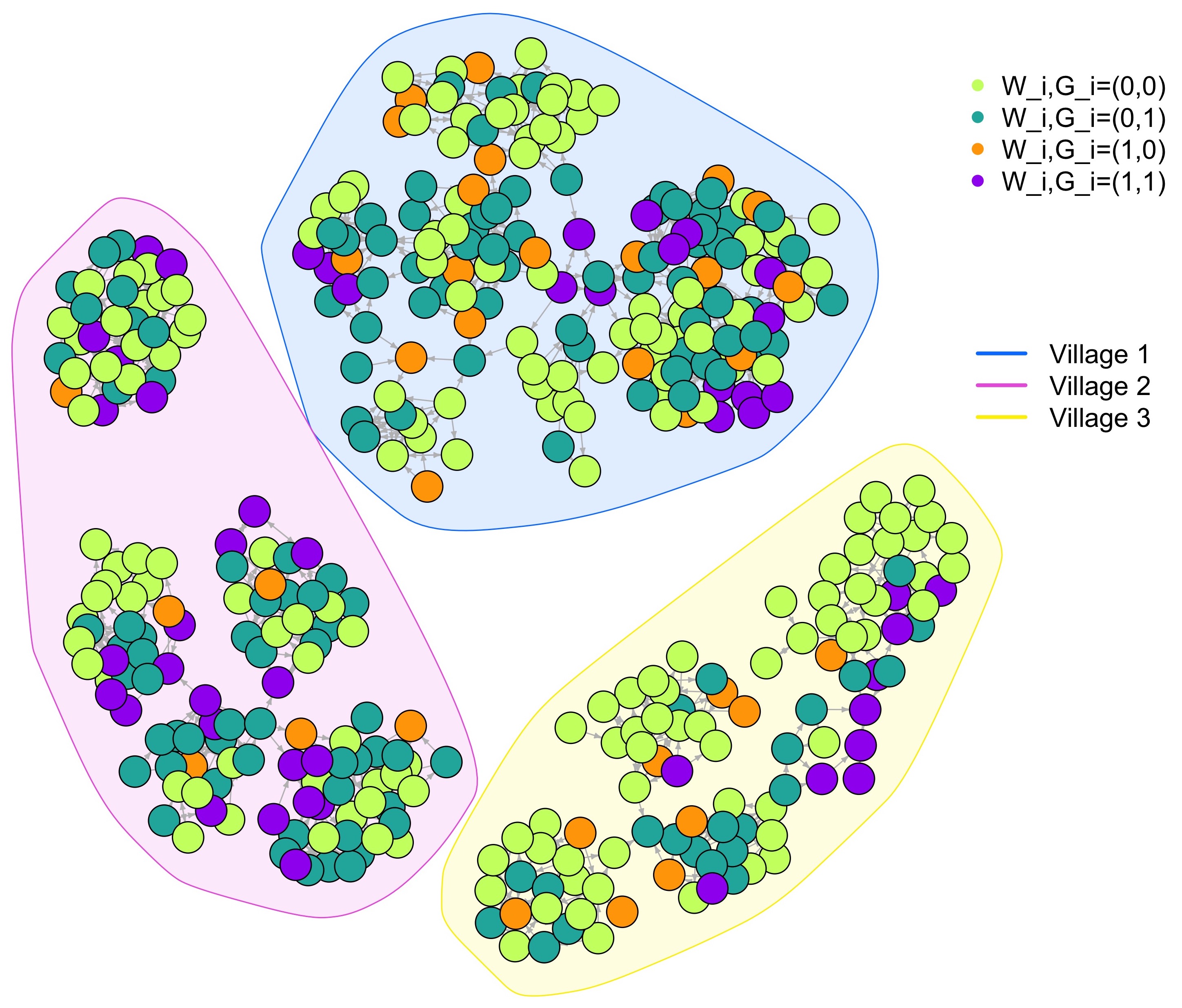}
		\caption{Joint treatment in three villages }
\label{fig:4gropusjotr}
	\end{subfigure}
 \caption{Individual treatment (a) and joint treatment distribution (b) in four villages}
  \label{fig:2villages}
 \end{figure}

The outcome variable $\Yik \in \{0,1\}$ represents the insurance uptake and equals 1 if the household $i$ residing in village $k$ purchased the insurance after the information session. At the end of the experiment, $2,498$ families chose not to accept the proposed insurance policy, while $2,020$ were positively persuaded by the session and accepted the weather insurance.

To evaluate the heterogeneity of treatment and spillover effects, we included in the analysis all the observed characteristics that could plausibly drive the heterogeneity in the effect of the intervention: three dummies representing the household's production area (\texttt{reg1}, \texttt{reg2}, and \texttt{reg3}, respectively); a binary variable which equals 1 if the head of the household is at least $50$ years old (\texttt{age}); a binary variable which is equal to one if the household's head is male (\texttt{male}); a binary variable that takes the value 1 if in the household there are more than four components (\texttt{hsize}); a binary variable being 1 if the household's head has successfully completed high school (\texttt{educ}); a binary variable distinguishing families who are strongly worried about weather phenomena (\texttt{prdis}, which is equal to 1 if the perceived probability of a weather disaster happening in the coming year exceeds $0.30$), which corresponds to the median of that variable in the sample of households; an indicator representing the household's risk aversion (\texttt{averse}); an indicator for whether the household was impacted by any weather disaster in the prior year (\texttt{dis}); an indicator measuring the trust in the Government (\texttt{trust}); a binary variable that equals 1 if the household has received payouts from other insurance products before (\texttt{repay}).

\subsection{Empirical Results}\label{subsec:results}

\subsubsection{One Treated Neighbor}\label{subsec:results_one}

\begin{figure}[!tb]
	\centering
\includegraphics[width=1\textwidth]{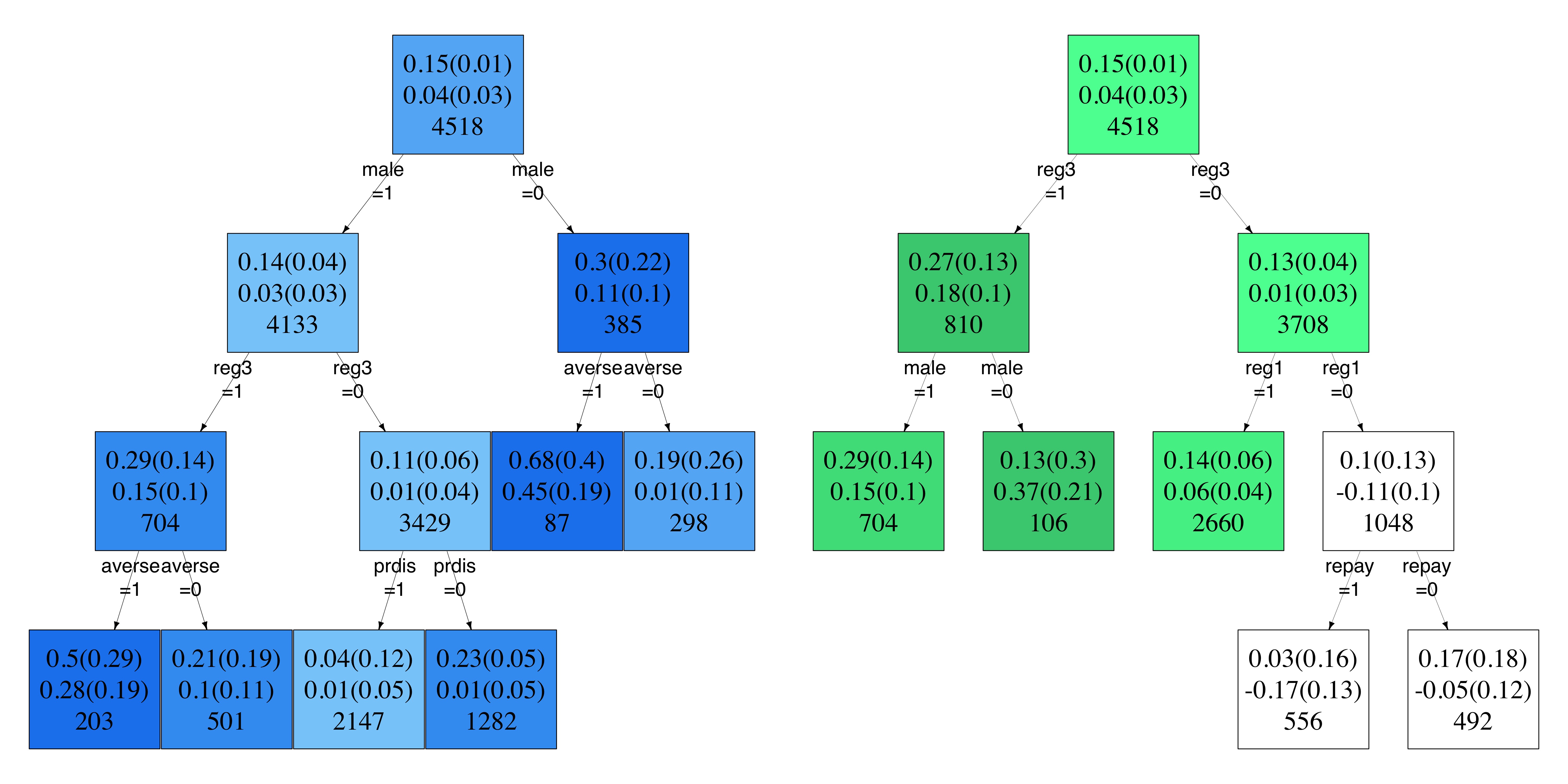}
	\caption{Network causal trees targeted to single effects: NCT targeted to the main treatment effect  $\tau(1,0;0,0)$ (left) and NCT targeted to the main spillover effect  $\tau(0,1;0,0)$ (right). In the left side figure nodes are colored according to the magnitude of the estimated main treatment effect (in shades of blue). In the right side figure nodes are colored according to the magnitude of the estimated main spillover effect (in shades of green). In both figures, white nodes represent negative effects. Each node reports - from top to bottom - the estimate of the main treatment effect, the estimate of the main spillover effect, and the size of the selected sub-population}
		\label{fig:empnctsing}
 \end{figure} 

In this section, we present the most relevant empirical findings, with the individual treatment $\Wik$ defined as receiving the intensive information session at round 1, i.e., $W_{ik}=\mathds{I}(S_{ik}=5)$, and the network exposure $\Gik$ defined as having at least one treated friend. Figures \ref{fig:empnctsing} and \ref{fig:empnctmult} report the trees built targeting single effects and multiple effects, respectively. In each node, we report from top to bottom the estimates of the main treatment and spillover effects and the size of the selected sub-populations. Colors provide intuition on the magnitude either of the main treatment effect $\tau(1,0;0,0)$ or of the main spillover effect $\tau(0,1;0,0)$: blue nodes are those where the estimated main treatment effects are positive, while green nodes are those where the estimated main spillover effects are positive; in both cases, the darker is the color the stronger is the magnitude of the effect of interest. Negative effects are colored in white. 
27 villages are randomly assigned to the training set, while the remaining villages are allocated to the estimation set only.\footnote{The random assignment of the villages to the training and estimation samples has been kept fixed in all the trees.} Furthermore, we have set the maximum depth at 3 to maintain a high level of interpretability, while we have set at 20 the minimum number of units that must be present in the child leaves for each of the four exposure conditions.
We can see that, in the whole population, the treatment has a positive effect on insurance take-up, while the spillover effect is positive but not statistically significant. The most relevant heterogeneity drivers are: the production area (specifically, living in the third area), sex, risk aversion, and the perceived probability of disaster However, the estimated tree slightly changes under different specifications of the spitting rule. Figure \ref{fig:empnctsing} represents the trees targeted to single effects: in particular, the left side figure shows the tree targeted to the treatment effect $\tau(1,0;0,0)$ while the right side figure shows the one targeted to the spillover effect $\tau(0,1;0,0)$, respectively. Note that the colors of the nodes refer to the magnitude of the effect the two trees are targeted to: hence, in the tree on the left, the colors of the nodes represent the magnitude of the treatment effect, while in the tree on the right they are related to the magnitude of the estimated spillovers. 
The NCT targeted to the treatment effect $\tau(1,0;0,0)$ shows that the most important variables driving the heterogeneity of the treatment effect are the production area, risk aversion, sex, and the perceived probability of disaster. The treatment appears to be particularly effective in the sub-population of households where the head is a female and she is risk averse, and in the one where the head of the household is a male who lives in the third region. Moreover, even households with a male head not living in the third production area and worried about possible weather disasters highly benefit from the intervention. We can speculate that households who are more worried about possible disasters benefit less from an intensive information session because even less information would prompt them to purchase weather insurance. 

When looking at the tree targeted to the spillover effect $\tau(0,1;0,0)$, living in the third region is still the main characteristic determining the heterogeneity of the spillover effect. In this partitioning, the households that are more responsive to the intensive session received by at least one of their friends in the first round are those who live in the third production area, and whose head is a female. However, the algorithm shows that also farmers living in the first production area are highly responsive to the intensive training sessions on the policy received by at least one of their friends in the first round. Yet, spillovers are rarely statistically significant in the leaves. In the sub-populations identified by this tree, the estimated treatment effects are also positive, and statistically significant.

\begin{figure}[!t]
	\centering
\includegraphics[width=1\textwidth]{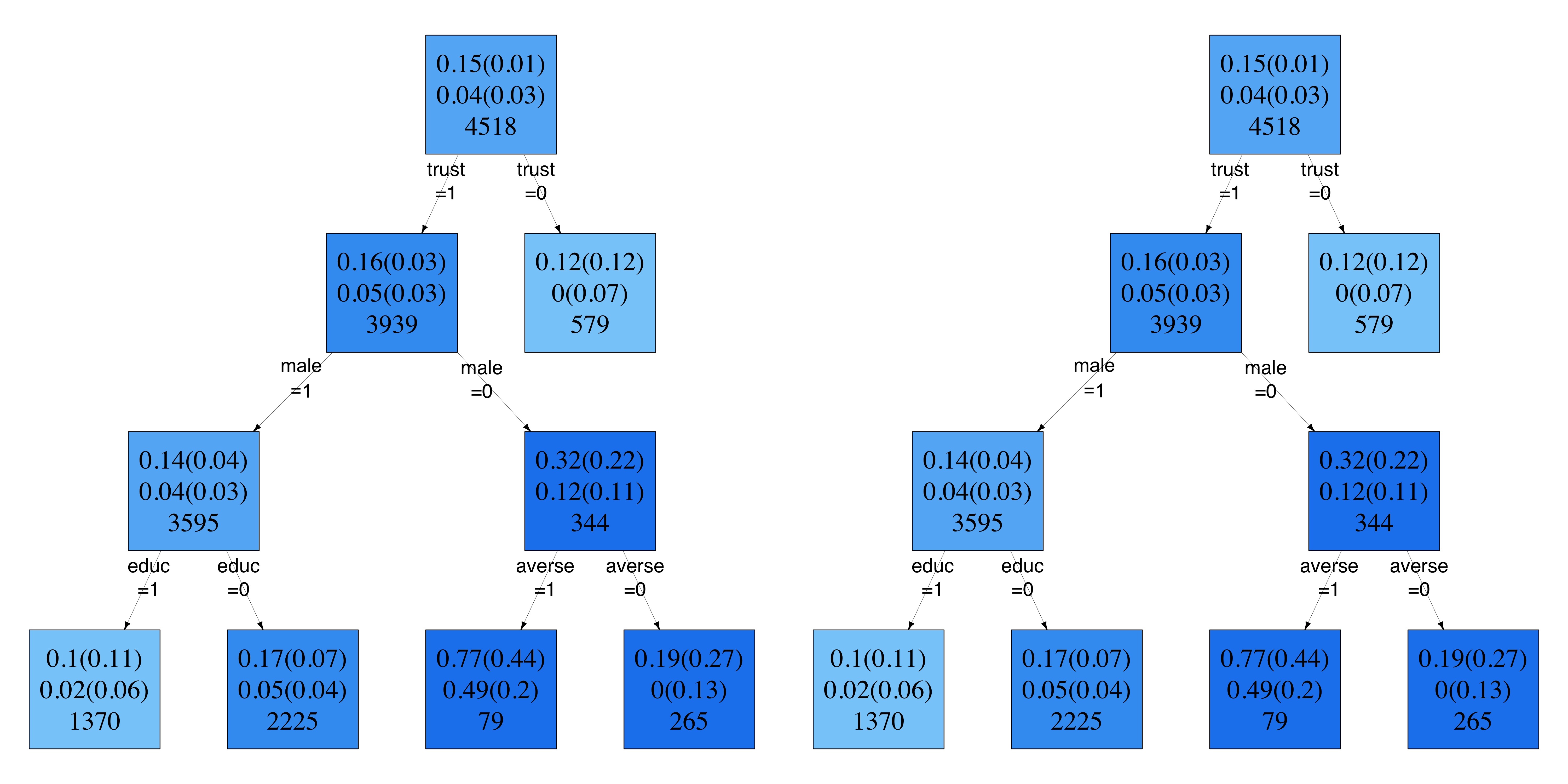}
	  	\caption{Network causal trees targeted to multiple effects:  NCT targeted to the main treatment effect  $\tau(1,0;0,0)$ and the main spillover effect  $\tau(0,1;0,0)$ (left) and NCT targeted to all the effects (right). In both figures, nodes are colored according to the magnitude of the estimated main treatment effect (in shades of blue). White nodes represent negative effects. Each node reports - from top to bottom - the estimate of the main treatment effect, the estimate of the main spillover effect, and the size of the selected sub-population}
	  		\label{fig:empnctmult}
        \vspace{-0.1cm}
 \end{figure}

Figure \ref{fig:empnctmult} depicts the partition selected using a composite splitting function targeted to multiple causal effects. Specifically, the left side figure refers to the network causal tree targeted to both $\tau(1,0;0,0)$ and $\tau(0,1;0,0)$, such that each component contributes to determining the objective function with equal weight (0.5), while the right side figure is related to the tree which has been built assigning equal weight to all the four effects in $\mathcal{T}$. Note that here colors refer to the magnitude of the treatment effects in both figures. In this application, the composite tree which considers both $\tau(1,0;0,0)$ and $\tau(0,1;0,0)$ coincides with the tree based on all effects simultaneously. Here the sub-population where the treatment is more effective is the one including households whose head is a female who trusts the government and who is risk-averse. These households are also those that are more responsive to the intensive session received by at least one of their friends at first. 

We can conclude that intensive training sessions encouraged Chinese rural households to take up insurance policies. The main determinants of the heterogeneity in the treatment effect are the production area, the trust in the government, and both the risk aversion and the sex of the head of the household. In this analysis, spillover effects do not seem to play a key role.

\subsubsection{Two Treated Neighbors}\label{subsec:results_two}
  
Here, we increase the threshold to define the network exposure to 2, i.e., $G_{ik}=\mathds{1}(\bigl(\sum_{j\in \Nik} W_{jk}) \geq 2 \bigr)$, such households who have 2 or more friends who attended the intensive information session at round 1 are considered indirectly exposed to the intensive session. In this setting, the positive impact of directly receiving the intensive session at round 1 (treatment effect) is even stronger than in the previous scenario, where having one treated neighbor is enough to determine an indirect exposure to the intervention. As in the previous scenario, spillover effects are not significant, neither in the whole population nor in the sub-populations identified by the trees.

\begin{figure}[t]
	\centering
\includegraphics[width=1\textwidth]{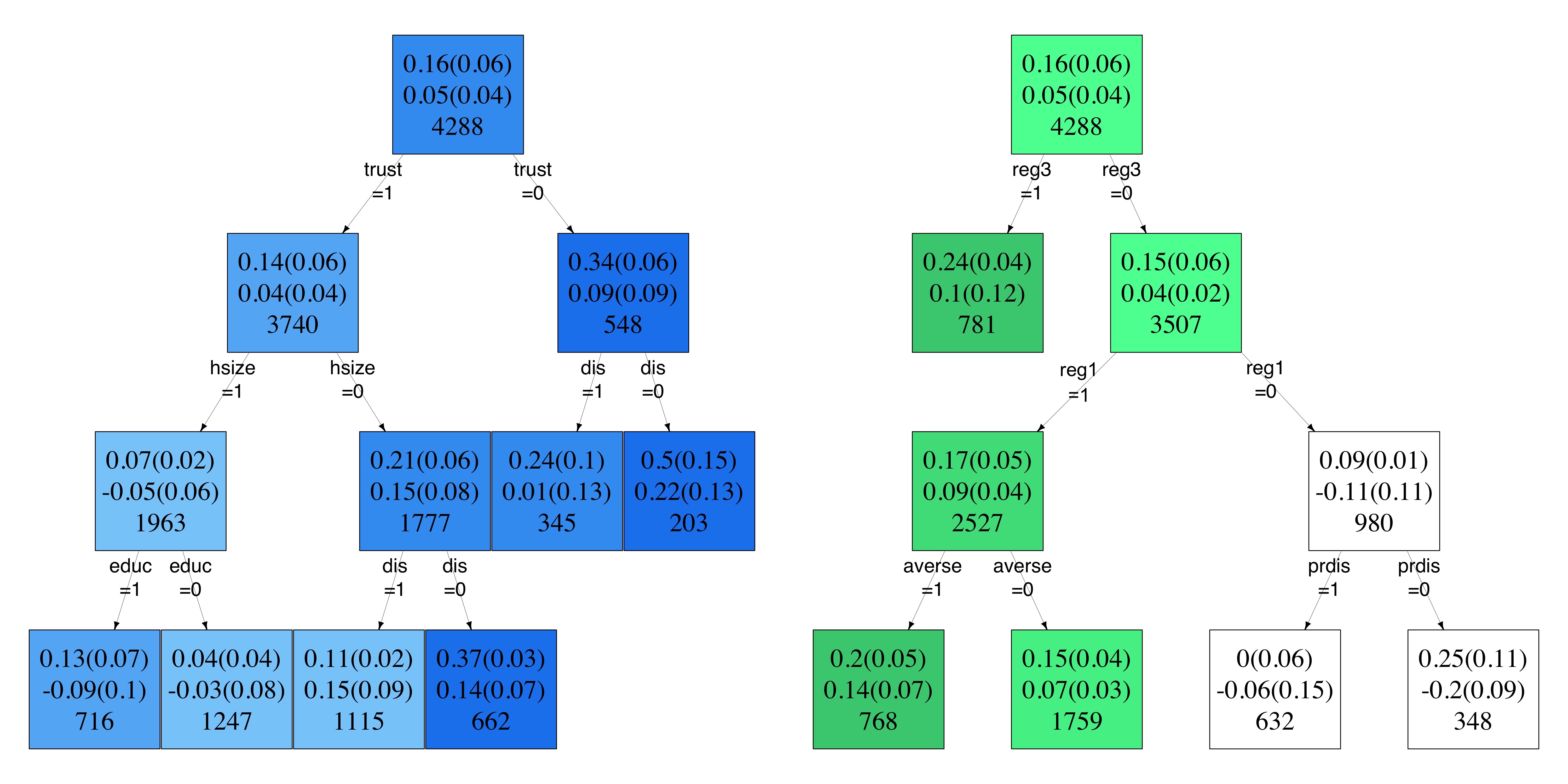}
	\caption{Network causal trees targeted to single effects, with network exposure defined by the cutoff $q=2$: NCT targeted to the main treatment effect $\tau(1,0;0,0)$ (left) and NCT targeted to the main spillover effect  $\tau(0,1;0,0)$ (right). In the left side figure nodes are colored according to the magnitude of the estimated main treatment effect (in shades of blue). In the right side figure nodes are colored according to the magnitude of the estimated main spillover effect (in shades of green). In both figures, white nodes represent negative effects. Each node reports---from top to bottom---the estimate of the main treatment effect, the estimate of the main spillover effect, and the size of the selected sub-population.
    }
		\label{fig:empnctsing_a1}
 \end{figure} 
 
The most important variables driving the heterogeneity of the effects are the region of living, the trust in the Government, the size of the household, and the indicator signaling whether the household was impacted by any weather disaster last year. Figure \ref{fig:empnctsing_a1} represents the trees targeted to $\tau(1,0;0,0)$ and $\tau(0,1;0,0)$, respectively. According to the tree that is targeted to $\tau(1,0;0,0)$ only (left), the most important heterogeneity drivers are the trust in the government, the size of the household, the fact of having been impacted by a weather disaster and the level of education. The treatment appears to be particularly effective in the following sub-populations: i) large and highly educated households who have faith in the government; ii) small households who have faith in the government and whose activity was not affected by a weather disaster in the previous year, and iii) households who do not trust the government and were not impacted by a weather disaster. According to this tree, small households who trust the government and who were not affected by disasters and households who do not trust the government and who were not impacted by a weather disaster last year also significantly benefit from indirect exposure to the intervention (all the other spillovers are positive but not statistically significant).  

When looking at the tree targeted to the main spillover effect (right), the production region, the perceived probability of disaster, and risk aversion determine the heterogeneity of spillover effects. According to this tree, households living in the third region who do not have faith in the government and risk-averse households living in the first production area are those who benefit the most from receiving information about weather insurance from some of their friends  In the leaves identified by this tree, also treatment effects are positive and statistically significant.

Figure \ref{fig:empnctmult_a1} depicts the partition selected using a composite splitting function targeted to multiple causal effects. Specifically, the left side figure refers to the network causal tree targeted to both $\tau(1,0;0,0)$ and $\tau(0,1;0,0)$, such that each component contributes to determining the objective function with equal weight (0.5), while the right side figure represents the tree that has been built assigning equal weight to all the four effects in $\mathcal{T}$. As in the previous scenario, where the network exposure is defined by the cutoff $q=1$, the two trees coincide. The most important heterogeneity drivers are the production area, the sex of the head, and the size of the household. The treatment is particularly effective in small households not in the second region that has a male head. It is also possible to observe that spillovers are positive and significant in households that live either in the first or third region and who have a female head. 

\begin{figure}[t]
	\centering
\includegraphics[width=1\textwidth]{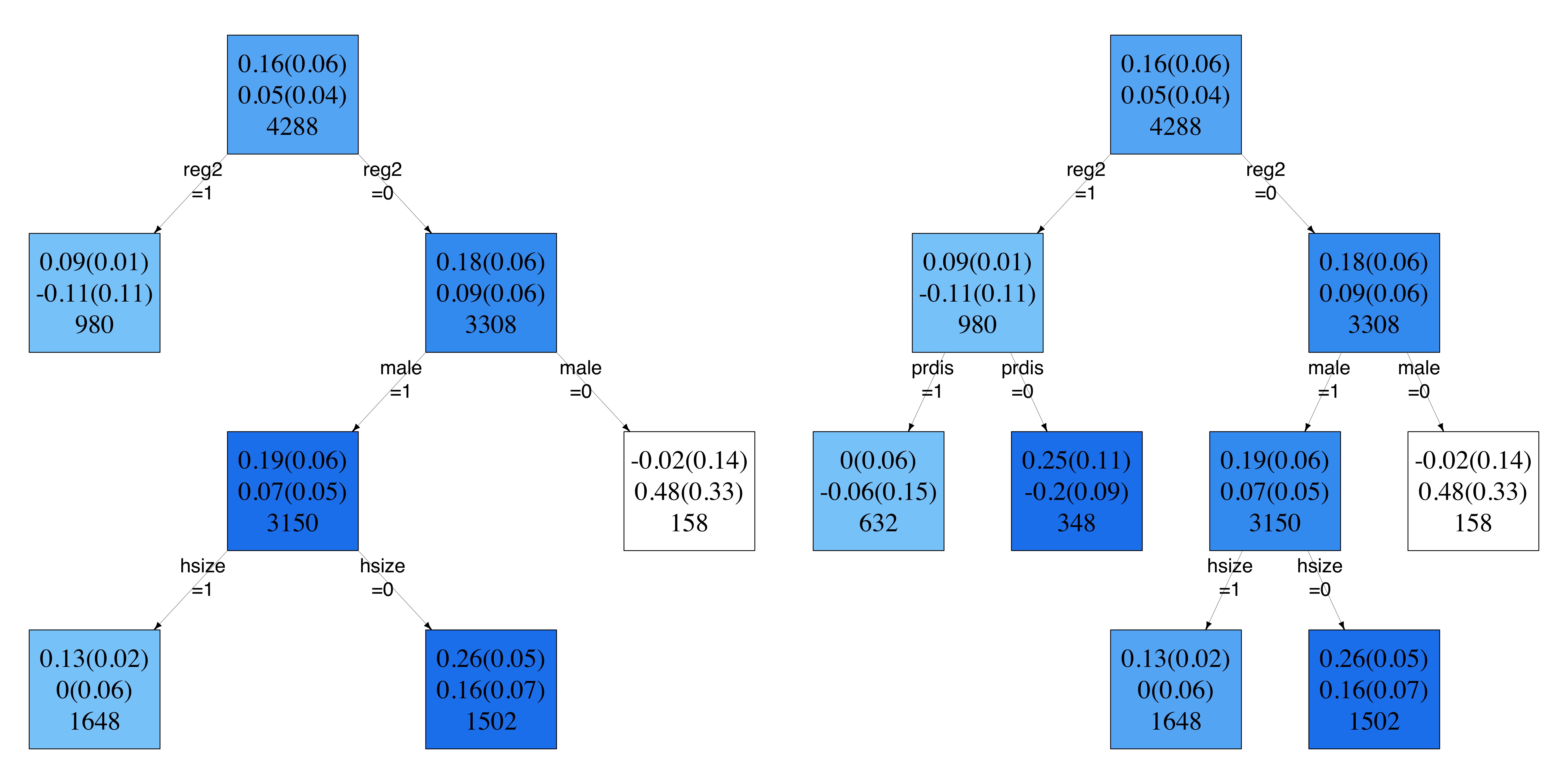}
	  	\caption{Network causal trees targeted to multiple effects, with network exposure defined by the cutoff $q=2$: NCT targeted to the main treatment effect  $\tau(1,0;0,0)$ and the main spillover effect  $\tau(0,1;0,0)$ (left) and NCT targeted to all the effects (right). In both figures, nodes are colored according to the magnitude of the estimated main treatment effect (in shades of blue). In both figures, white nodes represent negative effects. Each node reports---from top to bottom---the estimate of the main treatment effect, the estimate of the main spillover effect, and the size of the selected sub-population.
    }
	  		\label{fig:empnctmult_a1}
 \end{figure}

\subsubsection{Delayed Time of Intervention}\label{subsec:results_delay}

So far, we have presented results related to the assessment of the causal effects of intensive information sessions received at the first round on insurance take-up, that is, the analyses have been conducted with the treatment variable $\Wik$ being equal to 1 if household $ik$ received an intensive information session at round 1---i.e., if $S_{ik}=5$---and equal to 0 otherwise---i.e., if $S_{ik}=0, \dots, 4$. However, given the factorial design and the factorial combinations represented by the variable $S_{ik}$, we can focus on other meaningful comparisons. It is possible to define the treatment variable $\Wik$ and the network exposure variable  $\Gik$ using different combinations of the factorial variable $S_{ik}$. 

Here, we focus on households who received the information session at the second round, i.e., $S_{ik}=\{ 0,1,2,3 \}$, and we define as treated units households who receive the intensive information session, while untreated units correspond to the households who receive simple information sessions, i.e., 
$$
W_{ik}=
\begin{cases}
1 \quad \text{if} \quad S_{ik}\in \{ 2,3 \},\\
0 \quad \text{if} \quad S_{ik}\in \{ 0,1 \}.
\end{cases}
$$
The network exposure variable $\Gik$ is defined, as in the main analysis, as the indicator for having at least one friend who received the intensive information session at the first round, i.e., 
$$G_{ik}=\mathds{1}(\bigl(\sum_{j\in \Nik} \mathds{1}(S_{jk}=5) \geq 1 \bigr)$$.
With these definitions, we can investigate the spillover effect of for having at least one friend who received the intensive information session at the first round among those who did not receive any information until the second round. 

\begin{figure}[t]
	\centering
\includegraphics[width=1\textwidth]{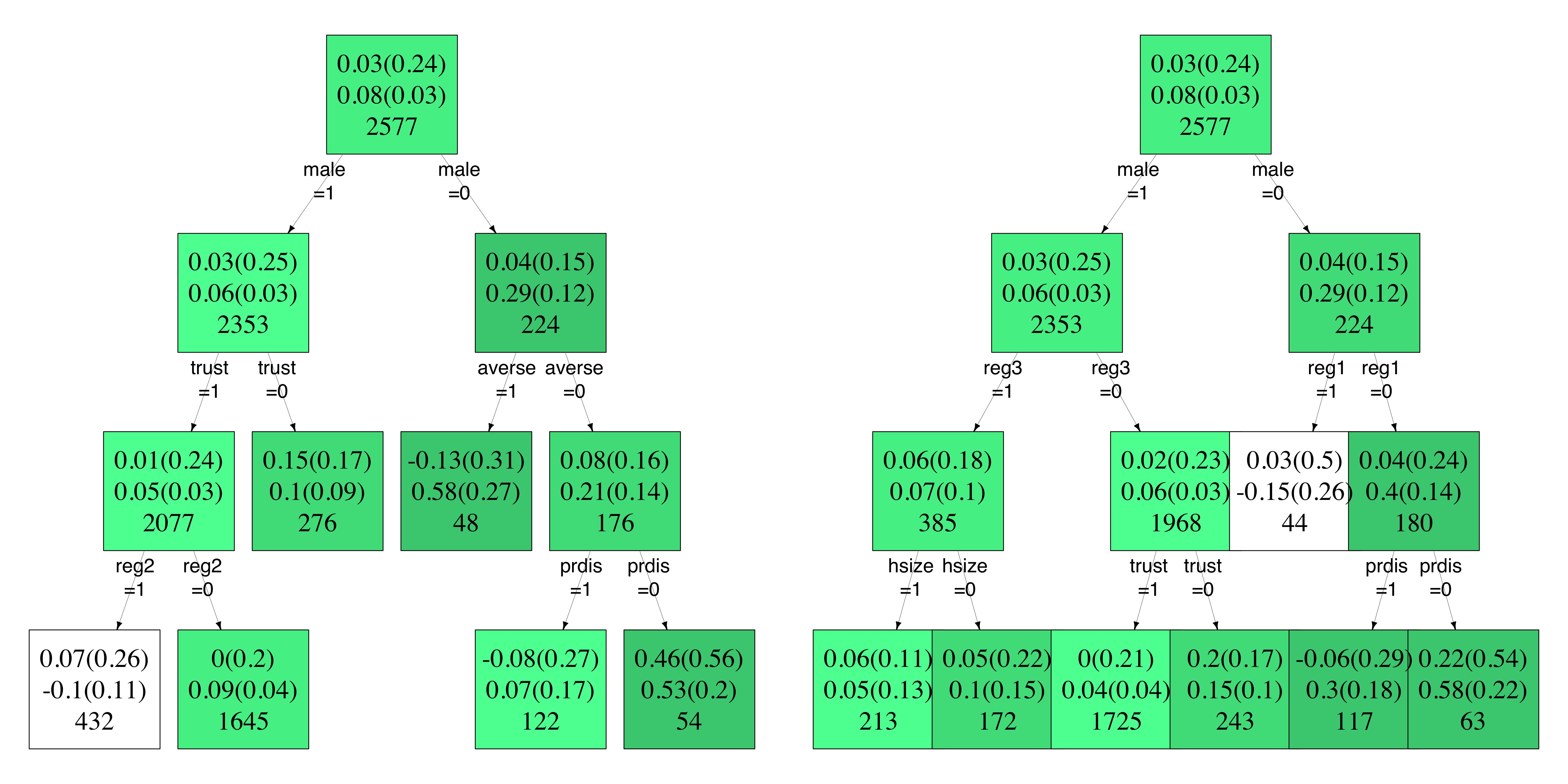}
	\caption{Network causal trees for the spillover effect on those who received the information session at the delayed time. NCT targeted to the main spillover effect  $\tau(0,1;0,0)$ (left) and NCT targeted to both the main treatment effect $\tau(1,0;0,0)$ and the main spillover effect, with equal weight (0.5) $\tau(0,1;0,0)$ (right). In both figures, nodes are colored according to the magnitude of the estimated main spillover effect (in shades of green). In both figures, white nodes represent negative effects.  Each node reports---from top to bottom---the estimate of the main treatment effect, the estimate of the main spillover effect, and the size of the selected sub-population.
 }
		\label{fig:empnct_b0}
 \end{figure}

We found that here the main spillover effect $\tau(0,1;0,0)$ in the whole sample, being here the effect of having at least one friend who received the intensive information session at round 1 on the insurance uptake for those who received a simple session at the second round, is positive and significant ($\widehat{\tau}=0.08$, $sd(\widehat{\tau})=0.03$), suggesting that being indirectly exposed to the intensive session through friends has an impact on those who only receive a simple session at a delayed time. 
On the contrary, in this case the main treatment effect $\tau(1,0;0,0)$ in the whole sample---i.e., the effect of directly receiving an intensive session in the second round versus receiving a simple session still in the second round with no friends who received the intensive session at round 1---is not significant ($\widehat{\tau}=0.03$, $sd(\widehat{\tau})=0.24$). 
Figure \ref{fig:empnct_b0} shows both the tree that has been built targeting the main spillover effect $\tau(0,1;0,0)$ and the network causal tree targeted to both $\tau(1,0;0,0)$ and $\tau(0,1;0,0)$, such that each component contributes in determining the objective function with equal weight (0.5). Note that in these figures the color of the leaves refers to the sign and the magnitude of the main spillover effect $\tau(0,1;0,0)$.

Results show that the main heterogeneity drivers of the spillover effect on those who received the simple session at round 2 are the trust in the government, the risk aversion, and the sex of the head, while when the treatment effect is included in the composite rule, the production area plays the most important role. Specifically, the left side tree included in Figure \ref{fig:empnct_b0}---which represents the tree targeted to the main spillover effect only---shows that households whose head is a female who is either risk-averse or not-risk-averse but with a low perceived probability of an upcoming disaster are those who, if they received simple information sessions at the delayed time, would benefit the most from interacting with friends who received intensive information sessions at the first round. In the partitions identified by the tree, the main treatment effect is not significant. When looking at the tree targeted to both treatment and spillover effect (right side tree of Figure \ref{fig:empnct_b0}), the sex of the head and the production area are the main drivers that determine the heterogeneity of the (composite) effect. Secondary drivers are the size of the household, the trust in the government, and risk aversion. In the partitions identified by this tree, spillover effects are positive: they are particularly high in the sub-population of households that have a female head and who are not in the first production region. As observed in the tree targeted to the main spillover effect only, in most of the sub-populations identified by this tree the estimated treatment effects are positive but not significant.

}

\section{Conclusions} \label{sec:conclusions}


In this paper, we have introduced a new algorithm to estimate heterogeneous causal treatment and spillover effects in the presence of clustered network interference. The proposed network causal tree bridges the gap between two streams of literature: causal inference under interference and tree-based methods for the discovery of heterogeneous sub-populations. We build upon the seminal algorithm proposed by \cite{athey2016recursive} to account for clustered network interference through a rework of the splitting function. Leaf-specific causal effects are then estimated using the Horvitz-Thompson  estimator proposed by \cite{aronow2017estimating}. 

The proposed NCT algorithm has enhanced interpretability and shows an excellent performance in a set of Monte Carlo simulations. In particular, the algorithm is able both to spot the relevant sources of heterogeneity in the data and to consistently estimate the conditional treatment and spillover effects. Moreover, we have introduced a composite splitting function that allows the researchers to simultaneously detect the sub-populations where both treatment and spillover effects are heterogeneous. The identification of these multi-effect heterogeneous subgroups is crucial for the design of targeting strategies that involve multiple effects. 
 
Our simulation study shows that the use of such a composite splitting rule is able to correctly detect all the heterogeneous sub-populations defined by both treatment and spillover effects, as well as the ones defined by one effect only. Therefore, the selected partition of the population can be used to design strategies whose objective function incorporates multiple effects, but can also be used a posteriori to target subgroups maximizing either a treatment or a spillover effect. 
\review{However, further research is needed to use the selected partition to actually design targeting strategies involving both treatment and spillover effects. In addition, these strategies should in principle rely also on the heterogeneity of spillover effects with respect to the characteristics of the senders (treated individuals), and not just those of the receivers (individuals whose outcome is affected by the treatment of others). To this end, new causal estimands could be defined and their heterogeneity should be investigated and incorporated in the design of complex targeting rules, aimed at maximizing the effect of the intervention in the population, while saving resources.}

\review{When applied to real-world data, the NCT algorithm provided useful insights on the effectiveness of intensive training sessions among Chinese rural households on the uptake of a weather insurance policy. We found that the main characteristics that drive the heterogeneity of the effects are the production area, the sex of the farmers, the trust in the government, and risk aversion. Results also suggest that the intervention is not statistically significant in those households who receive the information sessions at a delayed time: however, in those households, the spillover effect is positive and significant and its heterogeneity is mostly driven by the sex of the owner, the trust in the government and the risk aversion.}

The proposed algorithm may suffer from the limitations common to (single) tree-based methods: instability to the random allocation of units in the training sample, the potential impact of outlier observations in the node-specific estimations, and the difficulty of modeling very smooth outcomes.   However, (single) tree-based algorithms, thanks to their high interpretability, are suitable for the discovery of heterogeneous sub-populations, which play a crucial role in policy design. Moreover, we did not find instability in either the detection or the estimation of heterogeneous effects in the Monte Carlo simulations. 
\review{The proposed algorithm could be extended to tree-based ensemble methods, following \cite{wager2018estimation} and \cite{athey2019generalized}. By averaging the estimates from many trees, this extension could enhance estimation precision, particularly in the presence of a highly smoothed outcome, at the cost of reduced interpretability (see \cite{lee2020causal} for a discussion on the trade-off between accuracy and interpretability).} In addition, our approach might be rearranged to deal with settings where the network cannot be partitioned into well-defined and pre-specified clusters. However, in some network structures, clusters could be detected by implementing a network-based community detection algorithm \citep{fortunato2010community}, and the NCT algorithm could be applied to the detected communities, while the estimator should take into account the uncertainty in group membership. 

Here we assume that the network exposure variable is discrete, with the performance of the algorithm being affected by the number of categories resulting from the exposure mapping function. In our simulation study and application, we have used a binary neighborhood exposure, which allowed us to grow deeper trees, reduce the number of possible causal effects, and have enough observations for each exposure condition to maintain the variance in a reasonable range. However, alternative specifications could be used. For instance, network exposure could be defined as the proportion of treated neighbors, perhaps categorized into a few bins. \review{
We acknowledge the possibility of a misspecified network exposure variable due to a misspecification of the exposure mapping function that would make Assumption \ref{ass:cni} hold. Even though it has been shown that, when the exposure mapping function is misspecified, the Horvitz-Thompson estimator would still be unbiased for the `individual average potential outcome' marginalized over the distribution of the true network exposure (or the whole treatment vector) given the misspecified network exposure \citep{aronow2017estimating,savje2023causal},  
the existence of some level of heterogeneity in the marginalizing distribution would give rise to spurious heterogeneity in the estimated causal effects. This is true whenever we simplify the definition of an exposure for the purpose of heterogeneity investigation. However, when the misspecification is on the network exposure, in the case of randomized experiments any heterogeneity in the conditional distribution of the true exposure would be due to the dependence of the true network exposure from network properties and a possible correlation between these and other covariates. Then, if we include network properties in our network causal tree, the only spurious heterogeneity would be found in these network properties. In the motivating application, we also explore variations in the definition of the exposure mapping by changing the threshold of the required number of treated neighbors required to determine an indirect exposure to the intervention.}
A more complex definition of network exposure, possibly resulting in a continuous variable, would require some methodological adjustments in the estimation strategy. We leave this extension to future work.

\pagebreak
\bibliography{references.bib}

\newpage
\appendix
{\bf \huge Online Appendix}
    \pagenumbering{arabic}
    \setcounter{page}{1}

\section{Proofs}
\label{app:proofs}

\begin{repproposition}{prop:unbiaseness}[Unbiasedness]
\begin{equation}
    E[\widehat{\mu}_{(w,g)}(\ell(\vx))]=\mu_{(w,g)}(\ell(\vx)). \nonumber
\end{equation}
\end{repproposition}

\begin{proof}
\begin{equation}
\begin{aligned}
\E[\widehat{\mu}_{(w,g)}(\ell(\vx)]&=\E\biggl[
\frac{1}{N(\ell(\vx))}\sum_{k=1}^K\sum_{i=1}^{n_k}\frac{Y_{ik}}{\pi_{ik}(w,g)}\mathds{1}(W_{ik}=w, G_{ik}=g, \mathbf{X}_{ik}\in \ell(\vx)) 
\biggr]\\
&=\frac{1}{N(\ell(\vx))}\sum_{k=1}^K\sum_{i=1}^{n_k}\E\big[\I(\Wik=w, \Gik=g, \mathbf{X}_{ik}\in \ell(\vx))\big]\frac{\Yik(w,g)}{\pi_{ik}(w,g)}\\
&=\frac{1}{N(\ell(\vx))}\sum_{k=1}^K\sum_{i=1}^{n_k}\E\big[\I(\Wik=w, \Gik=g)\big]\I(\mathbf{X}_{ik}\in \ell(\vx))\frac{\Yik(w,g)}{\pi_{ik}(w,g)}\\
&= \mu_{(w,g)(\ell(\vx)}
\end{aligned}
\end{equation}
where the expectation is over the randomization distribution of $\Wik$ and the induced distribution on $\Gik$, the second equality holds by Assumptions \ref{ass:cni}, \ref{ass:consistency},  and \ref{ass: bing} , the third equality is due to the randomized experiment such that the distribution of $\Wik$ and $\Gik$ depends neither on covariates nor on the potential outcome (Assumption \ref{ass:ita}), and the fourth equality is due to the fact that in a randomized experiment, and under Assumption \ref{ass: positivity}, $\pi_{ik}(w,g)$ is known and equal to  $\E\big[\I(\Wik=w, \Gik=g)\big]$.
\end{proof}


\begin{repproposition}{prop: consistency}[Consistency of the estimator]
Consider the asymptotic regime where the number of clusters K go to infinity, i.e., $K \longrightarrow \infty$, while the cluster size remains bounded, i.e., $n_k \leq B$ for some constant B. In addition, assume that $|Y_{ik}(w,g)|/\pi_{ik}(w,g)\leq C<1$, $\forall i,k, w,g$. Then as $K \longrightarrow \infty$
$$\widehat{\tau}_{(w,g;w',g')}(\ell(\vx)) \overset{p}{\longrightarrow} \tau_{(w,g;w',g')}(\ell(\vx)).$$
\end{repproposition}

\begin{proof}
As in Proposition \ref{prop:unbiaseness} $\widehat{\mu}_{(w,g))}(\ell(\vx))$ is unbiased. Hence, for consistency to hold we need to prove that the variance goes to 0 as N goes to infinity.

Following \cite{aronow2017estimating}, one can show that the variance of $\widehat{\mu}_{(w,g)}(\ell(\vx))$ is given by:
\begin{align}
    \Var\Big(\widehat{\mu}_{(w,g)}(\ell(\vx))\Big)=&\frac{1}{N(\ell(\vx))^2}\sum_{k=1}^K\sum_{i=1}^{n_k} \I(\vXik \in \ell(\vx))\pi_{ik}(w,g)[1-\pi_{ik}(w,g)]\biggl[\frac{\Yik(w,g)}{\pi_{ik}(w,g)}\biggr]^2 \nonumber \\&+ \frac{1}{N(\ell(\vx))^2}\sum_{k=1}^K\sum_{i=1}^{n_k}\sum_{j\neq i} \I(\vXik \in \ell(\vx),\mathbf{X}_{jk} \in \ell(\vx)) \nonumber \\ &\times[\pi_{ikjk}(w,g)-\pi_{ik}(w,g)\pi_{jk}(w,g])\frac{\Yik(w,g)}{\pi_{ik}(w,g)}\frac{\Yjk(w,g)}{\pi_{jk}(w,g)}
    \end{align}
Since $|Y_{ik}(w,g)|/\pi_{ik}(w,g)\leq C<1$ and given that in each cluster the sample size belonging to leaf $\ell(\vx)$ is bounded, i.e., $n_k(\ell(\vx))\leq B(\ell(\vx))\leq B$, we have:
$$  (K\times B)^2\Var\Big(\widehat{\mu}_{(w,g)}(\ell(\vx))\Big)\leq \,\, C^2\times K \times B+ C^2\times K \times B^2 
$$
Consistency of $\widehat{\mu}_{(w,g)}(\ell(\vx))$ is therefore ensured since $\Var\Big(\widehat{\mu}_{(w,g)}(\ell(\vx))\Big)\longrightarrow 0$ as $K \longrightarrow \infty$.
Consistency of $\widehat{\tau}_{(w,g;w',g')}(\ell(\vx))$ follows by Slutsky's Theorem.
\end{proof}

\begin{proposition}
The partition $\Pi^{\star}$ such that
$$\Pi^{\star}=\text{argmax}_{\Pi\in \mathds{P}}Q_{(w,g;w',g')}(\Pi)= \frac{1}{N}\sum_{k=1}^K\sum_{i=1}^{n_k} \big(\widehat{\tau}_{(w,g;w',g')}(\ell(\vXik, \Pi))\big)^2$$
maximizes the heterogeneity across leaves.
\end{proposition}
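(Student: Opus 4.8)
The statement is a finite-sample algebraic fact about the in-sample objective, so the plan is to rewrite $Q_{(w,g;w',g')}(\Pi)$ as a partition-invariant constant plus a sample-size-weighted variance of the leaf-specific estimated effects, and then read off the maximizer. First I would record a \emph{leaf-expansion identity}: by the definition of the Horvitz--Thompson leaf estimator,
\[
N(\ell)\,\widehat{\mu}_{(w,g)}(\ell)=\sum_{k=1}^K\sum_{i=1}^{n_k}\frac{\Yik}{\pi_{ik}(w,g)}\,\I\big(\Wik=w,\Gik=g,\vXik\in\ell\big),
\]
so $N(\ell)\,\widehat{\tau}_{(w,g;w',g')}(\ell)$ is just a sum over the units of $V$ falling in $\ell$. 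Summing this over the leaves of an arbitrary partition $\Pi=\{\ell_1,\dots,\ell_M\}$ and using that each unit lies in exactly one leaf, the leaf indicators collapse and
\[
\bar\tau\;:=\;\frac{1}{N}\sum_{k=1}^K\sum_{i=1}^{n_k}\widehat{\tau}_{(w,g;w',g')}\big(\ell(\vXik,\Pi)\big)\;=\;\widehat{\mu}_{(w,g)}(V)-\widehat{\mu}_{(w',g')}(V),
\]
which is the full-sample Horvitz--Thompson contrast and hence the \emph{same} number for every $\Pi$.

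Next I would invoke the elementary bias--variance decomposition. Setting $H(\Pi):=\frac{1}{N}\sum_{\ell\in\Pi}N(\ell)\big(\widehat{\tau}_{(w,g;w',g')}(\ell)-\bar\tau\big)^2$, the cross term vanishes because $\frac{1}{N}\sum_{ik}\big(\widehat{\tau}_{(w,g;w',g')}(\ell(\vXik,\Pi))-\bar\tau\big)=0$ by the previous step, so
\[
Q_{(w,g;w',g')}(\Pi)=\frac{1}{N}\sum_{k=1}^K\sum_{i=1}^{n_k}\big(\widehat{\tau}_{(w,g;w',g')}(\ell(\vXik,\Pi))\big)^2=H(\Pi)+\bar\tau^{\,2}.
\]
The quantity $H(\Pi)$ is exactly the sample-size-weighted empirical variance of the leaf-specific estimated effects about their common mean, which is the natural formalization of ``heterogeneity across leaves.'' Since $\bar\tau^{\,2}$ does not depend on the partition, $\operatorname*{argmax}_{\Pi\in\mathds{P}}Q_{(w,g;w',g')}(\Pi)=\operatorname*{argmax}_{\Pi\in\mathds{P}}H(\Pi)$, which is the claim. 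As a corroborating observation I would also note that $H$ is monotone under refinement: replacing a leaf $\ell_0$ by a split $\ell_0=\ell_1\cup\ell_2$ weakly increases $H$ by convexity of $x\mapsto x^2$, with strict increase unless $\widehat{\tau}_{(w,g;w',g')}(\ell_1)=\widehat{\tau}_{(w,g;w',g')}(\ell_2)$ --- the precise version of the intuition stated after \eqref{eq:splitfun_in}.

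I do not expect a genuine technical obstacle; the care needed is conceptual rather than computational. The main point to pin down is the right notion of ``heterogeneity across leaves,'' namely the weighted variance $H(\Pi)$ around the partition-invariant mean $\bar\tau$ (rather than some pairwise spread), together with the recognition that this is a statement about the \emph{in-sample} objective on whatever sample $Q$ is evaluated on; the bridge to heterogeneity of the \emph{true} conditional effects is supplied separately by the population unbiasedness of $\widehat{\tau}_{(w,g;w',g')}$ in Proposition~\ref{prop:unbiasenessSP}. Finally, for the honest criterion \eqref{eq:honest} the same decomposition applies verbatim to the first term (over the training clusters), while the penalty $-(1/N^{tr}+1/N^{est})\sum_{\ell}\widehat{\Var}\big(\widehat{\tau}_{(w,g;w',g')}(\ell)\big)$ is carried along unchanged, so the ``maximize heterogeneity, penalize leaf-specific variance'' reading used in the text follows immediately.
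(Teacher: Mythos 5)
Your proof is correct, and it takes a genuinely different route from the paper's. The paper argues locally: it compares a partition that splits two sub-populations $\ell_1,\ell_2$ with differing estimated effects against the partition that merges them into $\ell_{1+2}$, uses the aggregation property of the Horvitz--Thompson estimator (the merged-leaf estimate is the size-weighted average of the two leaf estimates) and Jensen's inequality to conclude that splitting weakly increases $Q_{(w,g;w',g')}$. You instead give a global ANOVA-type decomposition: by the same HT additivity, the size-weighted mean $\bar\tau$ of the leaf estimates equals the full-sample HT contrast and is therefore partition-invariant, so $Q_{(w,g;w',g')}(\Pi)=H(\Pi)+\bar\tau^{\,2}$ with $H(\Pi)$ the weighted variance of leaf-specific estimates, and the argmax of $Q$ is the argmax of $H$. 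Both arguments rest on the same underlying fact (additivity of the HT sums across leaves), but yours buys a sharper statement: it gives "heterogeneity across leaves" a precise meaning (weighted variance about a partition-free mean) and compares arbitrary partitions, not just a split versus its merge; your refinement-monotonicity remark then recovers the paper's Jensen comparison as a special case, including the correct strictness condition (strict increase unless the two child \emph{estimates} coincide --- the paper states a strict inequality but its Jensen step only delivers the weak one unless the estimated, rather than true, effects differ). The paper's local version, on the other hand, maps more directly onto the sentence in the text motivating the splitting rule. Your closing remarks about the in-sample versus population reading and about the honest criterion are consistent with how the paper uses Proposition \ref{prop:unbiasenessSP} and Equation \eqref{eq:honest}.
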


\begin{proof}
Let $\ell_1$ and $\ell_2$ be two sub-populations with a different causal effect $\tau_{(w,g;w',g')}$, i.e.,  $\tau_{(w,g;w',g')}(\ell_1)\neq\tau_{(w,g;w',g')}(\ell_2)$. Let $\Pi$ be a partition that splits $\ell_1$ and $\ell_2$ into two leaves and let $\Pi^c$ be the partition that combines the two sub-populations into one node $\ell_{1+2}$. Then we have that $Q_{(w,g;w',g')}(\Pi)>Q_{(w,g;w',g')}(\Pi^c)$.
The proofs follows from Jensen's inequality. In fact,
for partition $\Pi$ the splitting function can be written as follows:
\begin{align}
Q_{(w,g;w',g')}(\Pi)&=\frac{1}{|\ell_1|+|\ell_2|}\sum_{ik\in \ell_1\cup\ell_2} \big(\widehat{\tau}_{(w,g;w',g')}(\ell(\vXik, \Pi))\big)^2 \nonumber\\&=\frac{1}{|\ell_1|+|\ell_2|}\Big(\sum_{ik\in \ell_1} \big(\widehat{\tau}_{(w,g;w',g')}(\ell_1)\big)^2+\sum_{ik\in \ell_2} \big(\widehat{\tau}_{(w,g;w',g')}(\ell_2)\big)^2\Big) \nonumber
\end{align}
For partition $\Pi^c$ we have:
\begin{align}
Q_{(w,g;w',g')}(\Pi^c)&=\frac{1}{|\ell_1|+|\ell_2|}\sum_{ik\in \ell_1\cup\ell_2} \big(\widehat{\tau}_{(w,g;w',g')}(\ell(\vXik, \Pi^c))\big)^2=\frac{1}{|\ell_1|+|\ell_2|}\sum_{ik\in \ell_1\cup\ell_2} \big(\widehat{\tau}_{(w,g;w',g')}(\ell_{1+2})\big)^2\nonumber\\&=\frac{1}{|\ell_1|+|\ell_2|}\sum_{ik\in \ell_1\cup\ell_2} \Big[\frac{1}{|\ell_1|+|\ell_2|}\Big(\sum_{ik\in \ell_1} \widehat{\tau}_{(w,g;w',g')}(\ell_1)+\sum_{ik\in \ell_2} \widehat{\tau}_{(w,g;w',g')}(\ell_2)\Big)\Big]^2\nonumber\\&=\Big[\frac{1}{|\ell_1|+|\ell_2|}\Big(\sum_{ik\in \ell_1} \widehat{\tau}_{(w,g;w',g')}(\ell_1)+\sum_{ik\in \ell_2} \widehat{\tau}_{(w,g;w',g')}(\ell_2)\Big)\Big]^2 \nonumber
\end{align}
where the second-last equality holds because of the properties of the Horvitz-Thompson estimator. 
Thanks to Jensen's inequality
\begin{align*}\frac{1}{|\ell_1|+|\ell_2|}\Big(\sum_{ik\in \ell_1} \big(\widehat{\tau}_{(w,g;w',g')}(\ell_1)\big)^2+\sum_{ik\in \ell_2} \big(\widehat{\tau}_{(w,g;w',g')}(\ell_2)\big)^2\Big)\geq\\ \Big[\frac{1}{|\ell_1|+|\ell_2|}\Big(\sum_{ik\in \ell_1} \widehat{\tau}_{(w,g;w',g')}(\ell_1)+\sum_{ik\in \ell_2} \widehat{\tau}_{(w,g;w',g')}(\ell_2)\Big)\Big]^2 \end{align*}
Hence, $Q_{(w,g;w',g')}(\Pi) \geq Q_{(w,g;w',g')}(\Pi^c)$.
\end{proof}

\section{Further Details of the Variance Estimator of Leaf-Specific CACE}
\label{appendix_varcace}
If in a generic leaf $\ell(\vx)$ there are some pairs of units ($i$,$j$) whose joint probability of the exposure condition $(w,g)$ is zero (namely, $\pi_{ikjk}(w,g)=0$), the variance of $\widehat{\mu_{(w,g)}}(\ell(\vx))$ can be estimated following a result from \cite{aronow2017estimating}. Such estimator, denoted by $\widehat{\mathbb{V}}^{c}\Big(\widehat{\mu}_{(w,g)}(\ell(\vx))\Big)$, is the sum of two components: (i) the estimated variance of leaf-specific potential outcomes, $\widehat{\Var}\Big(\widehat{\mu}_{(w,g)}(\ell(\vx)\Big))$ in \eqref{eq:var} for the case when $\pi_{ikjk}(w,g)>0 \,\, \forall  i,j,k$, and (ii) a correction term $\widehat{A}_{(w,g)}(\ell(\vx))$:
\begin{equation}
\widehat{\mathbb{V}}^{c}\Big(\widehat{\mu}_{(w,g)}(\ell(\vx))\Big)=\widehat{\Var}\Big(\widehat{\mu}_{(w,g)}(\ell(\vx))\Big) + \widehat{A}_{(w,g)}(\ell(\vx))
\end{equation}
where
\begin{eqnarray}
\widehat{A}_{(w,g)}(\ell(\vx))&=&\frac{1}{N(\ell(\vx))^2}\sum_{k=1}^K\sum_{i=1}^{n_k} \sum_{j \ne i: \pi_{ikjk}(w,g)=0}\Bigg[ \frac{\mathds{1}(W_{ik}=w, G_{ik}=g,\vXik \in \ell(\vx))Y^{2}_{ik}}{2\pi_{ik}(w,g)} + \nonumber \\
&+& \frac{\mathds{1}(W_{jk}=w, G_{jk}=g, \vXik \in \ell(\vx))Y^{2}_{jk}}{2\pi_{jk}(w,g)} \Bigg].
\end{eqnarray}
Note that the correction term $\widehat{A}_{(w,g)}(\ell(\vx))$ is zero if the leaf does not have pairs of units such that $\pi_{ikjk}(w,g,w,g)=0$. Furthermore, as in \cite{aronow2017estimating},  $\widehat{\mathbb{V}}^{c}\Big(\widehat{\mu}_{(w,g)}(\ell(\vx))\Big)$ is a conservative estimator of the leaf-specific variance, as the following holds:
\begin{equation}
\mathbb{E}\Bigg[\widehat{\mathbb{V}}^{c}\Big(\widehat{\mu}_{(w,g)}(\ell(\vx))\Big) \Bigg]=\mathbb{E}\Bigg[\widehat{\mathbb{V}}\Big(\widehat{\mu}_{(w,g)}(\ell(\vx))\Big) + \widehat{A}_{(w,g)}(\ell(\vx))\Bigg]\ge \mathbb{V}\Big(\widehat{\mu}_{(w,g)}(\ell(\vx))\Big).   
\end{equation}
We now explicit the covariance $\widehat{\mathbb{C}}^{c}\Big(\widehat{\mu}_{(w,g)}(\ell(\vx)),\widehat{\mu}_{(w',g')}(\ell(\vx)) \Big)$ in the case we have pairs of units $(i,j)$, whose joint probability of experiencing the conditions $(w,g)$ and $(w',g')$, respectively, is zero, that is, $\pi_{ikjk}(w,g;w',g')=0$:

\begin{small}
\begin{eqnarray}
\widehat{\mathbb{C}}\Big(\widehat{\mu}_{w,g}(\ell(\vx)),\widehat{\mu}_{w',g'}(\ell(\vx))\Big)&=& \frac{1}{N(\ell(\vx))^2}
\sum_{k=1}^K\sum_{i=1}^{n_k}\sum_{j \neq i: \pi_{ikjk}(w,g;w',g')>0} \nonumber\\
&&\frac{\I(\Wik=w,\Gik=g, \mathbf{X}_{ik} \in \ell(\vx))\I(\Wjk=w',\Gjk=g', \mathbf{X}_{jk} \in \ell(\vx))}{\pi_{ikjk}(w,g;w',g')} \nonumber\\
&\times&[\pi_{ikjk}(w,g;w',g')-\pi_{ik}(w,g)\pi_{jk}(w',g')]\frac{\Yik}{\pi_{ik}(w,g)}\frac{\Yjk}{\pi_{jk}(w',g')} \nonumber  \\
 &-& \frac{1}{N(\ell(\vx))^2}\sum_{k=1}^K\sum_{i=1}^{n_k}\sum_{j \neq i: \pi_{ikjk}(w,g;w',g')=0}\Big[\frac{\I(\Wik=w,\Gik=g, \mathbf{X}_{ik} \in \ell(\vx))\Yik^2}{2\pi_{ik}(w,g)} \nonumber \\ &+&\frac{\I(\Wik=w',\Gik=g', \mathbf{X}_{ik} \in \ell(\vx))\Yik^2}{2\pi_{ik}(w',g')}\Big]. 
\end{eqnarray}
\end{small}

\movedtext{
\section{Performance Measures for Monte Carlo Simulations}\label{sec:performance_measures}
In all the scenarios the performance of the NCT algorithm is evaluated using the following measures, averaged over the $M$ generated data sets: 
\begin{itemize}
\item Average number of correctly discovered heterogeneous causal rules corresponding to the leaves of the generated NCTs (reported with respect to the effect sizes in Figures \ref{fig:correct_leaves} and \ref{fig:correct_leaves_4_rules}) in the discovery sample;
\item Average conditional treatment and spillover effects estimated for each correctly detected heterogeneous terminal leaf  (reported as $\hat{\tau}$ and $\hat{\delta}$ in Tables \ref{tab:10clusters}, \ref{tab:20clusters}, \ref{tab:30clusters} and \ref{tab:second_scenario});
\item Average standard errors estimated for each correctly detected heterogeneous terminal leaf (reported as $\hat{se}(\hat{\tau})$ and $\hat{se}(\hat{\delta})$ in Tables \ref{tab:10clusters}, \ref{tab:20clusters}, \ref{tab:30clusters} and \ref{tab:second_scenario});
\item Monte Carlo bias in the estimation sample:
$$\text{Bias}_m(V^{est})=
\frac{1}{N^{\text{est}}}\sum_{k\in \mathcal{K}^{est}}\sum_{i=1}^{n_k}\Big(\tau_{(w,g,w'g')}(\vXik)-\widehat{\tau}_{(w,g,w'g')}(\ell(\vXik,, \Pi_m),V^{est})\Big),
$$
$$
\text{Bias}(V^{est})=\frac{1}{M}\sum_{m=1}^M \text{Bias}_m(V^{est})
$$
where $\Pi_m$ is the partition selected in simulation $m$;
\item  Monte Carlo MSE in the estimation sample:
$$
\text{MSE}_m(V^{est})=
\frac{1}{N^{\text{est}}}\sum_{k\in \mathcal{K}^{est}}\sum_{i=1}^{n_k}\Big(\tau_{(w,g,w'g')}(\vXik)-\widehat{\tau}_{(w,g,w'g')}(\ell(\vXik), \Pi_m,V^{est})\Big)^2,
$$
$$
\text{MSE}(V^{est})=\frac{1}{M}\sum_{m=1}^M \text{MSE}_m(V^{est});
$$
\item  Coverage, computed as the average proportion of units for whom where the estimated $95\%$ confidence interval of the causal effect in the assigned leaf  includes the true value:
$$\text{C}_m(V^{est})=
\frac{1}{N^{\text{est}}}\sum_{k\in \mathcal{K}^{est}}\sum_{i=1}^{n_k}\I\Big(\tau_{(w,g,w'g')}(\vXik) \in \widehat{\text{CI}}_{95}\Big(\widehat{\tau}_{(w,g,w'g')}(\ell(\vXik, \Pi_m),V^{est})\Big) \Big),
$$
$$
\text{C}(V^{est})=\frac{1}{M}\sum_{m=1}^M \text{C}_m(V^{est}).
$$
\end{itemize}
}

\section{Estimation Results for Monte Carlo Simulations}


\vspace{2cm}
\begin{table}[H]
\centering
\scriptsize
\begin{tabular}{cccccccc}
\multicolumn{1}{l}{}                  & \multicolumn{7}{c}{Treatment Effects}                                                                                                                 \\ \cline{2-8} 
\multicolumn{1}{l}{Effect Size}       & $\hat{\tau}_{\ell_1}$   & $\hat{se}(\hat{\tau}_{\ell_1})$   & $\hat{\tau}_{\ell_2}$   & $\hat{se}(\hat{\tau}_{\ell_2})$   & MSE   & Bias   & Coverage \\ \hline
0                                     & 0.373                   & 0.321                             & -0.564                  & 0.299                             & 0.305 & 0.452  & 0.667    \\
\rowcolor{gray!10} 1 & 1.096                   & 0.365                             & -1.131                  & 0.371                             & 0.134 & 0.113  & 0.969    \\
2                                     & 2.022                   & 0.513                             & -2.014                  & 0.511                             & 0.239 & 0.018  & 0.945    \\
\rowcolor{gray!10} 3 & 3.027                   & 0.696                             & -3.001                  & 0.703                             & 0.463 & 0.014  & 0.951    \\
4                                     & 4.071                   & 0.905                             & -3.961                  & 0.885                             & 0.737 & 0.016  & 0.947    \\
\rowcolor{gray!10} 5 & 4.961                   & 1.091                             & -4.977                  & 1.085                             & 1.131 & -0.031 & 0.947    \\
6                                     & 6.005                   & 1.299                             & -5.951                  & 1.303                             & 1.476 & -0.022 & 0.952    \\
\rowcolor{gray!10} 7 & 7.023                   & 1.504                             & -7.099                  & 1.521                             & 2.006 & 0.061  & 0.954    \\
8                                     & 8.066                   & 1.707                             & -7.979                  & 1.693                             & 2.511 & 0.022  & 0.952    \\
\rowcolor{gray!10} 9 & 8.929                   & 1.903                             & -9.119                  & 1.911                             & 3.157 & 0.024  & 0.960    \\
10                                    & 10.126                  & 2.167                             & -10.082                 & 2.146                             & 4.164 & 0.104  & 0.958    \\ \hline
\multicolumn{1}{l}{}                  & \multicolumn{7}{c}{Spillover Effects}                                                                                                                 \\ \cline{2-8} 
\multicolumn{1}{l}{}                  & $\hat{\delta}_{\ell_1}$ & $\hat{se}(\hat{\delta}_{\ell_1})$ & $\hat{\delta}_{\ell_2}$ & $\hat{se}(\hat{\delta}_{\ell_2})$ & MSE   & Bias   & Coverage \\ \hline
0                                     & 0.408                   & 0.299                             & -0.410                  & 0.260                             & 0.232 & 0.409  & 0.750    \\
\rowcolor{gray!10} 1 & 1.079                   & 0.289                             & -1.090                  & 0.292                             & 0.074 & 0.085  & 0.971    \\
2                                     & 2.004                   & 0.380                             & -2.012                  & 0.379                             & 0.118 & 0.008  & 0.966    \\
\rowcolor{gray!10} 3 & 3.041                   & 0.503                             & -2.992                  & 0.494                             & 0.189 & 0.016  & 0.968    \\
4                                     & 3.996                   & 0.629                             & -3.998                  & 0.627                             & 0.290 & -0.003 & 0.975    \\
\rowcolor{gray!10} 5 & 5.005                   & 0.757                             & -5.015                  & 0.765                             & 0.439 & 0.010  & 0.967    \\
6                                     & 6.018                   & 0.891                             & -5.966                  & 0.892                             & 0.571 & -0.008 & 0.977    \\
\rowcolor{gray!10}7  & 6.993                   & 1.026                             & -7.046                  & 1.031                             & 0.770 & 0.020  & 0.968    \\
8                                     & 7.929                   & 1.159                             & -7.980                  & 1.167                             & 0.999 & -0.045 & 0.974    \\
\rowcolor{gray!10} 9 & 9.082                   & 1.313                             & -8.920                  & 1.300                             & 1.225 & 0.001  & 0.971    \\
10                                    & 10.050                  & 1.449                             & -9.981                  & 1.441                             & 1.495 & 0.015  & 0.975    \\ \hline
\end{tabular}
\caption{Simulations' results for the first scenario (10 clusters)}
\label{tab:10clusters}
\end{table}


\vspace{2cm}
\begin{table}[H]
\centering
\scriptsize
\begin{tabular}{cccccccc}
\multicolumn{1}{l}{}                  & \multicolumn{7}{c}{Treatment Effects}                                                                                                                 \\ \cline{2-8} 
\multicolumn{1}{l}{Effect Size}       & $\hat{\tau}_{\ell_1}$   & $\hat{se}(\hat{\tau}_{\ell_1})$   & $\hat{\tau}_{\ell_2}$   & $\hat{se}(\hat{\tau}_{\ell_2})$   & MSE   & Bias   & Coverage \\ \hline
0                                     & 0.311                   & 0.210                             & -0.213                  & 0.213                             & 0.089 & 0.247  & 0.885    \\
\rowcolor{gray!10} 1 & 1.054                   & 0.259                             & -1.057                  & 0.261                             & 0.069 & 0.055  & 0.961    \\
2                                     & 1.998                   & 0.368                             & -2.006                  & 0.366                             & 0.136 & 0.002  & 0.941    \\
\rowcolor{gray!10} 3 & 2.993                   & 0.495                             & -3.040                  & 0.503                             & 0.239 & 0.016  & 0.953    \\
4                                     & 4.034                   & 0.643                             & -4.016                  & 0.645                             & 0.372 & 0.025  & 0.964    \\
\rowcolor{gray!10} 5 & 5.007                   & 0.784                             & -4.995                  & 0.774                             & 0.579 & 0.001  & 0.950    \\
6                                     & 6.072                   & 0.936                             & -6.027                  & 0.930                             & 0.798 & 0.050  & 0.945    \\
\rowcolor{gray!10} 7 & 6.977                   & 1.074                             & -7.007                  & 1.086                             & 1.109 & -0.008 & 0.938    \\
8                                     & 7.939                   & 1.213                             & -7.955                  & 1.213                             & 1.237 & -0.053 & 0.949    \\
\rowcolor{gray!10} 9 & 8.967                   & 1.362                             & -9.019                  & 1.376                             & 1.798 & -0.007 & 0.943    \\
10                                    & 9.974                   & 1.540                             & -9.947                  & 1.502                             & 2.108 & -0.039 & 0.959    \\ \hline
\multicolumn{1}{l}{}                  & \multicolumn{7}{c}{Spillover Effects}                                                                                                                 \\ \cline{2-8} 
\multicolumn{1}{l}{}                  & $\hat{\delta}_{\ell_1}$ & $\hat{se}(\hat{\delta}_{\ell_1})$ & $\hat{\delta}_{\ell_2}$ & $\hat{se}(\hat{\delta}_{\ell_2})$ & MSE   & Bias   & Coverage \\ \hline
0                                     & 0.255                   & 0.184                             & -0.108                  & 0.173                             & 0.043 & 0.164  & 0.923    \\
\rowcolor{gray!10} 1 & 1.050                   & 0.207                             & -1.036                  & 0.208                             & 0.037 & 0.044  & 0.966    \\
2                                     & 2.001                   & 0.273                             & -2.003                  & 0.272                             & 0.057 & 0.002  & 0.969    \\
\rowcolor{gray!10} 3 & 2.999                   & 0.359                             & -3.013                  & 0.357                             & 0.098 & 0.006  & 0.976    \\
4                                     & 3.998                   & 0.448                             & -4.004                  & 0.449                             & 0.156 & 0.001  & 0.972    \\
\rowcolor{gray!10} 5 & 4.996                   & 0.543                             & -5.014                  & 0.542                             & 0.191 & 0.005  & 0.985    \\
6                                     & 5.985                   & 0.641                             & -6.001                  & 0.642                             & 0.269 & -0.007 & 0.985    \\
\rowcolor{gray!10}7  & 6.963                   & 0.736                             & -7.011                  & 0.740                             & 0.363 & -0.013 & 0.977    \\
8                                     & 8.030                   & 0.839                             & -7.997                  & 0.837                             & 0.503 & 0.013  & 0.983    \\
\rowcolor{gray!10} 9 & 9.015                   & 0.938                             & -9.021                  & 0.938                             & 0.606 & 0.018  & 0.980    \\
10                                    & 10.060                  & 1.042                             & -9.985                  & 1.038                             & 0.688 & 0.022  & 0.984    \\ \hline
\end{tabular}
\caption{Simulations' results for the first scenario (20 clusters)}
\label{tab:20clusters}
\end{table}

\section{Additional Monte Carlo Simulations} \label{appendix:monte_carlo}

We included in the simulation study two additional sets of simulations where (4) we introduce correlation between the covariates, and (5) we replace the Erd\H{o}s-R\'{e}nyi model for network formation with an exponential random graph (ERGM) model introducing homophily within the clusters. These two additional simulations are conducted with 30 clusters and under the first scenario introduced in Section \ref{sec:simulations}, where the heterogeneity is the same for the two causal effects of interest.

\subsection{Correlated Covariates}

In Figure \ref{fig:correlated} we report the number of correctly detected leaves under low and high correlation (0.25 and 0.5), while in Tables \ref{tab:correlated_0.25} and \ref{tab:correlated_0.50} we report the estimated treatment and spillover effects with their standard error in the two heterogeneous leaves, together with the MSE, bias and coverage of the average treatment and spillover effects in the sample. 
From Figure \ref{fig:correlated} one can see that the correlation between covariates compromises the ability of the algorithm to correctly identify the heterogeneous subgroups. This is due to the fact that, as the covariates become more \textit{similar} to each other, it becomes harder for the algorithm to detect the true HDV. Such a problem is common to all tree-based algorithms. Hence, we argue that one should carefully check the correlation patterns between the variables to get a sense of the reliability of the discovered subgroups.

\begin{figure}[H]
\centering
\includegraphics[width=1\textwidth]{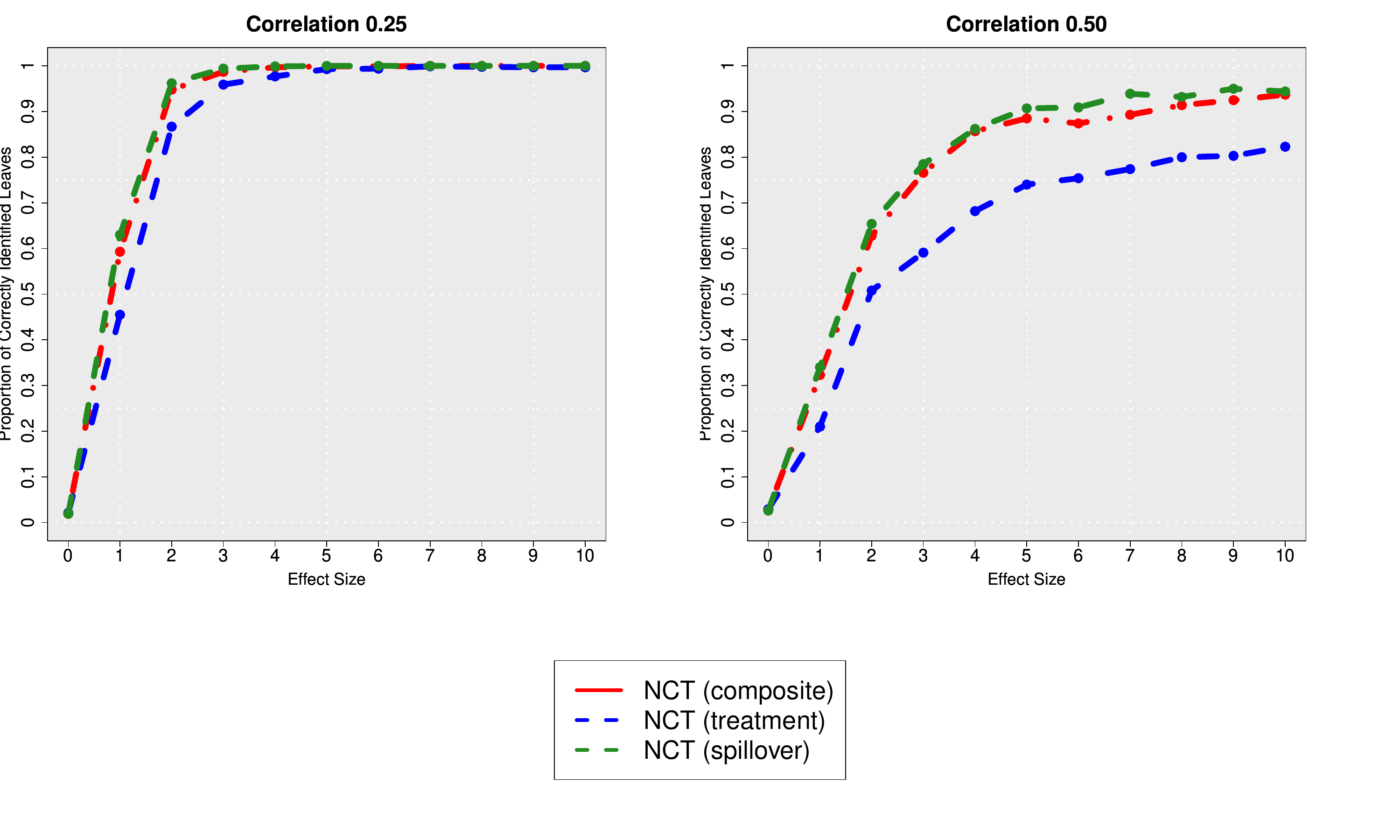}
		\caption{Simulations' results for correctly discovered leaves in the first scenario with correlated covariates}
	\label{fig:correlated}
\end{figure}


\begin{table}[H]
\scriptsize
\centering

\begin{tabular}{cccccccc}
\multicolumn{1}{l}{}                  & \multicolumn{7}{c}{Treatment Effects}                                                                                                                 \\ \cline{2-8} 
\multicolumn{1}{l}{Effect Size}       & $\hat{\tau}_{\ell_1}$   & $\hat{se}(\hat{\tau}_{\ell_1})$   & $\hat{\tau}_{\ell_2}$   & $\hat{se}(\hat{\tau}_{\ell_2})$   & MSE   & Bias   & Coverage \\ \hline
0                                     & -0.103                  & 0.163                             & -0.058                  & 0.155                             & 0.031 & -0.081 & 1.000    \\
\rowcolor{gray!10} 1 & 1.028                   & 0.195                             & -1.027                  & 0.201                             & 0.037 & 0.001  & 0.960    \\
2                                     & 1.997                   & 0.278                             & -2.008                  & 0.281                             & 0.072 & -0.005 & 0.963    \\
\rowcolor{gray!10} 3 & 2.994                   & 0.379                             & -3.007                  & 0.380                             & 0.128 & -0.007 & 0.963    \\
4                                     & 3.990                   & 0.487                             & -3.985                  & 0.482                             & 0.191 & 0.003  & 0.968    \\
\rowcolor{gray!10} 5 & 5.026                   & 0.594                             & -5.001                  & 0.594                             & 0.342 & 0.012  & 0.954    \\
6                                     & 6.057                   & 0.709                             & -6.018                  & 0.706                             & 0.425 & 0.020  & 0.963    \\
\rowcolor{gray!10} 7 & 6.966                   & 0.818                             & -6.991                  & 0.816                             & 0.657 & -0.012 & 0.942    \\
8                                     & 7.984                   & 0.929                             & -7.981                  & 0.925                             & 0.703 & 0.001  & 0.960    \\
\rowcolor{gray!10} 9 & 8.935                   & 1.027                             & -9.014                  & 1.047                             & 0.904 & -0.040 & 0.959    \\
10                                    & 9.995                   & 1.157                             & -10.070                 & 1.179                             & 1.225 & -0.037 & 0.960    \\ \hline
\multicolumn{1}{l}{}                  & \multicolumn{7}{c}{Spillover Effects}                                                                                                                 \\ \cline{2-8} 
\multicolumn{1}{l}{}                  & $\hat{\delta}_{\ell_1}$ & $\hat{se}(\hat{\delta}_{\ell_1})$ & $\hat{\delta}_{\ell_2}$ & $\hat{se}(\hat{\delta}_{\ell_2})$ & MSE   & Bias   & Coverage \\ \hline
0                                     & -0.060                  & 0.140                             & -0.081                  & 0.138                             & 0.028 & -0.071 & 0.947    \\
\rowcolor{gray!10} 1 & 1.027                   & 0.158                             & -1.023                  & 0.161                             & 0.023 & 0.002  & 0.966    \\
2                                     & 1.996                   & 0.207                             & -2.014                  & 0.208                             & 0.034 & -0.009 & 0.978    \\
\rowcolor{gray!10} 3 & 3.002                   & 0.272                             & -3.004                  & 0.273                             & 0.055 & -0.001 & 0.982    \\
4                                     & 4.019                   & 0.343                             & -4.005                  & 0.342                             & 0.081 & 0.007  & 0.986    \\
\rowcolor{gray!10} 5 & 4.978                   & 0.413                             & -4.989                  & 0.414                             & 0.121 & -0.005 & 0.978    \\
6                                     & 6.002                   & 0.489                             & -5.974                  & 0.488                             & 0.165 & 0.014  & 0.980    \\
\rowcolor{gray!10}7  & 6.966                   & 0.562                             & -7.014                  & 0.563                             & 0.225 & -0.024 & 0.978    \\
8                                     & 7.971                   & 0.638                             & -7.954                  & 0.639                             & 0.291 & 0.009  & 0.975    \\
\rowcolor{gray!10} 9 & 9.021                   & 0.716                             & -8.992                  & 0.716                             & 0.349 & 0.014  & 0.982    \\
10                                    & 10.000                  & 0.792                             & -9.986                  & 0.790                             & 0.425 & 0.007  & 0.979    \\ \hline
\end{tabular}

\caption{Simulations' results for the first scenario with correlated covariates (0.25)}
\label{tab:correlated_0.25}
\end{table}


\begin{table}[H]
\scriptsize
\centering

\begin{tabular}{cccccccc}
\multicolumn{1}{l}{}                  & \multicolumn{7}{c}{Treatment Effects}                                                                                                                 \\ \cline{2-8} 
\multicolumn{1}{l}{Effect Size}       & $\hat{\tau}_{\ell_1}$   & $\hat{se}(\hat{\tau}_{\ell_1})$   & $\hat{\tau}_{\ell_2}$   & $\hat{se}(\hat{\tau}_{\ell_2})$   & MSE   & Bias   & Coverage \\ \hline
0                                     & 0.114                   & 0.158                             & -0.100                  & 0.157                             & 0.021 & 0.007  & 0.964    \\
\rowcolor{gray!10} 1 & 1.105                   & 0.189                             & -1.100                  & 0.188                             & 0.042 & 0.003  & 0.941    \\
2                                     & 2.124                   & 0.266                             & -2.125                  & 0.275                             & 0.066 & -0.001 & 0.960    \\
\rowcolor{gray!10} 3 & 3.099                   & 0.365                             & -3.136                  & 0.366                             & 0.112 & -0.018 & 0.958    \\
4                                     & 4.098                   & 0.465                             & -4.091                  & 0.458                             & 0.187 & 0.003  & 0.958    \\
\rowcolor{gray!10} 5 & 5.099                   & 0.565                             & -5.142                  & 0.567                             & 0.297 & -0.022 & 0.948    \\
6                                     & 6.119                   & 0.669                             & -6.124                  & 0.678                             & 0.427 & -0.003 & 0.963    \\
\rowcolor{gray!10} 7 & 7.127                   & 0.775                             & -7.142                  & 0.779                             & 0.490 & -0.007 & 0.961    \\
8                                     & 8.128                   & 0.879                             & -8.071                  & 0.877                             & 0.702 & 0.029  & 0.956    \\
\rowcolor{gray!10} 9 & 9.089                   & 0.976                             & -9.109                  & 0.980                             & 0.840 & -0.010 & 0.951    \\
10                                    & 10.030                  & 1.081                             & -10.146                 & 1.086                             & 1.072 & -0.058 & 0.956    \\ \hline
\multicolumn{1}{l}{}                  & \multicolumn{7}{c}{Spillover Effects}                                                                                                                 \\ \cline{2-8} 
\multicolumn{1}{l}{}                  & $\hat{\delta}_{\ell_1}$ & $\hat{se}(\hat{\delta}_{\ell_1})$ & $\hat{\delta}_{\ell_2}$ & $\hat{se}(\hat{\delta}_{\ell_2})$ & MSE   & Bias   & Coverage \\ \hline
0                                     & 0.072                   & 0.138                             & -0.130                  & 0.133                             & 0.016 & -0.029 & 0.964    \\
\rowcolor{gray!10} 1 & 1.116                   & 0.151                             & -1.100                  & 0.151                             & 0.022 & 0.008  & 0.975    \\
2                                     & 2.121                   & 0.200                             & -2.113                  & 0.200                             & 0.031 & 0.004  & 0.982    \\
\rowcolor{gray!10} 3 & 3.080                   & 0.261                             & -3.113                  & 0.260                             & 0.050 & -0.016 & 0.979    \\
4                                     & 4.110                   & 0.327                             & -4.104                  & 0.325                             & 0.067 & 0.003  & 0.986    \\
\rowcolor{gray!10} 5 & 5.107                   & 0.394                             & -5.117                  & 0.395                             & 0.109 & -0.005 & 0.977    \\
6                                     & 6.101                   & 0.463                             & -6.129                  & 0.466                             & 0.145 & -0.014 & 0.983    \\
\rowcolor{gray!10}7  & 7.143                   & 0.536                             & -7.091                  & 0.534                             & 0.193 & 0.026  & 0.981    \\
8                                     & 8.068                   & 0.604                             & -8.121                  & 0.605                             & 0.232 & -0.027 & 0.975    \\
\rowcolor{gray!10} 9 & 9.095                   & 0.677                             & -9.109                  & 0.676                             & 0.292 & -0.007 & 0.987    \\
10                                    & 10.114                  & 0.747                             & -10.033                 & 0.744                             & 0.352 & 0.041  & 0.987    \\ \hline
\end{tabular}

\caption{Simulations' results for the first scenario with correlated covariates (0.50)}
\label{tab:correlated_0.50}
\end{table}

Nevertheless, for both correlation levels (0.25 and 0.50) the estimator seems to perform well within correctly detected leaves (see Tables \ref{tab:correlated_0.25} and \ref{tab:correlated_0.50}).

\subsection{Network Homophily Within the Clusters}

Table \ref{tab:homophily} shows the results in the case of network homophily within the clusters. In this case, we find larger standard errors than the original scenario reported in \ref{tab:30clusters} without homophily. As a consequence, the Monte Carlo MSE is also slightly larger. 

Moreover, as we can see from Figure \ref{fig:homophily} there is a decrease in the ability of the algorithm to discover the true leaves. Indeed, if one compares this Figure with the right panel of Figure \ref{fig:correct_leaves}, one can see how the correct discovery of the true leaves is \textit{slower} in the case with homophily network. This is due to the fact that the standard errors are larger than in the original first scenario.

\begin{figure}[H]
\centering
\includegraphics[width=0.5\textwidth]{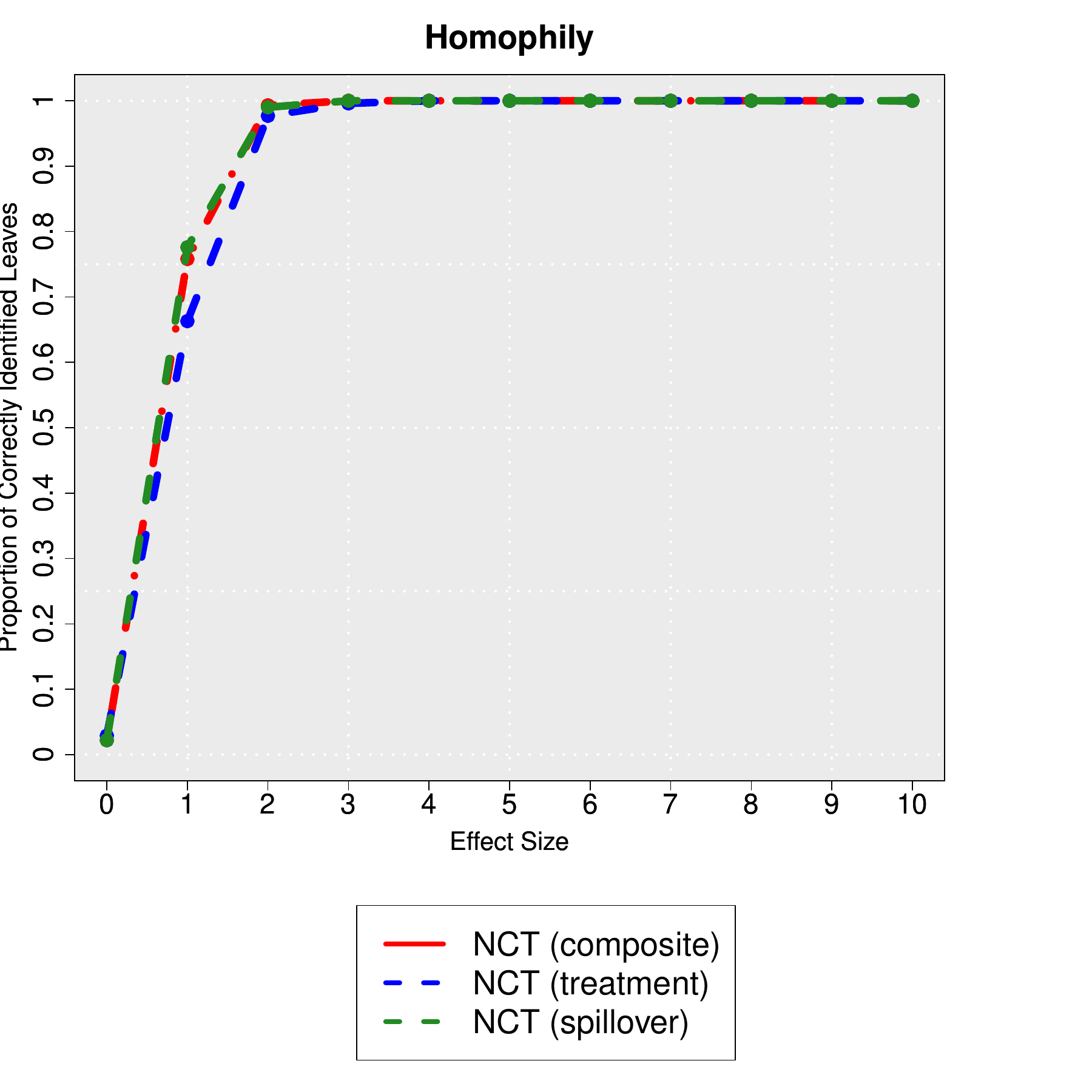}
		\caption{Simulations' results for correctly discovered leaves in the first scenario with homophily network}
	\label{fig:homophily}
\end{figure}


\begin{table}[H]
\scriptsize
\centering

\begin{tabular}{cccccccc}
\multicolumn{1}{l}{}                  & \multicolumn{7}{c}{Treatment Effects}                                                                                                                 \\ \cline{2-8} 
\multicolumn{1}{l}{Effect Size}       & $\hat{\tau}_{\ell_1}$   & $\hat{se}(\hat{\tau}_{\ell_1})$   & $\hat{\tau}_{\ell_2}$   & $\hat{se}(\hat{\tau}_{\ell_2})$   & MSE   & Bias   & Coverage \\ \hline
0                                     & -0.047                  & 0.174                             & 0.100                   & 0.183                             & 0.072 & 0.027  & 0.773    \\
\rowcolor{gray!10} 1 & 1.036                   & 0.219                             & -1.035                  & 0.216                             & 0.044 & 0.000  & 0.959    \\
2                                     & 2.007                   & 0.309                             & -1.993                  & 0.309                             & 0.077 & 0.007  & 0.971    \\
\rowcolor{gray!10} 3 & 2.999                   & 0.418                             & -3.013                  & 0.421                             & 0.151 & -0.007 & 0.972    \\
4                                     & 4.009                   & 0.533                             & -4.024                  & 0.537                             & 0.236 & -0.007 & 0.972    \\
\rowcolor{gray!10} 5 & 4.974                   & 0.656                             & -4.993                  & 0.660                             & 0.343 & -0.009 & 0.964    \\
6                                     & 6.067                   & 0.781                             & -6.017                  & 0.781                             & 0.490 & 0.025  & 0.975    \\
\rowcolor{gray!10} 7 & 7.000                   & 0.899                             & -7.004                  & 0.907                             & 0.706 & -0.002 & 0.969    \\
8                                     & 8.017                   & 1.026                             & -7.965                  & 1.020                             & 0.830 & 0.026  & 0.968    \\
\rowcolor{gray!10} 9 & 8.977                   & 1.145                             & -9.004                  & 1.158                             & 1.030 & -0.014 & 0.965    \\
10                                    & 9.953                   & 1.271                             & -10.058                 & 1.273                             & 1.267 & -0.052 & 0.966    \\ \hline
\multicolumn{1}{l}{}                  & \multicolumn{7}{c}{Spillover Effects}                                                                                                                 \\ \cline{2-8} 
\multicolumn{1}{l}{}                  & $\hat{\delta}_{\ell_1}$ & $\hat{se}(\hat{\delta}_{\ell_1})$ & $\hat{\delta}_{\ell_2}$ & $\hat{se}(\hat{\delta}_{\ell_2})$ & MSE   & Bias   & Coverage \\ \hline
0                                     & 0.061                   & 0.156                             & 0.065                   & 0.166                             & 0.035 & 0.063  & 0.909    \\
\rowcolor{gray!10} 1 & 1.027                   & 0.187                             & -1.039                  & 0.187                             & 0.029 & -0.006 & 0.964    \\
2                                     & 1.997                   & 0.253                             & -1.993                  & 0.253                             & 0.047 & 0.002  & 0.981    \\
\rowcolor{gray!10} 3 & 2.995                   & 0.335                             & -3.021                  & 0.337                             & 0.094 & -0.013 & 0.965    \\
4                                     & 4.018                   & 0.425                             & -3.979                  & 0.424                             & 0.132 & 0.020  & 0.979    \\
\rowcolor{gray!10} 5 & 4.952                   & 0.515                             & -4.989                  & 0.516                             & 0.196 & -0.019 & 0.968    \\
6                                     & 6.015                   & 0.610                             & -6.024                  & 0.613                             & 0.285 & -0.005 & 0.970    \\
\rowcolor{gray!10}7  & 7.001                   & 0.707                             & -7.030                  & 0.707                             & 0.341 & -0.015 & 0.984    \\
8                                     & 7.995                   & 0.803                             & -7.999                  & 0.802                             & 0.440 & -0.002 & 0.980    \\
\rowcolor{gray!10} 9 & 9.011                   & 0.899                             & -8.994                  & 0.897                             & 0.548 & 0.009  & 0.983    \\
10                                    & 10.049                  & 0.996                             & -9.922                  & 0.991                             & 0.748 & 0.064  & 0.974    \\ \hline
\end{tabular}

\caption{Simulations' results for the first scenario with network homophily  within the clusters (30 clusters)}
\label{tab:homophily}
\end{table}

\subsection{Mixture of Different Types of Covariates}

\review{Tree-based algorithms are very appealing due to their ability to handle diverse data types, encompassing continuous, categorical, ordinal, and binary variables. This unique characteristic eliminates the need for data transformations. In particular, trees excel in accommodating continuous predictors effortlessly, obviating the necessity of generating dummy variables. However, one of the known pitfalls of the original CART algorithm \citep{friedman1984classification} is its tendency towards selecting continuous covariates over categorical and binary ones \cite[see][for a discussion]{loh1997split,loh2002regression, hothorn2006unbiased}. As detailed in Section \ref{sec: ctrees}, we propose a different criterion function for recursive binary splitting. It is, thus, important to test whether our proposed NCT algorithm suffers from the same possible pitfall of CART. 

To test, we construct a simulation scenario in which we have a mixture of continuous and binary covariates. We keep the same structure of the first scenario in Section \ref{sec:simulations} with 30 clusters. Out of the 10 covariates $X_{ip}$, the two HDV are sampled from Bernoulli distributions with probability 0.5: $X_{i1}, X_{i2}  \sim Ber(0.5)$; the other 8 covariates were instead sampled from a multivariate normal distribution $(X_{i3},..., X_{i10}) \sim \mathcal{N}(0, \Upsigma)$, where $\Upsigma$ is an $8 \times 8$ identity matrix. 

Figure \ref{fig:mixture} reports the results for the proportion of correctly identified leaves. Table \ref{tab:mixture} reports the results for the estimation results. This figure, vis à vis with Figure \ref{fig:correct_leaves} in the main text, shows a slight decrease in the rate of convergence towards the full discovery of all the correct leaves. This is paired with an increase in the MSE (as compared to Table \ref{tab:30clusters} in the main text). However, while the overall performance slightly deteriorates, NCT seems to be robust to the pitfalls of CART in the presence of a mixture of different types of covariates. NCT is, indeed, able to discover the correct, binary HDVs, while depicting a robust performance in estimation (with unbiased results and nominal coverage rate).}

\begin{figure}[H]
\centering
\includegraphics[width=0.5\textwidth]{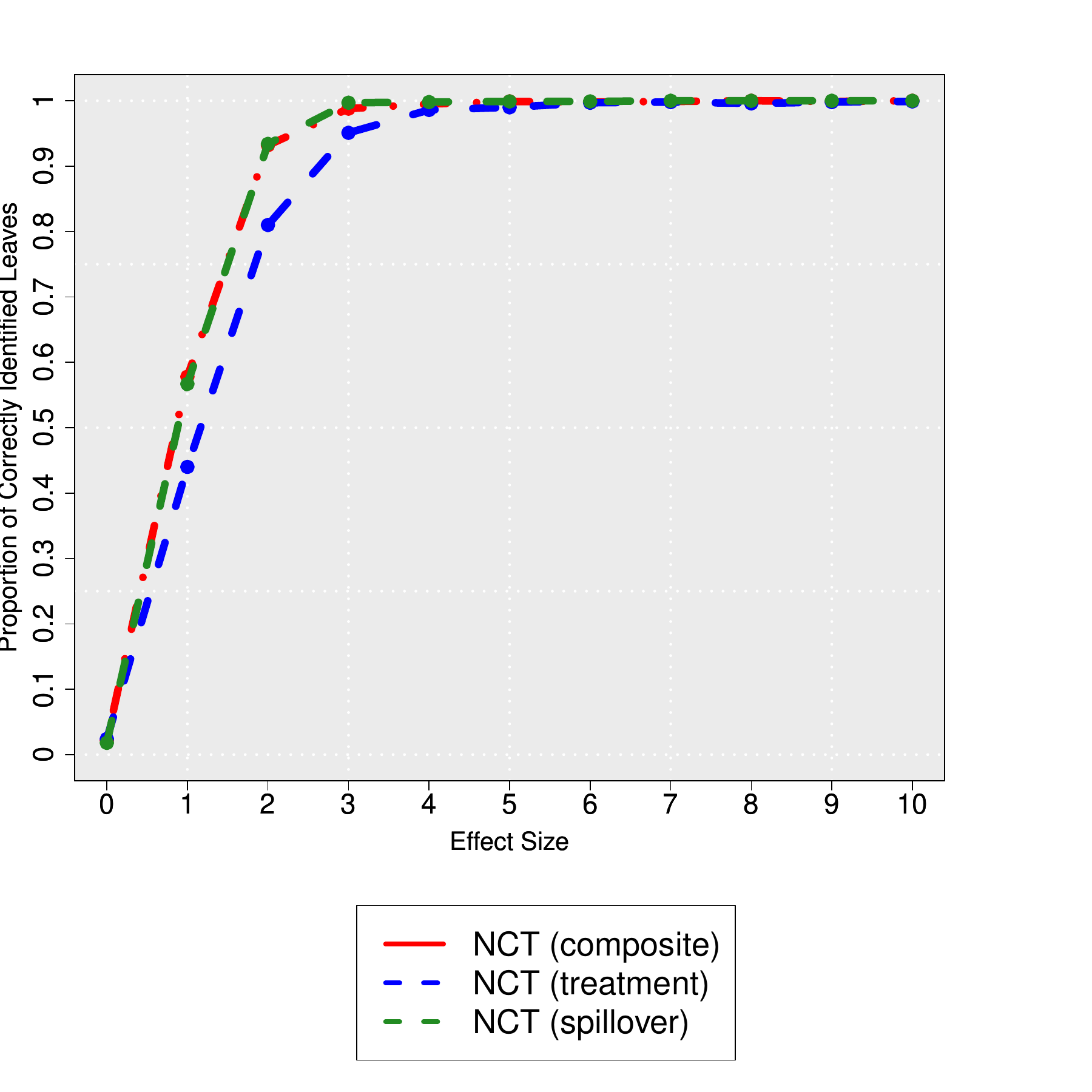}
		\caption{Simulations' results for correctly discovered leaves in the first scenario with mixture covariates}
	\label{fig:mixture}
\end{figure}


\begin{table}[H]
\scriptsize
\centering

\begin{tabular}{cccccccc}
\multicolumn{1}{l}{}                  & \multicolumn{7}{c}{Treatment Effects}                                                                                                                 \\ \cline{2-8} 
\multicolumn{1}{l}{Effect Size}       & $\hat{\tau}_{\ell_1}$   & $\hat{se}(\hat{\tau}_{\ell_1})$   & $\hat{\tau}_{\ell_2}$   & $\hat{se}(\hat{\tau}_{\ell_2})$   & MSE   & Bias   & Coverage \\ \hline
0                                     & 0.149                   & 0.165                             & -0.172                  & 0.189                             & 0.034 & 0.162  & 1.000    \\
\rowcolor{gray!10} 1 & 1.018                   & 0.212                             & -1.053                  & 0.212                             & 0.040 & 0.035  & 0.964    \\
2                                     & 1.984                   & 0.298                             & -1.999                  & 0.302                             & 0.084 & -0.008 & 0.953    \\
\rowcolor{gray!10} 3 & 2.987                   & 0.404                             & -3.004                  & 0.409                             & 0.161 & -0.005 & 0.955    \\
4                                     & 3.980                   & 0.519                             & -3.987                  & 0.520                             & 0.252 & -0.017 & 0.954    \\
\rowcolor{gray!10} 5 & 5.001                   & 0.639                             & -4.981                  & 0.635                             & 0.369 & -0.009 & 0.953    \\
6                                     & 6.014                   & 0.759                             & -5.987                  & 0.760                             & 0.513 & 0.001  & 0.960    \\
\rowcolor{gray!10} 7 & 6.981                   & 0.873                             & -7.012                  & 0.875                             & 0.698 & -0.003 & 0.957    \\
8                                     & 7.987                   & 0.999                             & -8.042                  & 1.020                             & 0.940 & 0.015  & 0.955    \\
\rowcolor{gray!10} 9 & 8.959                   & 1.113                             & -9.021                  & 1.135                             & 1.156 & -0.010 & 0.955    \\
10                                    & 10.024                  & 1.240                             & -9.991                  & 1.240                             & 1.319 & 0.007  & 0.956    \\ \hline
\multicolumn{1}{l}{}                  & \multicolumn{7}{c}{Spillover Effects}                                                                                                                 \\ \cline{2-8} 
\multicolumn{1}{l}{}                  & $\hat{\delta}_{\ell_1}$ & $\hat{se}(\hat{\delta}_{\ell_1})$ & $\hat{\delta}_{\ell_2}$ & $\hat{se}(\hat{\delta}_{\ell_2})$ & MSE   & Bias   & Coverage \\ \hline
0                                     & 0.182                   & 0.157                             & -0.110                  & 0.149                             & 0.028 & 0.137  & 0.955    \\
\rowcolor{gray!10} 1 & 1.006                   & 0.167                             & -1.029                  & 0.169                             & 0.024 & 0.017  & 0.974    \\
2                                     & 2.001                   & 0.225                             & -1.987                  & 0.224                             & 0.039 & -0.006 & 0.971    \\
\rowcolor{gray!10} 3 & 2.991                   & 0.292                             & -3.002                  & 0.292                             & 0.064 & -0.003 & 0.974    \\
4                                     & 4.001                   & 0.368                             & -4.004                  & 0.368                             & 0.099 & 0.003  & 0.980    \\
\rowcolor{gray!10} 5 & 4.983                   & 0.445                             & -4.990                  & 0.444                             & 0.135 & -0.013 & 0.979    \\
6                                     & 5.999                   & 0.524                             & -6.022                  & 0.525                             & 0.191 & 0.011  & 0.980    \\
\rowcolor{gray!10}7  & 6.982                   & 0.605                             & -7.015                  & 0.606                             & 0.249 & -0.001 & 0.988    \\
8                                     & 7.973                   & 0.686                             & -8.024                  & 0.689                             & 0.315 & -0.001 & 0.983    \\
\rowcolor{gray!10} 9 & 9.003                   & 0.769                             & -8.980                  & 0.770                             & 0.403 & -0.008 & 0.981    \\
10                                    & 9.999                   & 0.853                             & -9.947                  & 0.849                             & 0.460 & -0.027 & 0.984    \\ \hline
\end{tabular}

\caption{Simulations' results for the first scenario with mixed covariates.}
\label{tab:mixture}
\end{table}



\end{document}